\DeclareSymbolFont{symbols}{OMS}{cmsy}{m}{n} % weniger geschwungene
\setlist{nolistsep,leftmargin=\parindent}
\theoremstyle{plain}
\newtheorem{theorem}{Theorem}[section]
\newtheorem{lemma}[theorem]{Lemma}
\newtheorem{corollary}[theorem]{Corollary}
\newtheorem*{claim*}{Claim}
\newtheorem{fact}[theorem]{Fact}
\newtheorem*{fact*}{Fact}
\newtheorem{proposition}[theorem]{Proposition}
\newtheorem{claim}{Claim}
\theoremstyle{definition}
\theoremstyle{remark}
\newcommand{\uend}{\hfill$\lrcorner$}
\newcommand{\definedas}{:=} 
\newcommand\NN{\mathbb{N}}
\renewcommand{\phi}{\varphi}
\newcommand\free{\operatorname{free}}
\newcommand\arity{\operatorname{ar}}
\newcommand\rank{\operatorname{qr}}
\newcommand\tw{\operatorname{tw}}
\newcommand\rootparam{\textsc{root}}
\newcommand\atomparam{\textsc{atom}}
\newcommand\cliqparam{\textsc{clique}}
\newcommand{\FO}{\textsc{fo}}
\newcommand{\CFO}{\textnormal{\textsc{cfo}}}
\newcommand{\OIFO}{\textnormal{{\footnotesize \textless}-inv-\textsc{fo}}}
\newcommand{\MSO}{\textsc{mso}}
\newcommand{\GSO}{\textsc{gso}}
\newcommand{\CMSO}{\textsc{cmso}}
\newcommand{\OIMSO}{\textnormal{{\footnotesize \textless}-inv-\textsc{mso}}}
\newcommand{\CC}{\mathcal{C}}
\newcommand{\formel}[1]{\textsf{\upshape #1}}
\newcommand{\oiatype}{\formel{oi-a-type}}
\newcommand{\oibtype}{\formel{oi-b-type}}
\newcommand{\atype}{\formel{a-type}}
\newcommand{\btype}{\formel{b-type}}
\newcommand{\dbstar}{{\star\star}}
\newcommand{\tp}{\textnormal{tp}}
\newcommand{\TP}{\textsc{tp}}
\newcommand{\bigmid}{\;\big|\;}
\renewcommand{\mathbf}[1]{\textit{\bfseries #1}}
\renewcommand{\bar}{\overline}
\newcommand{\angles}[1]{\left\langle#1\right\rangle}
\newcounter{rbcounter}
\newcommand{\randbem}[3]{\stepcounter{rbcounter}\parbox{0mm}{\hbox to 0pt{#1{}$^{\arabic{rbcounter}}$\hss}}\marginpar{#1\raggedright\scriptsize\textbf{#2$^{\arabic{rbcounter}}$:
    }#3}}
\begin{document}

\title{Order Invariance on Decomposable Structures}

\author{Michael~Elberfeld \and Marlin~Frickenschmidt \and Martin~Grohe}

\date{\textsc{rwth} Aachen University\\
  Aachen, Germany\\
  {\small \texttt{\{elberfeld|frickenschmidt|grohe\}@informatik.rwth-aachen.de}}\\[2ex]
  \today}

\maketitle

\begin{abstract} 
  Order-invariant formulas access an ordering on a structure's universe, but the
  model relation is independent of the used ordering. They are frequently used for
  logic-based approaches in computer science. Order-invariant formulas capture
  unordered problems of complexity classes and they model the independence of the
  answer to a database query from low-level aspects of databases. We study the
  expressive power of order-invariant monadic second-order (\MSO) and first-order
  (\FO) logic on restricted classes of structures that admit certain forms of tree
  decompositions (not necessarily of bounded width).

  While order-invariant \MSO\ is more expressive than \MSO\ and, even, \CMSO\
  (\MSO\ with modulo-counting predicates) in general, we show that order-invariant
  \MSO\ and \CMSO\ are equally expressive on graphs of bounded tree width and on
  planar graphs. This extends an earlier result for trees due to
  Courcelle. Moreover, we show that all properties definable in order-invariant
  \FO\ are also definable in \MSO\ on these classes. These results are
  applications of a theorem that shows how to lift up definability results for
  order-invariant logics from the bags of a graph's tree decomposition to the
  graph itself.

  \medskip\noindent \emph{Keywords:} 
  finite model theory, 
  first-order logic,
  monadic second-order logic, 
  order-invariant logic,
  modulo-counting logic,
  bounded tree width, 
  planarity
\end{abstract}

\section{Introduction}
\label{sec:introduction}

A formula is \emph{order-invariant} if it has access to an additional total
ordering on the universe of a given structure, but its answer is invariant with
respect to the given order. The concept of order invariance is used to formalize
the observation that logical structures are often encoded in a form that
implicitly depends on a linear order of the elements of the structure; think of
the adjacency-matrix representation of a graph. Yet the properties of structures
we are interested in should not depend on the encoding and hence the implicit
linear order, but just on the abstract structure. Thus, we use formulas that
access orderings, but define unordered properties. This approach can be
prominently found in database theory where formulas from first-order (\FO) and
monadic second-order (\MSO) logic are used to model query languages for
relational databases and (hierarchical) \textsc{xml} documents,
respectively. Being order-invariant means in this setting that the formula
evaluation process is always independent of low-level aspects of databases like,
for example, the encoding of elements as indices. Another example approach can
be found in descriptive complexity theory where formulas whose evaluation is
invariant with respect to specific encodings of the input structure capture
unordered problems decidable by certain complexity classes. The famous open
problem of whether there is a logic that captures all unordered properties
decidable in polynomial time falls into this category.

Gurevich~\cite{Gurevich1984} proved that order-invariant \FO\ (\OIFO) is more
expressive than \FO\ (also see \cite{Schweikardt2013} for details). The same
holds for order-invariant \MSO\ (\OIMSO) and \MSO\ with modulo-counting
predicates (\CMSO); Ganzow and Rubin showed that \OIMSO\ is able to express more
properties than \CMSO\ on general finite structures~\cite{GanzowR2008}. Since it
is not possible to decide, for a given \FO-formula, whether it is
order-invariant or not, this opens up the question of whether we can find
alternative logics that are equivalent to the order-invariant logics \OIFO\ and
\OIMSO. While on general logical structures no logics that are equivalent to
\OIFO\ or \OIMSO\ are known, this changes if we consider classes of structures
that are well-behaved. Benedikt and Segoufin~\cite{BenediktS2009} showed that
$\OIFO$ and $\FO$ have the same expressive power on the class of all strings and
the class of all trees (we write \emph{$\OIFO = \FO$ on $\CC$} to indicate that
the properties definable in $\OIFO$ equal the properties definable in $\FO$ when
considering structures from a class $\CC$). Considering $\OIMSO$,
Courcelle~\cite{Courcelle1996} showed that it has the same expressive power as
$\CMSO$ on the class of trees (that means, $\OIMSO = \CMSO$ on trees). Recently
it was shown that $\OIFO = \FO (= \MSO)$ and $\OIMSO = \CFO (= \CMSO) $ hold on
classes of graphs of bounded tree depth~\cite{EickmeyerEH2014}. More general
results that apply to graphs of bounded tree width or planar graphs have not
been obtained so far. This is due to the fact that, whenever we want to move
from an order-invariant logic to another logic on a class of structures, we need
to understand both (1) the expressive power of the order-invariant logic when
restricted to these structures, and (2) the ability of the other logic to handle
the structures in terms of, for example, definable decompositions.

\paragraph{Results.} Our results address both of these issues to better
understand the expressive power of order-invariant logics on decomposable
structures.

Addressing issue (1), we prove two general results, which show how to lift-up
definability results for order-invariant logics from the bags of tree
decompositions up to the whole decomposed structure. We show that, whenever we
are able to use \MSO-formulas to define a tree decomposition whose adhesion is
bounded (that means, bags have only bounded size intersections) and we can
define total orderings on the vertices of each bag individually, then $\OIMSO =
\CMSO$ (Theorem~\ref{th:lift-oimso}) and $\OIFO \subseteq \MSO$
(Theorem~\ref{th:lift-oifo}). Lifting theorems of this kind can be seen to be
implicitly used earlier~\cite{BenediktS2009,Courcelle1990,Courcelle1991}, but so
far they only applied to the case where the defined tree decomposition has a
bounded width. In this case, the whole structure can be easily transformed into
an equivalent tree. Our theorems also handle the case where bags have an
unbounded width: they merely assume the additional definability of a total
ordering on bags, possibly using arbitrary parameters (which may be sets in the
case of \MSO-definability). This is a much weaker assumption than having bounded
width, and it covers larger graph classes. The proofs of the lifting theorems
use type-composition methods to show how one can define the logical types of
structures from the logical types of substructures. The main challenge lies in
trading the power of the used types (in our case these are certain
order-invariant types based on orderings that are compatible with the given
decomposition) with the ability to prove the needed type-composition
methods. The latter need to work with bags of unbounded size and, thus, are more
general than the type-composition methods that are commonly used for the case of
bounded size bags.

Addressing issue (2), we study two types of classes of graphs where it is
possible to meet the assumptions of the lifting theorems and, thus, show that
$\OIMSO = \CMSO$ and $\OIFO \subseteq \MSO$ hold on these classes. The first two
results (formally stated as Theorems~\ref{th:tw-oimso} and~\ref{th:tw-oifo})
apply to classes of graphs of bounded tree width. For the proof, we show that
one can define tree decompositions of bounded adhesion in \MSO, where the bags
admit \MSO-definable total orderings. Let us remark that in proving these
results we do not rely on the \MSO-definability of width-bounded tree
decompositions, a result announced by Lapoire~\cite{lap98}, but only proved
recently (and independently of our work) by Boja\'nczyk and
Pilipczuk~\cite{BojanczykP2016a}~\cite{BojanczykP2016b}. Benedikt and Segoufin~\cite{BenediktS2009} had
shown earlier how to prove these results using the \MSO-definability of
width-bounded tree decompositions. Our second application of the lifting theorem
is concerned with classes of graphs that, for some $\ell\in\NN$, do not contain
$K_{3,\ell}$ as a minor. This includes the class of planar graphs and all
classes of graphs embedabble in a fixed surface \cite{rin65a,rin65b}. Using an
\MSO-definable tree decomposition into 3-connected components due to
Courcelle~\cite{Courcelle1999} along with proving that there are \MSO-definable
total orderings for the 3-connected bags of the decomposition, we are able to
apply the lifting theorems to prove that $\OIMSO = \CMSO$
(Theorem~\ref{th:minor-oimso}) and $\OIFO \subseteq \MSO$
(Theorem~\ref{th:minor-oifo}) hold on every class of graphs that exclude
$K_{3,\ell}$ as a minor for some $\ell \in \NN$.

\paragraph{Organization of the paper.} The paper starts with a preliminary section
(Section~\ref{sec:background}) containing definitions related to graphs and
logic. In Section~\ref{sec:lift}, we formally state and prove the lifting
theorems. Section~\ref{sec:decompositions} shows how to \MSO-define tree
decompositions along clique~separators and reviews the known \MSO-definable tree
decomposition into 3-connected components. Section~\ref{sec:orderings} picks up
the decomposed graphs and shows how to define total orderings for bags. This is
combined with the lifting theorems to prove the results about bounded tree width
graphs and $K_{3,\ell}$-minor-free graphs stated above. 

\section{Background}
\label{sec:background}

In the present section, we introduce the necessary background related to logical
structures and graphs (Section~\ref{sec:structures}), monadic second-order logic
and its variants (Section~\ref{sec:mso}), logical games and types
(Section~\ref{sec:games-types}), and transductions
(Section~\ref{sec:transductions}).

\subsection{Structures and Graphs}
\label{sec:structures}

A \emph{vocabulary} $\tau$ is a finite set of \emph{relational symbols} where an
\emph{arity} $\arity(R) \geq 1$ is assigned to each $R \in \tau$. A
\emph{structure} $A$ over a vocabulary $\tau$ consists of a finite set~$U(A)$,
its \emph{universe}, and a \emph{relation} $R(A) \subseteq U(A)^{\arity(R)}$ for
every $R \in \tau$. We sometimes write $R(A)$ by $R^A$, in particular if $R$ is
a symbol like $\le$.

An \emph{expansion} of a $\tau$-structure $A$ is a $\tau'$-structure $A'$ for
some vocabulary $\tau'\supseteq\tau$ such that $U(A) = U(A')$ and $R(A) = R(A')$
for all $R\in\tau$. If $A$ is a $\tau$-structure and $V\subseteq U(A)$, then the
\emph{induced substructure} $A[V]$ is the $\tau$-structure with universe
$U(A[V])=V$ and relations $R(A[V]):=R(A)\cap V^{\operatorname{ar}(R)}$ for all
$R\in\tau$. Furthermore, we let $A\setminus V:=A[U(A)\setminus V]$.

\emph{Graphs}~$G$ are structures over the vocabulary $\{E\}$ with $\arity(E) =
2$. When working with graphs, we also write $V(G)$ for the graph's universe (its
set of \emph{vertices}) and call $E(G)$ its set of \emph{edges}. The graphs we
are working with are \emph{undirected}. That means, for every two vertices $v$
and $w$, we have $(v,w) \in E(G)$ if, and only if, $(w,v) \in E(G)$ and
$(v,v)\not\in E(G)$. The \emph{Gaifman graph} $G(A)$ of a structure $A$ has
vertices $V(G(A)) = U(A)$ and for every pair of distinct elements $v$ and $w$
that are part of a common tuple in $A$, we insert the edge $(v,w)$ into
$E(G(A))$; thus, $G(A)$ is always undirected.

A \emph{tree decomposition} $(T,\beta)$ of a structure $A$ is a tree $T$
together with a labeling function $\beta \colon V(T) \to 2^{U(A)}$ satisfying
the following two conditions. (\emph{Connectedness condition}) For every element
$v \in U(A)$, the induced subtree $T \bigl[\{t \in V(T) \mid v \in
\beta(t)\}\bigr]$ is nonempty and connected. (\emph{Cover condition}) For every
tuple $(v_1,\dots,v_r)$ of a relation in $A$, there is a $t \in V(T)$ with
$\{v_1,\dots,v_r\} \subseteq \beta(t)$. It will be convenient to assume that the
trees underlying our tree decompositions are directed. That means, all edges are
directed away from a root. The set $N^T(t)$ of \emph{neighbors} of a node $t$
in a directed tree $T$ consists of its children (if $t$ is not a leaf) and its
parent (if $t$ is not the root). The set of children of a node $t$ in a directed
tree $T$ is denoted by $N^T_+(t)$. We omit $^T$ from $N^T(t)$ and $N^T_+(t)$ if
it is clear from the context. The sets $\beta(t)$ for every $t \in V(T)$ are the
\emph{bags} of the tree decomposition. The \emph{width} of the tree
decomposition is $\max_{t \in V(T)}\, |\beta(t)|-1$ and its \emph{adhesion} is
$\max_{(t,u) \in E(T)} |\beta(t) \cap \beta(u)|$. The \emph{tree width},
$\tw(A)$, of a structure $A$ is the minimum width of a tree decomposition for
it. Structures $A$ and their Gaifman graphs $G(A)$ have the same tree
decompositions. In particular $\tw(A) = \tw(G(A))$. The \emph{torso} of a node
$t \in V(T)$ in a tree decomposition $D = (T,\beta)$ for a structure $A$ with
Gaifman graph $G = G(A)$ is $G[\beta(t)]$ together with edges between all pairs
$v,w \in \beta(t) \cap \beta(u)$ for $u \in N(t)$.

\subsection{Monadic Second-Order Logic and its Variants}
\label{sec:mso}

\emph{Monadic second-order logic} (\MSO-logic) is defined by taking all
second-order formulas without second-order quantifiers of arity 2 and
higher. More specifically, to define its syntax, we use \emph{element variables}
$x_i$ for $i \in \NN$ and \emph{set variables} $X_i$ for $i \in
\NN$. \emph{Formulas of \MSO-logic} (\emph{\MSO-formulas}) over a vocabulary
$\tau$ are inductively defined as usual (see, for example,
\cite{Libkin2004}). Such formulas are also called $\MSO[\tau]$-formulas to
indicate the vocabulary along with the logic. The set of \emph{free variables}
of an $\MSO$-formula $\phi$, denoted by $\free(\phi)$, contains the variables of
$\phi$ that are not used as part of a quantification. By renaming a formula's
variables, we can always assume $\free(\phi) =
\{x_1,\dots,x_k,X_1,\dots,X_\ell\}$ for some $k,\ell \in \NN$; we write
$\phi(x_1,\dots,x_k,X_1,\dots,X_\ell)$ to indicate that the free variables of
$\phi$ are exactly $x_1$ to $x_k$ and $X_1$ to $X_\ell$. Given an \MSO-formula
$\phi(x_1,\dots,x_k,X_1,\dots,X_\ell)$, $A \models
\phi(a_1,\dots,a_k,A_1,\dots,A_\ell)$ indicates that $A$ together with the
assignment $x_i \mapsto a_i$, for $i \in \{1,\dots,k\}$, and $X_i \mapsto A_i$,
for $i \in \{1,\dots,\ell\}$, to $\phi$'s free variables satisfies $\phi$. A
formula without free variables is also called a \emph{sentence}.

\emph{Monadic second-order logic with modulo-counting} (\CMSO-logic) extends
\MSO-logic with the ability to access (built-in) \emph{modulo-counting atoms}
$C_{m}(R)$ for every $m \in \NN$ where $R$ is a relation symbol. Given a
structure $A$ over a vocabulary that contains $R$, we have $A \models C_{m}(R)$
exactly if $m$ divides $|R|$ (that means, $|R| \equiv 0 \mod m$). Atoms
$C_{m}(X)$ where~$X$ is a set variable are used in the same way.

Let $\tau$ be a vocabulary and $\le$ a binary relation symbol not contained in
$\tau$. An $\MSO$-sentence $\phi$ of vocabulary $\tau\cup\{\le\}$ is
\emph{order-invariant} if for all $\tau$-structures $A$ and all linear orders
$\le_1,\le_2$ of $U(A)$ we have $(A,\le_1)\models\phi$ if, and only if,
$(A,\le_2)\models\phi$. We can now form a new logic, \emph{order-invariant
monadic second-order logic} (\OIMSO-logic), where the sentences of vocabulary
$\tau$ are the order-invariant sentences of vocabulary $\tau\cup\{\le\}$, and a
$\tau$-structure $A$ satisfies an order-invariant sentence $\phi$ if $(A,\le)$
satisfies $\phi$ in the usual sense for some (and hence for all) linear orders
$\le$ of $U(A)$. There is a slight ambiguity in the definition of
order-invariant sentences in which binary relation symbol $\le$ we are referring
to as our special ``order symbol'' (there may be several binary relation symbols
in $\tau$). But we always assume that $\le$ is clear from the
context. Alternatively, we could view $\le$ as a ``built-in'' relation symbol
that is fixed once and for all and is not part of any vocabulary. However, this
would be inconvenient because we sometimes need to treat $\le$ just as an
ordinary relation symbol and the sentences of $\OIMSO$-logic of vocabulary
$\tau$ just as ordinary $\MSO$-sentences of vocabulary $\tau\cup\{\le\}$.

\emph{First-order logic} (\FO-logic) and \emph{order-invariant first-order
logic} (\OIFO-logic) are defined by taking all sentences of \MSO-logic and
\OIMSO-logic, respectively, that do not contain set variables.

\subsection{Games and Types}
\label{sec:games-types}

The \emph{quantifier rank} of an \MSO-formula $\phi$, denoted by $\rank(\phi)$,
is the maximum number of nested quantifiers in $\phi$. For structures $A,B$ and
$q\in\NN$, we write $A \equiv^\MSO_q B$ if $A$ and $B$ satisfy the same
$\MSO$-sentences of quantifier rank at most $q$. We write $A\equiv^{\OIMSO}_q B$
if $A$ and $B$ satisfy the same order-invariant $\MSO$-sentences of quantifier
rank at most $q$. For every $c\in\NN$, we write $A\equiv^{\CMSO}_{q,c} B$ if $A$
and $B$ satisfy the same $\CMSO$-sentences of quantifier rank at most $q$ and
only numbers $m \le c$ are used in the modulo-counting atoms.

It will sometimes be convenient to use versions of $\MSO$ and $\CMSO$ without
element variables (see, for example, \cite{Thomas1997}). In particular, in the
context of Ehrenfeucht-Fra{\"{\i}}ss{\'e} games. We will freely do so. We assume
that the reader is familiar with the characterizations of $\MSO$-equivalence and
$\CMSO$-equiva\-lence by Ehrenfeucht-Fra{\"{\i}}ss{\'e} games (see, for example,
\cite{ebbflu95,GanzowR2008}). Corresponding to the versions of the logics
without element variables, we use a version of the games where the players only
select sets and never elements, and a position \emph{induces a partial
isomorphism} if the mapping between the singleton sets of the position is a
partial isomorphism. (The rules of the game require the Duplicator to answer to
a singleton set with a singleton set and to preserve the subset relation.) Then
a \emph{position} of the game on structures $A,B$ is a sequence
$\Pi=(P_i,Q_i)_{i\in[p]}$ of pairs $(P_i,Q_i)$ of subsets $P_i\subseteq U(A)$
and $Q_i\subseteq U(B)$. The position is a \emph{$q$-move winning position} for
one of the players if this player has a winning strategy for the $q$-move game
starting in this position.

We also use the concept of \emph{types}. Let $\tau$ be a vocabulary and
$q,p\in\NN$. Then for all $\tau$-structures $A$ and sets
$P_1,\ldots,P_p\subseteq U(A)$, the \emph{\MSO-type of $(A,P_1,\ldots,P_p)$ of
quantifier rank $q$} is
\begin{align*}
  \tp_q^{\MSO}(A,P_1,\ldots,P_p)&:=\bigl\{ \phi(X_1,\ldots,X_p) \mid 
  \phi \text{ is \MSO-formula with } \rank(\phi) \leq q \text{ and } A \models\phi(P_1,\ldots,P_p)\}.
\end{align*}
Moreover, the class of all types over $\tau$ with respect to rank $q$ and $p$ free
set variables is 
\begin{align*}
  \TP^{\MSO}(\tau,q,p)&:=\big\{\tp_q^{\MSO}(A,P_1,\ldots,P_p) \mid A \text{ is
  $\tau$-structure}, P_1,\ldots,P_p\subseteq U(A)\big\}, 
\end{align*}
and we let $\TP^{\MSO}(\tau,q) := \TP^{\MSO}(\tau,q,0)$.  For $q,c\in\NN$, we
say that a $\CMSO$-formula has \emph{rank} at most $(q,c)$ if it has quantifier
rank at most $q$ and only contains modulo-counting atoms $C_m(X)$ with $m\le
c$. Based on this notion of rank, we define the $\CMSO$-type
$\tp_{q,c}^{\CMSO}(A,P_1,\ldots,P_p)$, and sets $\TP^{\CMSO}(\tau,q,c,p)$ and
$\TP^{\CMSO}(\tau,q,c)$.

Note that $\tp^\MSO_q(A,P_1,\ldots,P_p)=\tp_q^\MSO(B,Q_1,\ldots,Q_p)$ if, and
only if, $(P_i,Q_i)_{i\in[p]}$ is a $q$-move winning position for the Duplicator
in the $\MSO$-game on $A,B$. Furthermore, for $p=0$ we have $\tp_q(A)=\tp_q(B)$
if, and only if, $A\equiv^\MSO_q B$. Similar remarks apply to \CMSO-types.

For a vocabulary $\tau$ and a binary relation symbol $\le\, \notin \tau$, we say
that a subset $I\subseteq\TP^{\MSO}(\tau\cup\{\le\},q)$ is
\emph{order-invariant} if for all $\tau$-structures $A$ and all linear orders
$\le,\le'$ of $A$ we have $\tp_q^\MSO(A,\le)\in I$ if, and only if,
$\tp_q^\MSO(A,\le')\in I$. If $I$ is inclusion-wise minimal order-invariant,
then we call 
it an \emph{order-invariant type}. Note that every $\theta\in\TP^{\MSO}(\tau\cup\{\le\},q)$
is contained in exactly one order-invariant type, which we denote by
$\angles{\theta}$. We set
$\TP^{\OIMSO}(\tau,q):=\big\{\angles{\theta}\bigmid\theta\in\TP^{\MSO}(\tau\cup\{\le\},q)\big\}$,
the set of all order-invariant types. For a $\tau$-structure $A$, we call the
set $\tp_q^{\OIMSO}(A):=\angles{\tp_q^\MSO(A,\le)}$ for some and, hence, for all
linear orders of $A$ the \emph{order-invariant \MSO-type of $A$ of quantifier
rank $q$}. It may seem more natural to define the order-invariant type of a
structure as the set of all order-invariant sentences it satisfies. The
following proposition says that this would lead to an equivalent notion, but our
version is easier to work with, because it makes the connection between types of
ordered structures and order-invariant types more explicit.

\begin{lemma}
  \label{lem:oitypes}
  For all $\tau$-structure $A,A'$, the following statements are equivalent.
  \begin{enumerate}
    \item \label{item:types} $\tp_q^{\OIMSO}(A)=\tp_q^{\OIMSO}(A')$.
    \item \label{item:equiv} $A\equiv^\OIMSO_q A'$.
    \item \label{item:witness} There is a sequence $A_0,\ldots,A_\ell$ of $\tau$-structures
      and linear orders $\le_i,\le_i'$ with $A=A_0$, $A'=A_\ell$, and
      $(A_{i-1},\le_{i-1})\equiv_q^{\MSO}(A_i,\le_i')$ for all $i\in[\ell]$.
  \end{enumerate}
\end{lemma}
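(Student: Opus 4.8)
The plan is to establish the two biconditionals \ref{item:types}$\Leftrightarrow$\ref{item:witness} and \ref{item:types}$\Leftrightarrow$\ref{item:equiv}, with one structural observation about order-invariant types doing most of the work for the first and a standard normal-form argument handling the second.

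The structural observation I would record first: on the finite set $\TP^{\MSO}(\tau\cup\{\le\},q)$, call two types \emph{directly linked} if some single $\tau$-structure $C$ with two linear orders $\le_1,\le_2$ realizes them, i.e.\ $\theta_1=\tp_q^\MSO(C,\le_1)$ and $\theta_2=\tp_q^\MSO(C,\le_2)$. A one-line check shows that a subset $I$ is order-invariant in the sense of the paper exactly if it is closed under direct links. Hence, restricted to the types actually of the form $\tp_q^\MSO(A,\le)$ for a linear order $\le$, on which ``directly linked'' is reflexive and symmetric, the inclusion-minimal order-invariant sets are precisely the classes of the transitive closure $\approx$ of direct-linkedness. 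Consequently $\tp_q^{\OIMSO}(A)=\angles{\tp_q^\MSO(A,\le)}$ is just the $\approx$-class of $\tp_q^\MSO(A,\le)$, so \ref{item:types} says exactly that $\tp_q^\MSO(A,\le)$ and $\tp_q^\MSO(A',\le')$ are joined by a chain of direct links.

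For \ref{item:types}$\Rightarrow$\ref{item:witness} I would then unfold such a chain $\tp_q^\MSO(A,\le)=\theta_0,\theta_1,\dots,\theta_m=\tp_q^\MSO(A',\le')$, where consecutive $\theta_{j},\theta_{j+1}$ are directly linked via a structure $C_j$ and linear orders $\preceq_j,\preceq_j'$ with $\tp_q^\MSO(C_j,\preceq_j)=\theta_j$ and $\tp_q^\MSO(C_j,\preceq_j')=\theta_{j+1}$. Put $A_0:=A$, $A_i:=C_{i-1}$ for $1\le i\le m$, $A_{m+1}:=A'$, and choose $\le_0:=\le$, $\le_i':=\preceq_{i-1}$ and $\le_i:=\preceq_{i-1}'$ for $1\le i\le m$, and $\le_{m+1}':=\le'$; then each demanded equality $(A_{i-1},\le_{i-1})\equiv_q^{\MSO}(A_i,\le_i')$ holds because both of its sides are by construction equal to $\theta_{i-1}$ (the degenerate case $m=0$ gives $\ell=1$). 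Conversely, from a sequence as in \ref{item:witness} one reads off $\tp_q^{\OIMSO}(A_{i-1})=\angles{\tp_q^\MSO(A_{i-1},\le_{i-1})}=\angles{\tp_q^\MSO(A_i,\le_i')}=\tp_q^{\OIMSO}(A_i)$ for each $i$, the middle equality being the identity of $\MSO$-types given by $\equiv_q^{\MSO}$, so all the $A_i$, in particular $A$ and $A'$, share the same order-invariant type.

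For \ref{item:types}$\Leftrightarrow$\ref{item:equiv} I would invoke the standard fact that up to equivalence there are finitely many $\MSO$-sentences over $\tau\cup\{\le\}$ of quantifier rank $\le q$, each equivalent to a disjunction of Hintikka sentences $\chi_\theta$ of rank $\le q$ with $(B,\le)\models\chi_\theta$ iff $\tp_q^\MSO(B,\le)=\theta$. Given an order-invariant type $I$, order-invariance of $I$ makes $\phi_I:=\bigvee_{\theta\in I}\chi_\theta$ an order-invariant sentence of rank $\le q$ satisfied by $A$ precisely when $\tp_q^{\OIMSO}(A)=I$; taking $I:=\tp_q^{\OIMSO}(A)$ yields \ref{item:equiv}$\Rightarrow$\ref{item:types}. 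For the other direction, any order-invariant sentence $\phi$ of rank $\le q$ is equivalent over ordered structures to $\bigvee_{\theta\in S}\chi_\theta$ where $S$ is the set of rank-$q$ types implying $\phi$; order-invariance of $\phi$ forces $S$ to be order-invariant, hence a union of order-invariant types, so membership of $A$ in the models of $\phi$ depends only on $\tp_q^{\OIMSO}(A)$. The only place needing genuine care is the structural observation: ``inclusion-minimal order-invariant'' coincides with ``$\approx$-class'' only once one notes that the order-invariance condition constrains a set solely through types realized by \emph{linearly} ordered structures, the remaining types forming irrelevant singleton order-invariant types; past that, the index bookkeeping in \ref{item:types}$\Leftrightarrow$\ref{item:witness} and the normal-form manipulations are routine, and no Boolean combination above raises the quantifier rank.
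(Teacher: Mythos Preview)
Your proof is correct and follows essentially the same approach as the paper's. The paper proves the cycle \ref{item:types}$\Rightarrow$\ref{item:witness}$\Rightarrow$\ref{item:equiv}$\Rightarrow$\ref{item:types}, whereas you prove the two biconditionals separately; but the core ideas coincide: your ``structural observation'' that $\angles{\theta}$ is the $\approx$-class of $\theta$ is exactly what the paper establishes (in different language) when it shows $[\theta]=\angles{\theta}$ inside its \ref{item:types}$\Rightarrow$\ref{item:witness} argument, and your formula $\phi_I=\bigvee_{\theta\in I}\chi_\theta$ is the paper's $\phi_{\angles{\theta}}$ used for \ref{item:equiv}$\Rightarrow$\ref{item:types}.
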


\noindent
If $A \equiv^\OIMSO_q A'$, we say that sequences
$(A_i)$, $(\le_i)$, and $(\le_i')$ as in statement \ref{item:witness} of
Lemma~\ref{lem:oitypes} \emph{witness} $A\equiv^\OIMSO_q A'$. 

\begin{proof}[Proof of Lemma~\ref{lem:oitypes}]
  We prove each of the implications from the chain
  \eqref{item:types}$\implies$\eqref{item:witness}$\implies$\eqref{item:equiv}$\implies$\eqref{item:types}.

  For proving \eqref{item:types}$\implies$\eqref{item:witness}, suppose
  $\tp_q^{\OIMSO}(A)=\tp_q^{\OIMSO}(A')$. Let $\theta:=\tp_q^\MSO(A,\le)$ for some
  linear order $\le$ of $A$ and $\theta':=\tp_q^{\OIMSO}(A',\le')$ for some linear
  order $\le'$ of $A'$.  Let $[A]$ be the class of all ordered $\tau$-structures
  $(A'',\le'')$ such that there is a sequence $A_0,\ldots,A_\ell$ of
  $\tau$-structures and linear orders $\le_i,\le_i'$ such that $A=A_0$ and
  $A''=A_\ell$ and $(A_{i-1},\le_{i-1})\equiv_q^{\MSO}(A_i,\le_i')$ for all
  $i\in[\ell]$, and let $[\theta]$ the class of types $\tp_q^\MSO(A'',\le'')$ for
  $(A'',\le'')\in[A]$. An easy induction on the length $\ell$ of the witnessing
  sequence shows that $[\theta]\subseteq\angles{\theta}$. Moreover, $[\theta]$ is
  order-invariant, and thus $[\theta]=\angles{\theta}$. Similarly, we define
  $[\theta']$ and prove that $[\theta']=\angles{\theta'}$. Thus
  $[\theta]=[\theta']$, and this implies \eqref{item:witness}.

  To prove \eqref{item:witness}$\implies$\eqref{item:equiv}, just note that all
  structures in a witnessing sequence satisfy the same order-invariant formulas.
 
  Finally, to prove \eqref{item:equiv}$\implies$\eqref{item:types}, suppose that
  $A\equiv^\OIMSO_q A'$. Let $\theta:=\tp^\MSO_q(A,\le)$ for some linear order $\le$ of
  $A$. Then $\tp_q^\OIMSO(A)=\angles{\theta}$. Let $\phi_{\angles{\theta}} :=
  \bigvee_{\theta'\in\angles{\theta}} \phi_{\theta'}$ with $\phi_{\theta'} :=
  \bigwedge_{\psi\in\theta'} \psi$.  Then $\phi_{\angles{\theta}}$ is an
  order-invariant \MSO-sentence of quantifier rank $q$. As $(A,\le)\models
  \phi_{\theta}$, we have $(A,\le)\models\phi_{\angles{\theta}}$, and thus $A$
  satisfies $\phi_{\angles{\theta}}$ as a sentence of $\OIMSO$. Hence $A'$
  satisfies $\phi_{\angles{\theta}}$ as a sentence of $\OIMSO$, and thus
  $(A',\le')\models \phi_{\angles{\theta}}$ for some linear order $\le'$ of
  $A'$. Thus there is a $\theta'\in\angles{\theta}$ such that $(A',\le')\models
  \phi_{\theta'}$, which implies $\tp_q^\MSO(A',\le')=\theta'$. Hence
  $\tp_q^\OIMSO(A') = \angles{\theta'}= \angles{\theta}$.
\end{proof}

\subsection{Transductions}
\label{sec:transductions}

Transductions define new structures out of a given structure. We use $w$-copying
\MSO-transductions as defined in \cite{CourcelleE2011}, but based on the below
terminology. They are able to (1) enlarge the universe of a given structure by
establishing $w$ copies of each element, (2) define relations over the new
universe from the given structure, and (3) not only define a single structure,
but a set of new structures parameterized by adding monadic relations to the
given structure.

% syntax and semantics of transductions
An \emph{\MSO$[\tau,\tau']$-transduction of width $w$ with $p$ parameters} for
some $w,p \in \NN$ is defined via a finite collection $\Lambda$ of \MSO-formulas
over $\tau \cup \{P_1,\dots,P_p\}$ where the relation symbols $P_j$ are monadic
and not part of $\tau$. $\Lambda$ consists of a group of $w$ \MSO-formulas
$\lambda^1_U(x)$,\dots,$\lambda^w_U(x)$ for defining the universe of a new
structure and for each $R \in \tau'$ with some arity $r = \arity(R)$ a group of
$w^r$ formulas $\lambda^{(i_1,\dots,i_r)}_R(x_1,\dots,x_r)$ for $(i_1,\dots,i_r)
\in \{1,\dots,w\}^r$. Given a $\tau$-structure $A$ and $P_1,\dots,P_p \subseteq
U(A)$, they define the universe of a $\tau'$-structure
$\Lambda[A,P_1,\dots,P_p]$ via 
\begin{align*}
  U(\Lambda[A,P_1,\dots,P_p]) & := \{ (a,i) \in U(A) \times \{1,\dots,w\} \mid (A,P_1,\dots,P_p) \models \lambda^i_U(a) \} 
\end{align*}
and for each relation symbol $R \in \tau'$ the relation 
\begin{align*}
  R(\Lambda[A,P_1,\dots,P_p]) & := \{ ((a_1,i_1),\dots,(a_r,i_r)) \in (U(A)
                                \times \{1,\dots,w\})^r \mid A \models
                                \lambda^{(i_1,\dots,i_r)}_{R}(a_1,\dots,a_r)
                                \}\, .
\end{align*}
Finally, by ranging over all possible parameters, $\Lambda$ defines the set
\begin{align*}
  \Lambda[A] := \{ \Lambda[A,P_1,\dots,P_p] \mid P_1,\dots,P_p \subseteq U(A)
  \land (A,P_1,\dots,P_p) \models \lambda_{\textsc{valid}} \}
\end{align*}
for a given structure $A$ where $\lambda_{\textsc{valid}}$ is a formula that is
also part of the transduction, which singles out the valid combinations of the
given structure and parameters. Moreover, for a $\tau'$-structure $B$, we set
$\Lambda^{-1}[B] := \{ \tau\text{-structure } A \mid B \in \Lambda[A] \}$. For
an element $(a,i)$, we call $i$ its \emph{level}.

\MSO-transductions preserve \MSO-definability (formally stated by
Fact~\ref{fa:mso-closed}) and they can be composed to form new transductions
(formally stated by Fact~\ref{fa:trans-comp}). For a formal proof of
Fact~\ref{fa:trans-comp}, which implies Fact~\ref{fa:mso-closed}, see
\cite{CourcelleE2011}. The facts also hold if we replace all occurrences of
\MSO\ by \CMSO.

\begin{fact}[\MSO~is closed under \MSO-transductions]
  \label{fa:mso-closed}
  Let $\mathcal{P}$ be an \MSO-definable property of $\tau'$-structures and
  $\Lambda$ an \MSO$[\tau,\tau']$-transduction. Then the property of
  $\tau$-structures $\mathcal{P}' := \bigcup_{B \in \mathcal{P}}
  \Lambda^{-1}[B]$ is \MSO-definable.    
\end{fact}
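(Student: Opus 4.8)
The statement is the \emph{backwards-translation} property of $w$-copying $\MSO$-transductions with parameters, and I would prove it by syntactically inverting the transduction rather than going through composition. Fix an $\MSO[\tau']$-sentence $\psi$ that defines $\mathcal{P}$, and let $\Lambda$ be given by the formulas $\lambda_U^1,\dots,\lambda_U^w$, the formulas $\lambda_R^{(i_1,\dots,i_r)}$ for $R\in\tau'$ and $(i_1,\dots,i_r)\in\{1,\dots,w\}^r$, and $\lambda_{\textsc{valid}}$, all over $\tau\cup\{P_1,\dots,P_p\}$. The plan is to build an $\MSO[\tau]$-sentence
\[
  \psi' := \exists P_1\cdots\exists P_p\ \bigl(\lambda_{\textsc{valid}}\wedge\hat\psi\bigr),
\]
where $\hat\psi$ is an $\MSO[\tau\cup\{P_1,\dots,P_p\}]$-sentence which, evaluated in $(A,P_1,\dots,P_p)$, simulates the evaluation of $\psi$ in $\Lambda[A,P_1,\dots,P_p]$. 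Since by definition $A\in\mathcal{P}'$ holds if, and only if, there are sets $P_1,\dots,P_p\subseteq U(A)$ with $(A,P_1,\dots,P_p)\models\lambda_{\textsc{valid}}$ and $\Lambda[A,P_1,\dots,P_p]\models\psi$, the sentence $\psi'$ will define $\mathcal{P}'$ as soon as the simulation is established.

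The simulation would be a translation defined by induction on the subformulas of $\psi$, and the two things to keep track of are the $w$-fold copying of the universe and the distinction between element and set variables. An element variable pointing into $U(\Lambda[A,\bar P])\subseteq U(A)\times\{1,\dots,w\}$ is represented by an element variable over $U(A)$ together with a \emph{level} from $\{1,\dots,w\}$; as $w$ is a fixed constant, the level is absorbed into the formula by a finite case distinction. A set variable $Y$ ranging over subsets of $U(\Lambda[A,\bar P])$ is represented by $w$ ``shadow'' set variables $Y^1,\dots,Y^w$ over $U(A)$, with $Y^i$ intended to be $\{a\mid (a,i)\in Y\}$. Concretely, for every subformula $\phi(x_1,\dots,x_k,Y_1,\dots,Y_m)$ of $\psi$ and every level tuple $\bar\imath=(i_1,\dots,i_k)\in\{1,\dots,w\}^k$ I would define a formula $\phi^{\bar\imath}$ over $\tau\cup\{P_1,\dots,P_p\}$, with free variables $x_1,\dots,x_k$ and $Y_1^1,\dots,Y_m^w$, such that for all $(A,\bar P)\models\lambda_{\textsc{valid}}$, all $a_j\in U(A)$ with $(a_j,i_j)\in U(\Lambda[A,\bar P])$, and all $B_l\subseteq U(\Lambda[A,\bar P])$,
\[
  \Lambda[A,\bar P]\models\phi\bigl((a_1,i_1),\dots,(a_k,i_k),B_1,\dots,B_m\bigr)
  \iff
  (A,\bar P)\models\phi^{\bar\imath}\bigl(a_1,\dots,a_k,\,(B_l^i)_{l\in[m],i\in[w]}\bigr),
\]
where $B_l^i:=\{a\mid (a,i)\in B_l\}$.

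The induction cases are routine. An atom $R(x_{j_1},\dots,x_{j_r})$ is translated by the matching formula $\lambda_R^{(i_{j_1},\dots,i_{j_r})}(x_{j_1},\dots,x_{j_r})$; an equality $x_j=x_l$ by $x_j=x_l$ if $i_j=i_l$ and by $\false$ otherwise; a membership $x_j\in Y_l$ by $x_j\in Y_l^{i_j}$; Boolean connectives pass through unchanged; an element quantifier $\exists x_{k+1}\,\phi$ by $\bigvee_{i_{k+1}=1}^{w}\exists x_{k+1}\,\bigl(\lambda_U^{i_{k+1}}(x_{k+1})\wedge\phi^{(\bar\imath,i_{k+1})}\bigr)$; and a set quantifier $\exists Y_{m+1}\,\phi$ by $\exists Y_{m+1}^1\cdots\exists Y_{m+1}^w\,\bigl(\bigwedge_{i=1}^w\forall x\,(x\in Y_{m+1}^i\rightarrow\lambda_U^i(x))\wedge\phi^{\bar\imath}\bigr)$, the universal cases being dual. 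The $\lambda_U^i$-relativizations placed at the quantifier steps are exactly what makes the invariant go through in both directions, as they force the two sides of the correspondence to range over a bijective pair of ``admissible'' families: a subset of $U(\Lambda[A,\bar P])$ on the one side, a $w$-tuple of subsets of the $\lambda_U^i$-defined parts of $U(A)$ on the other. Taking $\hat\psi:=\psi^{()}$, the translation of the sentence $\psi$ itself, which (as $\psi$ has no free variables) has no free variables left beyond the $P_j$, then completes the construction, and $\psi'$ as displayed above is the desired $\MSO[\tau]$-sentence defining $\mathcal{P}'$.

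I do not expect a genuine obstacle here: the argument is purely syntactic bookkeeping and is the standard proof of the backwards-translation theorem (cf.\ \cite{CourcelleE2011}). The one point requiring care is precisely the interaction just described: threading the level indices through element variables, carrying along the $w$-tuple of shadow copies for each set variable, and placing the $\lambda_U^i$-relativizations at exactly the quantifier steps, so that the correspondence between assignments over $\Lambda[A,\bar P]$ and admissible assignments over $(A,\bar P)$ really is a bijection. Finally, the same construction yields the $\CMSO$ variant: the only new inductive case is a modulo-counting atom $C_m(Y)$, which holds in $\Lambda[A,\bar P]$ if, and only if, $|Y^1|+\dots+|Y^w|\equiv 0\pmod m$; as $w$ and $m$ are fixed, this is $\CMSO$-expressible over $(A,\bar P)$ by a finite disjunction, over all residue vectors $(r_1,\dots,r_w)\in\{0,\dots,m-1\}^w$ with $\sum_i r_i\equiv 0\pmod m$, of the conjunction of the $\CMSO$-statements ``$|Y^i|\equiv r_i\pmod m$''.
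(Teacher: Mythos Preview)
Your proposal is correct: it is precisely the standard backwards-translation argument for $w$-copying parametrized $\MSO$-transductions, and the bookkeeping with level indices for element variables, $w$-tuples of shadow set variables, and $\lambda_U^i$-relativization at the quantifier steps is exactly right. The paper does not give its own proof of this fact but simply cites \cite{CourcelleE2011} (noting that it also follows from the composition Fact~\ref{fa:trans-comp}); your direct argument is essentially the proof one finds in that reference.
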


\begin{fact}[\MSO-transductions are closed under composition]
  \label{fa:trans-comp}
  Let $\Lambda_1$ be an $\MSO[\tau,\tau']$-transduction and $\Lambda_2$ be an
  $\MSO[\tau',\tau'']$ for some vocabularies $\tau,\tau',\tau''$. Then there is
  an $\MSO[\tau,\tau'']$-transduction $\Lambda$ with $\Lambda[A] = \bigcup_{B
    \in \Lambda_1[A]} \Lambda_2[B]$ for every $\tau$-structure $A$. 
\end{fact}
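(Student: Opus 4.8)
The plan is to construct $\Lambda$ of width $w\definedas w_1w_2$, where $w_1,w_2$ and $p_1,p_2$ denote the widths and numbers of parameters of $\Lambda_1,\Lambda_2$. I identify the level set $\{1,\dots,w\}$ with $\{1,\dots,w_1\}\times\{1,\dots,w_2\}$, so that an element $((a,i),j)$ of $\Lambda_2\bigl[\Lambda_1[A,\bar P],\bar Q\bigr]$ is represented by $(a,(i,j))$. The transduction $\Lambda$ gets $p_1+w_1p_2$ parameters: the parameters $P_1,\dots,P_{p_1}$ that are passed on to $\Lambda_1$, and, for each parameter $Q_k$ of $\Lambda_2$, which is a subset of $U(\Lambda_1[A,\bar P])\subseteq U(A)\times\{1,\dots,w_1\}$, its $w_1$ \emph{slices} $Q_k^i\definedas\{a\in U(A)\mid(a,i)\in Q_k\}$; conversely $Q_k$ is recovered from the $Q_k^i$ as $\{(a,i)\mid a\in Q_k^i\}$.

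The heart of the argument is a \emph{backwards translation}: for every $\MSO$-formula $\psi(x_1,\dots,x_r,Y_1,\dots,Y_s)$ over $\tau'\cup\{Q_1,\dots,Q_{p_2}\}$ and every tuple $\bar\imath=(i_1,\dots,i_r)\in\{1,\dots,w_1\}^r$ of levels for the free element variables, one builds an $\MSO$-formula $\psi^{\bar\imath}$ over $\tau\cup\{P_1,\dots,P_{p_1}\}$, with free element variables $x_1,\dots,x_r$ and free set variables $Y_t^i$ for $t\in[s]$, $i\in[w_1]$, such that, writing $B\definedas\Lambda_1[A,\bar P]$, for all $b_\ell=(a_\ell,i_\ell)\in U(B)$ and all $Y_t\subseteq U(B)$ with slices $Y_t^i$, one has $B\models\psi(\bar b,\bar Y)$ if and only if $(A,\bar P)$, expanded by the $Y_t^i$, satisfies $\psi^{\bar\imath}$. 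The atomic and Boolean cases are immediate (an atom $R(\dots)$ with $R\in\tau'$ becomes the corresponding defining formula $\lambda^{(\cdots)}_R$ of $\Lambda_1$ instantiated at the current levels, an atom $Q_k(x_j)$ becomes $Q_k^{i_j}(x_j)$, and an equality $x_j=x_{j'}$ becomes $\false$ unless $i_j=i_{j'}$). The two quantifier cases are where the width shows up: $\exists x\,\chi$ translates to $\bigvee_{i=1}^{w_1}\exists x\bigl(\lambda^i_U(x)\wedge\chi^{\bar\imath,i}\bigr)$, guessing the level $i$ of the new element, and $\exists Y\,\chi$ translates to $\exists Y^1\cdots\exists Y^{w_1}\bigl(\bigwedge_{i=1}^{w_1}\forall x(x\in Y^i\rightarrow\lambda^i_U(x))\wedge\chi^{\bar\imath}\bigr)$, introducing one new set variable per level and relativizing it to the corresponding copy of the universe. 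Correctness is a routine induction.

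With the backwards translation available, $\Lambda$ is assembled by bookkeeping. Its validity formula is the conjunction of the validity formula of $\Lambda_1$, the guards $\forall x(x\in Q_k^i\rightarrow\lambda^i_U(x))$ for all $k,i$, and the backwards translation of the validity formula of $\Lambda_2$. Its universe formula at level $(i,j)$ is $\lambda^i_{1,U}(x)$ conjoined with the backwards translation of the universe formula $\lambda^j_{2,U}$ of $\Lambda_2$, the latter's single variable placed at level $i$. Its formula for a symbol $R\in\tau''$ at a level tuple $\bigl((i_1,j_1),\dots,(i_r,j_r)\bigr)$ is the backwards translation of $\lambda^{(j_1,\dots,j_r)}_{2,R}$ with $x_\ell$ placed at level $i_\ell$. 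One then verifies $\Lambda[A]=\bigcup_{B\in\Lambda_1[A]}\Lambda_2[B]$: slicing gives a bijection between the valid parameter tuples $(\bar P,(Q_k^i))$ of $\Lambda$ and the pairs consisting of a valid parameter tuple $\bar P$ of $\Lambda_1$ together with a valid parameter tuple $\bar Q$ of $\Lambda_2$ on $\Lambda_1[A,\bar P]$, and under this bijection the backwards translation identities guarantee $\Lambda[A,\bar P,(Q_k^i)]=\Lambda_2\bigl[\Lambda_1[A,\bar P],\bar Q\bigr]$ up to the relabelling of levels.

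I expect the main obstacle to be purely organizational: keeping the level indices and the $w_1$-fold sliced set variables coherent throughout the inductive translation and across all three families of formulas of $\Lambda$, including the $\lambda_U$-guards that must accompany every quantified and parameter set. The only step needing care beyond bookkeeping is the $\CMSO$ version: a modulo atom $C_m(Y_t)$ over $U(B)$ has to become a condition on the slices, whose sizes add up to $|Y_t|$; one writes it as the disjunction, over residue tuples $(c_1,\dots,c_{w_1})$ with $\sum_ic_i\equiv0\pmod m$, of $\bigwedge_i\bigl(|Y_t^i|\equiv c_i\pmod m\bigr)$, and $|Y_t^i|\equiv c_i\pmod m$ is itself $\CMSO$-expressible by asserting that $Y_t^i$ has a $c_i$-element subset whose complement in $Y_t^i$ has size divisible by $m$.
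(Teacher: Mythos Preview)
Your construction is correct and is essentially the standard argument: build the composite transduction with width $w_1w_2$ and $p_1+w_1p_2$ parameters, slice the $\Lambda_2$-parameters across the $w_1$ levels, and use the inductive backwards translation of $\MSO$-formulas along $\Lambda_1$ to pull each defining formula of $\Lambda_2$ down to $\tau$. The treatment of the $\CMSO$ case via residue tuples for the slices is also the right way to handle the modulo atoms.

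Note, however, that the paper does not give its own proof of this fact at all: it is stated as a Fact and the reader is referred to \cite{CourcelleE2011} for a formal proof. So there is no ``paper's proof'' to compare against beyond the observation that what you have written is precisely the construction one finds in Courcelle and Engelfriet's book, carried out at the level of detail appropriate for the present paper's conventions (including the explicit $\lambda_{\textsc{valid}}$ clause and the guard conditions on sliced parameters).
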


\section{Lifting Definability}
\label{sec:lift}

An \emph{ordered tree decomposition} of a structure $A$ is a tree decomposition
of $A$ together with a linear order for each bag. We represent ordered tree
decompositions by logical structures in the following way. An \emph{ordered tree
extension} (\emph{otx} for short) of a $\tau$-structure $A$ is a structure
$A^\star$ that extends $A$ by a tree decomposition $(T^A,\beta^A)$ of $A$ and a
linear order $\preceq^A_t$ of $\beta^A(t)$ for each $t \in V(T^A)$. The
\emph{adhesion} of $A^\star$ is the adhesion of the tree decomposition
$(T^A,\beta^A)$. Formally, we view $A^\star$ as a structure over the vocabulary
$\tau^\star:=\tau\cup\{V_S,V_T,E_T,R_\beta,R_\preceq\}$, where $V_S$ and $V_T$
are unary, $E_T$ and $R_\beta$ are binary, and $R_\preceq$ is ternary. Of course
we assume that none of these symbols appears in $\tau$. In the
$\tau^\star$-structure $A^\star$, these symbols are interpreted as follows:
\begin{align*}
  V_S(A^\star)&:= U(A),\\
  V_T(A^\star)&:= V(T^A),\\
  E_T(A^\star)&:= E(T^A),\\
  R_\beta(A^\star)&:=\big\{(t,v)\mid t\in
                   V(T^A),v\in\beta^A(t)\big\}, \text{and}\\
  R_\preceq(A^\star)&:=\big\{(t,v,w)\mid t\in V(T^A) \text{ and } v,w\in\beta^A(t)\text{ with }v\preceq_t^Aw\big\}.
\end{align*}

An $\MSO[\tau,\tau^\star]$-transduction $\Lambda^\star$ \emph{defines an otx (of
adhesion at most $k$)} of a $\tau$-structure $A$ if every $B \in
\Lambda^\star(A)$ is isomorphic to an otx of $A$ (of adhesion
at most $k$) and $\Lambda^\star(A)$ is nonempty. We say that $\Lambda^\star$
\emph{defines otxs (of adhesion at most $k$)} on a class $\CC$ of
$\tau$-structures if $\Lambda^\star$ defines an otx (of adhesion at most $k$) of
every $A\in\CC$. Moreover, $\CC$ \emph{admits \MSO-definable ordered tree
decompositions (of bounded adhesion)} if there is such a transduction
$\Lambda^\star$ that defines otxs (of adhesion at most $k$ for some constant $k
\in \NN$) on $\CC$. We make similar definitions for the logic $\CMSO$.

We prove the following theorems, which show how to use the tree decompositions and
the bag orderings to define properties of order-invariant formulas without using
order invariance.

\begin{theorem}[Lifting theorem for \OIMSO]
  \label{th:lift-oimso} 
  Let $\CC$ be a class of structures that admits $\CMSO$-definable ordered
  tree decompositions of bounded adhesion. Then $ \OIMSO = \CMSO$ on $\CC$.
\end{theorem}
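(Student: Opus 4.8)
The plan is to prove the two inclusions $\CMSO \subseteq \OIMSO$ and $\OIMSO \subseteq \CMSO$ separately on $\CC$. The first is essentially free: a $\CMSO$-sentence $\phi$ can be simulated by an order-invariant sentence that simply ignores the built-in order and uses it only to count residues modulo $m$ via an $\MSO$-definable comparison of initial segments; more carefully, one uses the standard fact that on a linearly ordered structure the cardinality of a definable set modulo $m$ is $\MSO$-expressible, so $\CMSO \subseteq \OIMSO$ holds without any decomposition assumption. The real content is $\OIMSO \subseteq \CMSO$ on $\CC$, and here I would use the $\CMSO$-definable ordered tree decomposition of bounded adhesion, say of adhesion at most $k$, given by a transduction $\Lambda^\star$.

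The main idea for $\OIMSO \subseteq \CMSO$ is a type-composition argument performed along the tree decomposition. Fix an order-invariant $\MSO$-sentence $\phi$ of quantifier rank $q$; by Lemma~\ref{lem:oitypes} it suffices to show that the relation $A \equiv^{\OIMSO}_q A'$, restricted to structures in $\CC$, is $\CMSO$-definable (for suitable $(q',c)$ depending on $q$ and on $\Lambda^\star$), since then $\phi$ is a union of $\equiv^{\OIMSO}_q$-classes each of which is $\CMSO$-definable. To compute the order-invariant type of $A$ in $\CMSO$, I would work inside an otx $A^\star \in \Lambda^\star(A)$ and define, by $\CMSO$-induction over the tree $T^A$ (using that $\CMSO$ can evaluate recursions of bounded "interface" along a tree — this is the technically delicate part), a quantity at each node $t$ that records enough information about the subtree rooted at $t$ to reconstruct the global order-invariant type. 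The key observation is that because the adhesion is bounded by $k$, the interface between the subtree at $t$ and the rest of $A$ consists of at most $k$ vertices, all lying in $\beta^A(t)$ and linearly ordered by $\preceq^A_t$; and because $A^\star$ carries a linear order on each bag, one can glue the bag orders along the tree into a single global linear order of $U(A)$. The quantity to propagate is therefore the $\MSO$-type of rank $q$ of the substructure induced by the subtree, ordered by this glued order and with the (at most $k$) interface vertices marked — an element of a finite set $\TP^{\MSO}(\tau\cup\{\le\} \cup \{c_1,\dots,c_k\}, q)$. Standard Feferman–Vaught-style composition (Fact-level, e.g.\ via the games in Section~\ref{sec:games-types}) shows that the type at $t$ is a function of the bag $\beta^A(t)$ together with the types at the children and the way their interfaces sit in $\beta^A(t)$; a child contributes its full set of reachable types (over all local bag orders consistent with adhesion), and at the root we collect the set $\langle\tp^{\MSO}_q(A,\le)\rangle$ of all globally reachable types, i.e.\ exactly $\tp^{\OIMSO}_q(A)$.

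The main obstacle, as flagged, is making the tree recursion go through in $\CMSO$ rather than in a more powerful logic. The point is that the value propagated is a subset of a fixed finite set of types, so at each node we must "merge" the contributions of the children into one such subset — and a node may have unboundedly many children. This is precisely where the modulo-counting atoms earn their keep: when many children contribute identical type-information, the merged result depends only on the multiset of child-types up to the relevant threshold and modulus, which is exactly what $\CMSO$ can extract about a definable partition of the children. Concretely, I would encode the bottom-up computation as a fixed $\MSO$-transduction applied to $A^\star$ composed with a tree-automaton-style evaluation that a $\CMSO$-sentence can perform on a (possibly unranked) tree whose node labels come from a finite alphabet — using that $\CMSO$ on trees captures exactly the regular tree languages, where "regular" is taken in the sense allowing counting of children modulo fixed numbers. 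Composing this $\CMSO$-sentence on $A^\star$ with the transduction $\Lambda^\star$ (and using closure of $\CMSO$ under $\CMSO$-transductions, Fact~\ref{fa:mso-closed} in its $\CMSO$ version, via $\Lambda^{\star-1}$) yields a $\CMSO$-sentence over $\tau$ that is true on $A \in \CC$ iff $\phi$ is. Care is needed to verify that the defined object really is an otx (so that the interface and the glued order behave as claimed), that the composition does not blow up the interface size beyond $k$, and that the quantifier rank and moduli used remain bounded by constants depending only on $q$, $k$, and $\Lambda^\star$; these are routine but must be checked.
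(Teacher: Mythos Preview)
Your high-level strategy matches the paper's: propagate a finite ``type'' datum up the tree of the otx, using bounded adhesion to keep the interface finite, then pull the resulting $\CMSO$-sentence back along $\Lambda^\star$. The paper also guesses a labelling of tree nodes by (sets of) types and verifies it locally, exactly as you sketch.

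The genuine gap is your justification of the step ``the merged result depends only on the multiset of child-types up to the relevant threshold and modulus.'' This is the heart of the proof and is not a standard Feferman--Vaught fact, because (i) the children are not disjoint --- they share the adhesion set --- and (ii) what you need is not the type of one fixed ordered amalgam but the \emph{set of all types} reachable by varying the order on the children. The paper handles this by first normalising the decomposition into alternating \emph{b-nodes} (children have pairwise distinct adhesion sets, hence a canonical order induced by the bag order) and \emph{a-nodes} (all children share the same adhesion set). At b-nodes the composition is plain $\MSO$ (Lemma~\ref{lem:btype}/\ref{lem:oibtype}). At a-nodes the crucial Lemma~\ref{lem:atype} shows that the parent's ordered type is determined by a formula $\atype_\theta$ over the vocabulary $\{\le\}$ alone, applied to the linearly ordered set of children coloured by their types. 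The order-invariant version therefore becomes a question about $\OIMSO$ on structures with only unary relations, and the paper invokes Courcelle's theorem that $\OIMSO = \CMSO$ on such set structures to obtain the $\CMSO$-formula $\oiatype_\theta$ (Lemma~\ref{lem:oiatype}). Your appeal to ``threshold and modulus'' is the \emph{conclusion} of this chain, not an argument for it; without the reduction to $\{\le\}$-formulas on coloured sets and Courcelle's result, the claim is unproved.

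A second, smaller gap: you assert that the set of types reachable via ``glued'' orders coincides with $\tp_q^{\OIMSO}(A)$. The paper is careful here: it works with a compatible-order equivalence $\equiv_{co}$ (coarser than equality of reachable-type sets, finer a priori than $\equiv_q^{\OIMSO}$) and maintains the invariant that the propagated set is ``compatible at $t$'' in the sense of being contained in a single $\equiv_{co}$-class. That $\equiv_{co}$ suffices to decide $\phi$ uses that from any linear order one can $\MSO$-define a compatible one. Your proposal does not isolate this invariant, and without it the bottom-up recursion does not obviously compute the right object at the root.
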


\begin{theorem}[Lifting theorem for \OIFO]
  \label{th:lift-oifo} 
  Let $\CC$ be a class of structures that admits $\MSO$-definable ordered
  tree decompositions of bounded adhesion. Then $\OIFO \subseteq \MSO$ on $\CC$.
\end{theorem}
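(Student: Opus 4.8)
The plan is to reduce Theorem~\ref{th:lift-oifo} to a single type-composition statement, which is then proved by a bottom-up pass over an \MSO-definable ordered tree decomposition. Fix $q\in\NN$. Exactly as was done for \MSO\ in Section~\ref{sec:games-types}, I would introduce the order-invariant \FO-type $\tp_q^{\OIFO}(A):=\angles{\tp_q^{\FO}(A,\le)}$ (for any linear order $\le$ of $U(A)$), where $\angles{\cdot}$ now denotes the smallest order-invariant subset of $\TP^{\FO}(\tau\cup\{\le\},q)$ containing the given type; the \FO-analogue of Lemma~\ref{lem:oitypes} then shows $\tp_q^{\OIFO}(A)=\tp_q^{\OIFO}(A')$ iff $A\equiv_q^{\OIFO}A'$. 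Since every \OIFO-sentence of quantifier rank $q$ defines a union of $\equiv_q^{\OIFO}$-classes, it suffices to prove that for each $\rho\in\TP^{\OIFO}(\tau,q)$ there is an \MSO-sentence that holds in exactly those $A\in\CC$ with $\tp_q^{\OIFO}(A)=\rho$; a finite disjunction over the finitely many $\rho$ occurring in a given sentence then yields $\OIFO\subseteq\MSO$ on $\CC$.

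Next I would invoke the hypothesis: let $\Lambda^\star$ be an \MSO-transduction defining otxs of adhesion at most $k$ on $\CC$, and for $A\in\CC$ pick any $A^\star\in\Lambda^\star(A)$ with rooted tree decomposition $(T^A,\beta^A)$ and bag orders $(\preceq^A_t)_t$. Call a linear order $\le$ of $U(A)$ \emph{compatible} with $(T^A,\beta^A)$ if it arises from some linear extension $\preceq_T$ of the ancestor order on $V(T^A)$ by putting $v\le w$ iff $t(v)\prec_T t(w)$, or $t(v)=t(w)$ and $v\preceq^A_{t(v)}w$, where $t(v)$ is the (by the connectedness condition unique) topmost node with $v\in\beta^A(t(v))$. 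Compatible orders always exist, and each is in particular a linear order of $U(A)$, so $\tp_q^{\OIFO}(A)=\angles{\tp_q^{\FO}(A,\le)}$ for every compatible $\le$. I would then set $\mathcal R(A^\star):=\{\tp_q^{\FO}(A,\le)\mid\le\text{ compatible with }(T^A,\beta^A)\}$, a nonempty subset of the single order-invariant type $\tp_q^{\OIFO}(A)$; hence $\tp_q^{\OIFO}(A)=\rho$ iff $\mathcal R(A^\star)\cap\rho\neq\emptyset$. Since $\mathcal R(A^\star)$ ranges over a fixed finite collection of sets, Fact~\ref{fa:mso-closed} reduces the claim to showing that each possible value of $\mathcal R(\cdot)$ is defined by an \MSO-sentence over otxs.

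The heart of the argument is that $\mathcal R(A^\star)$ can be computed bottom-up along $T^A$. For a node $t$, let $B_t$ be the union of the bags in the subtree rooted at $t$, let $A_t:=A[B_t]$ with the adhesion set $\beta^A(t)\cap\beta^A(\mathrm{parent}(t))$ distinguished as an at most $k$-element set ordered by $\preceq^A_t$ (empty at the root), and let $\mathrm{state}(t)$ be the set of all \FO-types (rank $q$, distinguished adhesion marked) of $A_t$, taken over orders of $A_t$ compatible with the subtree decomposition at $t$; this lies in the finite set $2^{\mathcal S}$ with $\mathcal S:=\bigcup_{r\le k}\TP^{\FO}(\tau\cup\{\le\},q,r)$. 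The connectedness condition forces, for distinct children $t_i\neq t_j$ of $t$, that $B_{t_i}\cap B_{t_j}\subseteq\beta^A(t)$, that $B_{t_i}\cap\beta^A(t)$ equals the (size $\le k$) adhesion set of $t_i$, and that every tuple of a relation of $A$ that meets $B_{t_i}\setminus\beta^A(t)$ touches $\beta^A(t)$ only inside that adhesion set; moreover a compatible order of $A_t$ restricts to $\beta^A(t)$ as $\preceq^A_t$ and presents $A_t$ as the ordered sum of $(\beta^A(t),\preceq^A_t)$ followed by the blocks $B_{t_i}\setminus\beta^A(t)$ (each carrying a compatible order of its subtree structure), the children taken in some order. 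A Feferman--Vaught argument then shows that $\mathrm{state}(t)$ is determined by the \FO-type (rank $q$, relevant $\le k$-tuples marked) of the ordered bag $(A[\beta^A(t)],\preceq^A_t)$ together with, for each pair $(S,s)$ of an $\le k$-subset $S$ of $\beta^A(t)$ and an $s\in 2^{\mathcal S}$, the \emph{number} of children $t_i$ with adhesion $S$ and $\mathrm{state}(t_i)=s$, counted only up to a fixed threshold. (In the reduction one first forms, for each $\le k$-subset $S$, the type-set of the ``bundle'' of all children attached at $S$ --- taking the union over orderings of these children makes this depend on them only through threshold counts --- and then treats the finitely many bundle-types as colours on the $\le k$-tuples of the bag.) At the root $A_t=A$ and the distinguished adhesion is empty, so $\mathrm{state}(\mathrm{root})=\mathcal R(A^\star)$.

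Because this bottom-up recursion counts children only up to a threshold --- which is exactly where the \OIFO\ case is lighter than \OIMSO, where tracking \MSO-types instead forces modular counts and hence \CMSO\ in the conclusion --- each value $\mathrm{state}(t)$ is \MSO-definable from $A^\star$: the local data at each node (the ordered-bag \FO-type, and the way each child's adhesion set sits in the ordered bag) is \MSO-definable by relativizing quantifiers to the bag via $R_\beta$ and reading the bag order from $R_\preceq$, and the threshold-counting propagation along $T^A$ is \MSO-definable on the tree by the standard correspondence between finite tree automata and \MSO\ (combined, where convenient, with closure of \MSO\ under transductions as in Fact~\ref{fa:mso-closed}). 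In particular ``$\mathrm{state}(\mathrm{root})=\mathcal R$'' is \MSO-definable over otxs for each $\mathcal R\in 2^{\mathcal S}$, which is what was needed. I expect the Feferman--Vaught composition for \emph{unbounded} bags to be the main obstacle: in the bounded-width case every torso is one of finitely many structures and a tree automaton reads the whole structure off directly, whereas here a bag is an arbitrary ordered structure, and one must check carefully that it contributes only through its rank-$q$ \FO-type with the $\le k$ adhesion tuples marked, that unboundedly many children attached along $\le k$-element interfaces really do collapse to threshold counts once bundled and recorded as finitely many tuple-colours on the bag, and that compatibility of $\le$ genuinely makes the global linear order decompose as the claimed ordered sums along the tree.
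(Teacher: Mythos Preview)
Your approach is correct in outline but organized differently from the paper's, and one step you treat as routine is actually the key ingredient. The paper does not attempt a single Feferman--Vaught composition at each tree node. Instead it first normalizes the decomposition into a \emph{segmented} form that alternates between b-nodes (where children have pairwise distinct adhesion sets, so the bag order already linearly orders them) and a-nodes (where all children share the same $\le k$-element bag). This splits your single composition into two simpler ones: at b-nodes (Lemmas~\ref{lem:btype} and~\ref{lem:oibtype}) the order of children is forced and no counting arises; only at a-nodes (Lemmas~\ref{lem:atype} and~\ref{lem:oiatype}) does the threshold-versus-modular question appear, and there the bag is bounded, so the problem reduces cleanly to a statement about linearly ordered colored sets. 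Your direct route has to carry an unbounded bag and unbounded multiplicities through one composition, which is why your Feferman--Vaught step is heavier; the paper's segmentation buys a cleaner separation of concerns.

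The step that needs more than you give it is the claim that ``taking the union over orderings of these children makes this depend on them only through threshold counts.'' This is precisely the assertion that $\OIFO=\FO$ on structures with only unary relations: for a multiset of children colored by their state, the predicate ``some ordering realizes bundle-type $\tau$'' is by construction an $\OIFO$-sentence over that colored set, and your threshold claim says it is $\FO$ there. The paper invokes this explicitly at the a-node composition, citing Benedikt and Segoufin~\cite{BenediktS2009}; this is the one place where the $\OIFO$ proof diverges from the $\OIMSO$ proof (which needs $\CMSO$ instead, via Courcelle's result that $\OIMSO=\CMSO$ on colored sets). You should flag this dependency rather than fold it into the Feferman--Vaught. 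A smaller technical point: your compatible orders, as defined, allow the blocks $B_{t_i}\setminus\beta^A(t)$ of distinct children to interleave (arbitrary linear extensions of the ancestor order do not keep subtrees contiguous), so $(A_t,\le)$ is not literally the ordered sum you describe; restrict to depth-first extensions to make that decomposition valid.
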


Theorem~\ref{th:lift-oimso} is proved in three steps: First, in
Section~\ref{sec:segmented}, we modify the given ordered tree extension, such
that its tree decomposition follows a certain normal form that allows to
partition its nodes into two different classes (called a-nodes and b-nodes). The
partition of the nodes along with a global partial order that is based on the
local orderings in the bags is then encoded as part of the structure, turning
every otx into an expanded otx. Second, in Section~\ref{sec:ord-composition}, we
prove type-composition lemmas for both the a-nodes and the b-nodes. They show
how one can define the type of an expanded otx with respect to total orderings
that respect the already existing partial order from the types of substructures
that arise by adding such compatible orderings to them. Third,
Section~\ref{sec:inv-composition} shows how these type-composition lemmas can be
used in the context of order-invariance. Finally,
Section~\ref{sec:proofs-lifting} applies the type compositions to prove
Theorem~\ref{th:lift-oimso}. The proof of Theorem~\ref{th:lift-oifo} proceeds in
a similar way. The modifications that we need to apply to the proof of
Theorem~\ref{th:lift-oimso} in order to prove Theorem~\ref{th:lift-oifo} are
mentioned along the way.
 
\subsection{Segmented Ordered Tree Extensions}
\label{sec:segmented}

Recall that we view the tree in a tree decomposition as directed. A tree
decomposition $(T,\beta)$ of a structure $A$ is \emph{segmented} if the set
$V(T)$ can be partitioned into a set $V_a$ of \emph{adhesion nodes} and a set
$V_b$ of \emph{bag nodes} (\emph{a-nodes} and \emph{b-nodes}, for short)
satisfying the following conditions.
\begin{enumerate}
\item For all edges $tu\in E(T)$, either $t\in V_a$ and $u\in V_b$ or
  $u\in V_a$ and $t\in V_b$.
\item For all a-nodes $t\in V_a$ and all
  distinct neighbors $u_1,u_2\in N(t)$, we have
  $\beta(t)=\beta(u_1)\cap\beta(u_2)$.
\item For all b-nodes $t\in V_b$ and all distinct neighbors $u_1,u_2\in N(t)$ we have
  $\beta(t)\cap\beta(u_1)\neq\beta(t)\cap\beta(u_2)$.
\item All leaves of $T$ are b-nodes.
\end{enumerate}
We can transform an arbitrary tree decomposition $(T,\beta)$ into a segmented
tree decomposition $(T'',\beta'')$ as follows. In the construction, we view $T$
as an undirected tree. We will have $V(T)\subseteq V(T'')$. Thus we can direct
the edges of $T''$ away from the root of $T$, which will remain the root of
$T''$. We first contract all edges $tu \in E(T)$ with
$\beta(u)\subseteq\beta(t)$, resulting in a decomposition $(T',\beta')$ where
$\beta'(u)\not\subseteq\beta'(t)$ for all $tu\in E(T')$. Then, for all edges
$tu\in E(T')$, we introduce a new node $v_{tu}$, where $v_{tu}=v_{ut}$, and
edges from $v_{tu}$ to $t$ and $u$. Then we identify all nodes $v_{tu}$ and
$v_{tu'}$ such that $\beta'(t)\cap\beta'(u)=\beta'(t)\cap\beta'(u')$. We let
$T''$ be the resulting tree. The nodes from the original tree $T$ are the
b-nodes, and the nodes $v_{tu}$ are the a-nodes. We define $\beta''$ on $V(T'')$
by $\beta''(t):=\beta'(t)$ for $t\in V(T')$ and
$\beta''(v_{tu}):=\beta'(t)\cap\beta'(u)$ for all $tu\in E(T')$. The resulting
tree decomposition $(T'',\beta'')$ is segmented. This transformation is
definable by an \MSO-transduction. Thus we may assume that the tree
decompositions in ordered tree extensions are segmented, because there is an
$\MSO[\tau^\star,\tau^\star]$-transduction $\Lambda_{\textsc{segment}}$ that
transforms every otx into an otx where the tree decomposition is segmented.

For the rest of this section, we fix a vocabulary $\tau$ that does not contain
the order symbol $\le$ and a $k\in\NN$. In the rest of this section, we only
consider otxs of $\tau$-structures. We assume that the adhesion of these otxs is
at most $k$ and their tree decomposition is segmented.

It will be convenient to introduce some additional notation. As before, whenever
we denote an otx by $A^\star$, we denote the underlying structure by $A$ and the
tree decomposition by $(T^A,\beta^A)$.  We denote the descendant order in the
tree $T^A$ of an otx $A^\star$ by $\unlhd^A$. For every node $t\in V(T^A)$, we
let $T^A_t$ be the subtree of $T^A$ rooted in $t$, that is, $T^A_t:=T^A[\{u\in
V(T^A)\mid t\unlhd^A u\}]$. We let $\gamma^A(t)$, called the \emph{cone} of $t$,
be the union of all bags $\beta^A(u)$ for $u\in V(T^A_t)$. If $s$ is the parent
of $t$ we let $\sigma^A(t):=\beta^A(t)\cap\beta^A(s)$; this is the
\emph{separator at} $t$. For the root $r$ we let $\sigma^A(r):=\emptyset$.  In
all these notations we may omit the index ${}^A$ if $A$ is clear from the
context. Note that for all a-nodes $t$ of $T$ and all $u\in N_+(t)$ we have
$\sigma(t)=\beta(t)=\sigma(u)$.

We expand an otx $A^\star$ to a structure $A^\dbstar$ over the vocabulary
$\tau^\dbstar:=\tau^\star\cup\{V_a,V_b,R_\gamma,R_\sigma,S_1,\ldots,S_k,\preceq\}$, 
where $V_a,V_b$ are unary and $R_\sigma,R_\gamma,S_1,\ldots,S_k,\preceq$ are
binary relation symbols that do not appear in $\tau$. We let $V_a(A^\dbstar)$
and $V_b(A^\dbstar)$ be the sets of a-nodes and b-nodes of the tree $T^A$,
respectively, and
\begin{align*}
  R_\sigma(A^\dbstar)&:=\big\{(t,v)\mid t\in 
                   V(T^A),v\in\sigma^A(t)\big\},\\
  R_\gamma(A^\dbstar)&:=\big\{(t,v)\mid t\in 
                   V(T^A),v\in\gamma^A(t)\big\}.
\end{align*}
We let $\preceq = \preceq^{A^\dbstar}$ be the partial order on $U(A^\dbstar)$
defined as follows. We first define the restriction of $\preceq$ to $V(T)$. For
all b-nodes $t$, we let $\preceq'_t$ be the linear order on $N_+(t)$ defined by
$u_1\preceq_t' u_2$ if the set $\sigma(u_1)\subseteq\beta(t)$ is
lexicographically smaller than or equal to the set
$\sigma(u_2)\subseteq\beta(t)$ with respect to the linear order $\preceq_t$ on
$\beta(t)$, for all children $u_1,u_2\in N_+(t)$. This is indeed a linear order
because $\preceq_t$ is a linear order of $\beta(t)$ and
$\sigma(u_1)\neq\sigma(u_2)$ for all distinct $u_1,u_2\in N_+(t)$.  Then we let
the restriction of $\preceq$ to $V(T)$ be the reflexive transitive closure of
the ``descendant order'' $\unlhd$ on $T$ and all the relations $\preceq_t'$ for
b-nodes $t\in V(T)$. To define the restriction of $\preceq$ to $U(A)$, for every
$v\in U(A)$ we let $t(v)$ be the topmost (that is, $\unlhd$-minimal) node $t\in
V(T)$ such that $v\in\beta(t)$. Then we let $v\preceq w$ if, and only if,
$t(v)\prec t(w)$ or $t(v)=t(w)$ and $v\preceq_{t(v)} w$. To complete the
definition of $\preceq$, we let $t\preceq v$ for all $t\in V(T)$ and $v\in
U(A)$.

Finally, we define the relations $S_1(A^\dbstar),\ldots,S_k(A^\dbstar)$ by
letting $S_i(A^\dbstar)$ be the set of all pairs $(t,v)$, where $t\in V(T^A)$
and $v$ is the $i$th element of $\sigma(t)$ with respect to the partial order
$\preceq$, which is a linear order when restricted to
$\sigma(t)\subseteq\beta(t)$. Recall that we have $|\sigma(t)|\le k$ by our
general assumption that the adhesion of all otxs is at most $k$. This completes
the definition of $A^\dbstar$. It is easy to see that there is an
$\MSO[\tau^\star,\tau^{\dbstar}]$-transduction $\Lambda_{\textsc{expand}}$ that
defines $A^\dbstar$ in $A^\star$.

We call $A^\dbstar$ an \emph{expanded otx} (\emph{otxx} for short) of $A$. More
generally, we call a $\tau^\dbstar$-structure $A'$ an \emph{expanded otx} if
there is a $\tau$-structure $A$ such that $A'$ is an otxx of $A$.  Let
$A^\dbstar$ be an expanded otx. For every $t\in V(T)$, we let
\begin{align*}
  A_t^\dbstar & := A^\dbstar[\gamma(t)\cup V(T_t)], \text{ and}\\
  A_{(t)}^\dbstar & := A^\dbstar[\beta(t)\cup N_+(t)] .
\end{align*}
We call a $\tau^\dbstar$-structure $A'$ a \emph{sub-otxx} if there is an otxx
$A^\dbstar$ and a node $t\in V(T^A)$ with $A'=A_t^\dbstar$. The only difference
between an otxx and a sub-otxx is that in an otxx the set $\sigma(r)$ is empty
for the root $r$ whereas in a sub-otxx it may be nonempty.

\begin{lemma}\label{lem:def-otxx}
  There are \MSO-sentences $\formel{otxxs}$ and $\formel{sub-otxx}$
  of vocabulary $\tau^\dbstar$ defining the classes of all otxx and
  sub-otxx (satisfying our general assumptions: the tree decomposition
  is segmented and has adhesion at most $k$).
\end{lemma}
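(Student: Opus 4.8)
The plan is to write each of the two sentences $\formel{otxxs}$ and $\formel{sub-otxx}$ as a big conjunction of \MSO-expressible conditions that together pin down exactly the $\tau^\dbstar$-structures $A'$ arising as otxxs (resp.\ sub-otxxs) of some $\tau$-structure, under the standing assumptions. Since the two sentences differ only in whether $\sigma(r)$ is forced to be empty, I would first develop $\formel{sub-otxx}$ and then obtain $\formel{otxxs}$ by adding the clause ``$R_\sigma$ contains no pair $(r,v)$ where $r$ is the root'' (the root being the unique $E_T$-source among the tree nodes). Throughout I rely on the elementary fact that \MSO\ on the class of structures whose $E_T$-reduct is a finite directed tree can quantify over vertex subsets of the tree, speak of the descendant order $\unlhd$ (the reflexive-transitive closure of $E_T$, \MSO-definable), of $N(t)$ and $N_+(t)$, of leaves, of the root, of the subtree $T_t$, etc.

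First I would have the sentence assert the \emph{basic shape}: $U(A')$ partitions into $V_S$ (the would-be universe of $A$) and $V_T$; the relations of $\tau$ live inside $V_S$; $(V_T,E_T)$ is a finite directed tree; $R_\beta$ relates tree nodes to $V_S$-elements; $V_a,V_b$ partition $V_T$; each of the $S_i$ is a partial function from $V_T$ into $V_S$ whose graph lies in $R_\beta$; and $\preceq$ is a partial order on $U(A')$ with $V_T$ entirely $\preceq$-below $V_S$, whose restriction to every bag $\beta(t):=\{v\mid R_\beta(t,v)\}$ is a linear order agreeing with the ternary relation $R_\preceq$ (and conversely $R_\preceq$ is exactly the set of triples $(t,v,w)$ with $v,w\in\beta(t)$ and $v\preceq w$). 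Next, the \emph{tree-decomposition axioms}: the connectedness condition (for every $v\in V_S$ the set $\{t\mid R_\beta(t,v)\}$ is nonempty and $\unlhd$-convex, i.e.\ induces a connected subtree — \MSO-expressible) and the cover condition (for every tuple of every relation of $\tau$ there is a node whose bag contains all its entries). The \emph{adhesion bound} $k$ is expressed by ``for every edge $tu\in E_T$, the intersection $\beta(t)\cap\beta(u)$ has at most $k$ elements,'' and ``at most $k$ elements'' is \MSO-expressible (indeed \FO). Then the \emph{segmentedness axioms} (1)--(4) are direct first-order statements about bags of adjacent nodes: edges alternate $V_a/V_b$; for an a-node $t$ and distinct neighbors $u_1,u_2$, $\beta(t)=\beta(u_1)\cap\beta(u_2)$; for a b-node $t$ and distinct neighbors, the separators differ; all leaves are b-nodes.

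The main work — and the place where I expect the real obstacle to lie — is asserting that $\preceq$ and the $S_i$ are the \emph{specific} objects prescribed in the construction, not just some objects of the right format. Concretely: (a) I must say that there \emph{exists} a family of linear orders $(\preceq_t)_{t\in V_b}$ on the bags of b-nodes; but $\preceq$ already encodes the order on every bag (via $R_\preceq$), so this is automatic once I use $R_\preceq$ itself as $\preceq_t$. (b) The derived order $\preceq'_t$ on $N_+(t)$ for a b-node $t$ is the lexicographic comparison of the separators $\sigma(u)\subseteq\beta(t)$ under $\preceq_t$; ``$\sigma(u_1)$ is lex-smaller than $\sigma(u_2)$ with respect to the linear order $\preceq_t$ on a set of size $\le k$'' is first-order-expressible by a finite case distinction on $|\sigma(u_1)|,|\sigma(u_2)|\le k$ (comparing the sorted lists element by element), and I would package this as an \MSO-formula $\formel{lex}_{\preceq}(t,u_1,u_2)$. (c) I must say the restriction of $\preceq$ to $V_T$ is exactly the reflexive-transitive closure of $\unlhd$ together with all $\preceq'_t$: this is a least-fixed-point statement, but reachability/transitive closure is \MSO-definable, so I can write ``$t\preceq t'$ iff there is a set $Z\subseteq V_T$ that is $\unlhd$-upward-and-$\preceq'$-upward closed, contains $t'$, and every element of $Z$ other than $t'$ has an out-neighbor (in $\unlhd$ or in some $\preceq'_s$) inside $Z$'' — i.e.\ the usual \MSO\ encoding of a transitive closure — and dually for non-$\preceq$. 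The subtle point here is that the graph whose transitive closure we take is itself defined from $\preceq$ (through $\preceq'_t$, which uses $\preceq_t=R_\preceq$), so the definition is \emph{not} circular: $\preceq'_t$ depends only on the within-bag orders, which are given outright. (d) For $V_S$-elements, $t(v)$ is the $\unlhd$-minimal node whose bag contains $v$ (\FO-definable using connectedness), and the sentence asserts $v\preceq w\iff t(v)\prec t(w)\ \vee\ (t(v)=t(w)\wedge v\preceq_{t(v)}w)$, and $t\preceq v$ for all $t\in V_T,v\in V_S$ — all first-order modulo the already-fixed $t(\cdot)$. (e) Finally, $S_i(A',t)=v$ iff $v$ is the $i$-th element of $\sigma(t)$ in the order $\preceq$ restricted to $\sigma(t)$; again a finite, first-order statement since $|\sigma(t)|\le k$.

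Having written $\formel{sub-otxx}$ as the conjunction of all the above, \emph{soundness} (every model is isomorphic to a sub-otxx) is immediate: the model directly supplies $A:=A'[V_S]$, the tree $(V_T,E_T)$, the bag map $\beta$, the within-bag orders $R_\preceq$, and all the $\tau^\dbstar$-relations, and the axioms say precisely that these satisfy the defining conditions of a sub-otxx (with segmented decomposition of adhesion $\le k$); conversely, \emph{completeness} holds because for any otxx $A^\dbstar$ — or sub-otxx — every clause of the sentence is true by construction (the segmentedness and adhesion assumptions are the standing assumptions, and the derived relations $\preceq,S_i$ were defined in exactly the way the sentence transcribes). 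For $\formel{otxxs}$ I add the single clause forcing $\sigma(r)=\emptyset$ at the root; by the remark in the text following the definition of sub-otxx, this is the only discrepancy, so the same soundness/completeness argument applies. The one genuinely delicate verification, which I would spell out carefully, is that the \MSO\ transitive-closure encoding in step (c) correctly captures $\preceq\restriction V(T)$, i.e.\ that the ``witness-set'' formula and its negation are complementary on trees — this is standard but worth stating, since everything else is a routine first-order bookkeeping over bounded-size separators.
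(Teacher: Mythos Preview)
Your proposal is correct and is precisely the sort of routine verification the paper's one-word proof (``Straightforward'') leaves to the reader; the only omission is that you do not list the clauses asserting that $R_\sigma$ and $R_\gamma$ coincide with the intended sets $\{(t,v)\mid v\in\sigma(t)\}$ and $\{(t,v)\mid v\in\gamma(t)\}$, both of which are \MSO-definable from $R_\beta$ and $E_T$ in the same way as your other items. With that added, your conjunction is exactly the sentence the paper has in mind.
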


\begin{proof}
  Straightforward.
\end{proof}

We will later modify an otxx $A^\dbstar$ by
\emph{replacing}\label{page:replacing} a sub-otxx $A^\dbstar_{t}$, for some
$t\in V(T^A)$, by another sub-otxx $B^\dbstar$. Let $t'$ be the root node of the
tree $T^B$. The replacement is possible if the induced substructures
$A^\dbstar[\{t\}\cup\sigma^A(t)]$ and $B^\dbstar[\{t'\}\cup\sigma^B(t')]$ are
isomorphic. If they are, there is a unique isomorphism, because
$\{t\}\cup\sigma^A(t)$ and $\{t'\}\cup\sigma^B(t')$ are linearly ordered by the
restrictions of $\preceq^{A^\dbstar}$, $\preceq^{B^\dbstar}$. Now replacing
$A^\dbstar_{t}$ by $B^\dbstar$ in $A^\dbstar$ just means deleting all elements
in $U(A^\dbstar_t)$ except those in $\{t\}\cup\sigma^A(t)$, adding a disjoint
copy of $B^{\dbstar}$, and identifying the elements in $\{t\}\cup\sigma^A(t)$
and $\{t'\}\cup\sigma^B(t')$ according to the unique isomorphism.  Note that the
substructures $A^\dbstar[\{t\}\cup\sigma^A(t)]$ and
$B^\dbstar[\{t'\}\cup\sigma^B(t')]$ are isomorphic if the sub-otxxs
$A_t^\dbstar$ and $B^\dbstar$ satisfy the same first-order sentences of
quantifier rank $\operatorname{ar}(\tau)+1$, where $\operatorname{ar}(\tau)$
denote the maximum arity of a relation symbol in the vocabulary $\tau$. To
express isomorphism, we use the relations $S_1,\ldots,S_k$ and the fact that the
root of an otxx can be defined by a formula of quantifier rank $2$. Thus in
particular, if $\tp_q^{\MSO}(A_t^\dbstar)=\tp_q^{\MSO}(B^\dbstar)$ for some
$q\ge \operatorname{ar}(\tau)+1$, we can replace $A^\dbstar_t$ by $B^\dbstar$.

Finally, we say that a linear order $\le$ on an otxx or sub-otxx $A^\dbstar$ is
\emph{compatible} if it extends the partial order $\preceq^{A^\dbstar}$. If
$\le$ is a compatible linear order, then $(A^\dbstar,\le)$ denotes the
$\tau^\dbstar\cup\{\le\}$-expansion of $A^\dbstar$ by this order, and
$(A_t^\dbstar,\le)$ denotes the induced substructure where $\le$ is restricted
to the sub-otxx $A_t^\dbstar$. We can extend the replacement operation to such
ordered expansions of otxxs; in the same way we replace a sub-otxx $A^\dbstar_t$
by $B^\dbstar$, we can replace a $(A^\dbstar_t,\le)$ by $(B^\dbstar,\le')$ for
some compatible linear order $\le'$ of $B^\dbstar$.

\subsection{Ordered Type Compositions}
\label{sec:ord-composition}

As all structures we are working with in this subsection are otxxs and sub-otxx,
we denote them by $A$ rather than $A^\dbstar$. Apart from that, we use the same
notation as before. In particular, if $A$ is an otxx then by $T^A$ we denote the
tree of its tree decomposition, and for a node $t\in V(T^A)$, by $A_t$ we denote
the sub-otxx rooted in $t$, and we let $A_{(t)} = A[\beta(t)\cup N_+(t)]$.

Throughout this subsection, we fix a $q\in\NN$ such that $q\ge 2$ and
$q\ge\operatorname{ar}(\tau)+1$ and $q$ is at least the quantifier rank of the
formulas $\formel{otxx}$ and $\formel{sub-otxx}$ of
Lemma~\ref{lem:def-otxx}. This means that if $A$ is an otxx (or sub-otxx) and
$A'$ an arbitrary $\tau^\dbstar$-structure with $A \equiv^\MSO_q A'$, then $A'$
is an otxx (a sub-otxx) as well. Furthermore, if $t,t'$ are the root nodes of
$A$, $A'$, respectively, then the induced substructures
$A[\{t\}\cup\sigma^A(t)]$ and $A'[\{t'\}\cup\sigma^{A'}(t')]$ are
isomorphic. Finally, if $A,A'$ are otxxs and $\le,\le'$ are linear orders of
$A,A'$, respectively, such that $(A,\le)\equiv_q^\MSO(A',\le')$ then $\le$ is
compatible if, and only if, $\le'$ is compatible.

We let $\Theta:=\TP^{\MSO}(\tau^\dbstar\cup\{\le\},q)$. Furthermore, we assume
that $\Theta=\{\theta_1,\ldots,\theta_m\}$.

Let $A$ be an otxx, $\le$ a compatible linear order of $A$, and $N\subseteq
V(T^A)$ (usually $N=N_+(t)$ for a node $t\in V(T^A)$). For all $i\in[m]$, let
$P_i$ be the set of all $u\in N$ such that $\tp_q^\MSO(A_u,\le)=\theta_i$. We
call $(P_1,\ldots,P_m)$ the \emph{type partition} of $N$. (Note that some of the
$P_i$ may be empty. We always allow partitions to have empty parts.) The
following lemma extends classical type-composition theorems
\cite{Makowsky2004,FefermanV1959} to our situation, where substructures are
combined through b-nodes.

\begin{lemma}[Ordered type composition at b-nodes]
  \label{lem:btype}
  For every $\theta\in\Theta$ there is an
  $\MSO[\tau^\dbstar]$-formula
    $\btype_\theta(X_1,\ldots,X_m)$
  such that for every otxx $A$, every b-node $t\in V(T^A)$, and every compatible
  linear order $\le$ of $A$, if $(P_1,\ldots,P_m)$ is the type partition of
  $N_+(t)$, then
  \begin{align*}
    A_{(t)}\models\btype_{\theta}(P_1,\ldots,P_m)
    \text{ if, and only if, }
    \tp_q^{\MSO}(A_t,\le)=\theta.    
  \end{align*}
\end{lemma}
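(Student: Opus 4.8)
The plan is to show that the $\MSO$-type of $(A_t,\le)$ is a function of the type partition $(P_1,\dots,P_m)$ of $N_+(t)$, and that this function is itself $\MSO$-definable over $A_{(t)}$. First I would unpack the structure of $A_t$ when $t$ is a b-node: $A_t$ is obtained from $A_{(t)}$ (which contains the bag $\beta(t)$, the node $t$ itself, its children $N_+(t)$, and all incident relations and the order $\le$ restricted to these) by gluing, for each child $u\in N_+(t)$, the sub-otxx $A_u$ along the interface $\{u\}\cup\sigma(u)$. Crucially, by the segmentedness conditions and the definition of adhesion, each interface $\{u\}\cup\sigma(u)$ has size at most $k+1$, it is linearly ordered by $\preceq$ (hence by the compatible order $\le$), and the relations $S_1,\dots,S_k$ together with the rank-$2$ definability of a sub-otxx root pin down the interface structure $A[\{u\}\cup\sigma(u)]$ up to isomorphism from $\tp_q^\MSO(A_u,\le)$ — this is exactly the machinery set up on page~\pageref{page:replacing}. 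Since $q\ge\arity(\tau)+1$ and $q\ge 2$, knowing $\theta_i=\tp_q^\MSO(A_u,\le)$ determines the interface, so the gluing is well-defined at the level of types.

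Second I would invoke a Feferman–Vaught-style composition argument over the interfaces. The children $u\in N_+(t)$ attach to $A_{(t)}$ along interfaces contained in $\beta(t)$ (plus the node $u$), and distinct children attach along sets that are definable from $A_{(t)}$; moreover the sub-otxxs $A_u$ are pairwise disjoint except for these interfaces, and are disjoint from the rest of $A_t$ except through $\beta(t)$. The classical composition theorems of \cite{FefermanV1959,Makowsky2004} say that $\tp_q^\MSO$ of such an amalgam is determined by: (a) $\tp_q^\MSO$ of the ``core'' $A_{(t)}$ together with, marked as parameters, the interface sets for each child; and (b) for each child, the $q$-type of $(A_u,\le)$ relative to its interface. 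Because $\le$ is compatible, its restriction to $A_t$ is exactly the shuffle of $\le$ on $A_{(t)}$ with the orders on the $A_u$'s in the order dictated by $\preceq$ — in particular the order on the $A_u$ never interleaves with $A_{(t)}\setminus\beta(t)$ in a way not already recorded by $\preceq'_t$ on $N_+(t)$, which is $\MSO$-definable inside $A_{(t)}$. So the only ``new'' information beyond $A_{(t)}$ is, for each child $u$, which type $\theta_i$ it carries — i.e. exactly the type partition $(P_1,\dots,P_m)$.

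Third, I would convert this into an explicit formula. For each target type $\theta\in\Theta$ and each ``multiplicity pattern'' that can arise (how many children carry each $\theta_i$, where by a standard pumping/bounded-count argument only the multiplicity up to some threshold depending on $q$ and $|\beta(t)|$-independent combinatorics matters — here one must be careful since $\beta(t)$ is unbounded, so the relevant bound is on counts of children per type, via the usual EF-game ``one extra pebble'' saturation), one can write a first-order/$\MSO$ sentence over the parameters $X_1,\dots,X_m$ describing which $(A_{(t)},P_1,\dots,P_m)$-types lead to $\theta$ after amalgamation, and take $\btype_\theta$ to be the disjunction over those. The verification that this is correct is a routine EF-game composition argument: given a winning Duplicator strategy on the cores and on each interface-relative pair of sub-otxxs, assemble a winning strategy on the amalgams, using that moves into a child's sub-otxx are answered within the child and moves into the core are answered in the core, with the interface elements and the linear order handled by the interface isomorphism and the $\MSO$-definable shuffling order.

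The main obstacle I expect is that $\beta(t)$ — and hence $N_+(t)$ — is unbounded, so the classical finite-amalgamation composition theorems do not apply off the shelf; one needs the version that composes a structure with an arbitrary (unbounded) family of summands attached along a bounded-size interface, where the summand contributes only its relative $q$-type and Duplicator's strategy must simultaneously respond to pebbles landing in many different summands. Getting the order $\le$ to cooperate — ensuring that a compatible linear order really does decompose as the lexicographic-type shuffle recorded by $\preceq'_t$, so that no order information beyond the type partition leaks — is the delicate point, and it is precisely why the construction of $\preceq$ in Section~\ref{sec:segmented} was arranged via the lexicographic order on separators.
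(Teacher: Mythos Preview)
Your plan is essentially the paper's: show that $\tp_q^\MSO(A_t,\le)$ is determined by some $\tp_{q'}^\MSO(A_{(t)},P_1,\ldots,P_m)$ via an EF-game composition, then let $\btype_\theta$ be the disjunction over those $q'$-types that yield $\theta$. You also correctly single out the crucial observation that, because $t$ is a b-node, $\preceq$ is already a linear order on $U(A_{(t)})$, so any compatible $\le$ agrees with it there and the formula can live over $\tau^\dbstar$ without $\le$.

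Two corrections. First, the parenthetical about ``multiplicity patterns'' and threshold counting is a wrong turn: it is \emph{not} merely the multiset of child-types that matters, but the full type of the pair $(A_{(t)},P_1,\ldots,P_m)$, which already records \emph{where} (via $R_\sigma$, $S_1,\ldots,S_k$, and $\preceq$) each child attaches; drop the counting argument entirely. Second, your step-4 sketch (``combine strategies on core and sub-otxxs'') does not yet say how Duplicator matches children when there are unboundedly many and several set moves have already been played. The paper's device is this: translate a position $\Pi=(P_i,Q_i)_{i\le p}$ of the $q$-round game on $(A_t,\le),(B_{t'},\le')$ into a position $\Pi^+$ of a $q'$-round game on $A_{(t)},B_{(t')}$ that, in addition to the restrictions $P_i\cap U(A_{(t)})$, records for each $i$ the partition of $N_+(t)$ by the refined types $\tp_{q-i}^\MSO(A_u,\le,P_1\cap U(A_u),\ldots,P_i\cap U(A_u))$. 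With $q'=1+\sum_{i=1}^q(1+m_i)$, Duplicator's winning strategy on the cores then dictates both the matching of children and the response inside each child. This is the missing mechanism that makes the unbounded-summand composition go through.
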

\begin{proof}
  For $0\le i\le q$, let $\Theta_i:=\TP^{\MSO}(\tau^\dbstar\cup\{\le\} ,q-i,i)$,
  and suppose that 
  $\Theta_i=\{\theta_{i1},\ldots,\theta_{im_i}\}$. Then
  $\Theta_0=\Theta$ and $m_0=m$, and we may assume that
  $\theta_{0j}=\theta_j$ for all $j\in[m]$. Let $q' := 1+\sum_{i=1}^q(1+m_i)$. 
  The core of the proof is the following claim.

  \begin{claim*}
    Let $A,B$ be otxxs and $\le^A,\le^B$ compatible
    linear orders of $A,B$, respectively. Let
    $t\in V(T^A)$ and $t'\in V(T^B)$. Let $(P_{01},\ldots,P_{0m_0})$ and $(Q_{01},\ldots,Q_{0m_0})$ be the
    type partitions of $N_+(t)$ and $N_+(t')$, respectively.
    If
    \begin{equation}
      \label{eq:6}
      \tp_{q'}^{\MSO}(A_{(t)},P_{01},\ldots,P_{0m_0})=\tp_{q'}^{\MSO}(B_{(t')},Q_{01},\ldots,Q_{0m_0}),
    \end{equation}
    then $(A_t,\le^A)\equiv_q^\MSO(B_{t'},\le^B)$.      
    \proof
    We shall prove that Duplicator has a winning strategy for the $q$-move
    $\MSO$-game on $(A_t,\le^A$), $(B_{t'},\le^B)$. It is crucial to note that the
    compatible linear orders $\le^A,\le^B$ coincide with the partial orders
    $\preceq^A,\preceq^B$ of the structures $A,B$ when restricted to
    $U(A_{(t)}),U(B_{(t)})$, respectively. The reason for this is that the
    restrictions of $\preceq^A,\preceq^B$ to $U(A_{(t)}),U(B_{(t)})$, respectively,
    are linear orders, because $t$ and $t'$ are b-nodes. This means that the games
    on $(A_{(t)},\le^A),(B_{(t')},\le^B)$ and on $A_{(t)},B_{(t')}$ are the same.

    With every sequence $\bar P=(P_1,\ldots,P_p)$ of subsets of
    $U(A_t)$ we associate a sequence 
    \begin{align*}
       \bar P^+:=(P_{01},\ldots,P_{0m_0},P_{10},P_{11},\ldots,P_{1m_1},\allowbreak P_{20},\;\ldots\;,P_{(p-1)m_{p-1}}P_{p0},P_{p1},\ldots,P_{pm_p})  
    \end{align*}
    of subsets of $U(A_{(t)})$ as follows:
    \begin{itemize}
    \item $P_{i0}:=P_i\cap U(A_{(t)})$, for all
      $i\in[p]$;
    \item $P_{ij}$ is the set of $u\in N_+(t)$ with
      $\theta_{ij} = \tp^\MSO_{q-i}\big(A_u,\le,P_1\cap U(A_u),\ldots,P_i\cap
      U(A_u)\big) $ for all $i\in[p]$, $j\in[m_i]$. 
    \end{itemize}
    For every sequence $\bar Q=(Q_1,\ldots,Q_p)$ of subsets of
    $U(B_{t'})$ we define $\bar Q^+$ similarly, and for every
    position $\Pi=(P_i,Q_i)_{i\in[p]}$ of the \MSO-game on
    $(A_t,\le^A),(B_{t'},\le^B)$ we let $\Pi^+$ be the
    position of the \MSO-game on $A_{(t)},B_{(t)}$ consisting of $\bar P^+$ and $\bar Q^+$. 

    Our goal is to define a strategy for Duplicator in the $q$-move
    game on $(A_t,\le^A),(B_{t'},\le^B)$ such that for
    every reachable position $\Pi$ of length $p$ the position $\Pi^+$
    is a $1+\sum_{i=p+1}^q(1+m_i)$-move winning position for Duplicator
    in the \MSO-game on $A_{(t)},B_{(t')}$. Such a
    strategy will clearly be a winning strategy. We define the
    strategy inductively. For the initial empty position $\Pi_0$ we have
    $\Pi_0^+=(P_{0j},Q_{0j})_{j\in[m_0]}$, and it follows from \eqref{eq:6} that
    is is a $q'$-move winning position for Duplicator in the \MSO-game on
    $A_{(t)},B_{(t')}$. 

    So suppose now we are in a position $\Pi=(P_i,Q_i)_{i\in[p]}$ and
    the corresponding position $\Pi^+$ is a
    $1+\sum_{i=p+1}^q(1+m_i)$-move winning position for Duplicator in
    the \MSO-game on $A_{(t)},B_{(t')}$. Without loss
    of generality, we assume that
    in the $(p+1)$st move of the game on
    $(A_t,\le^A),(B_{t'},\le^B)$, Spoiler chooses a
    set $P_{p+1}\subseteq U(A_t)$. (The case that he chooses a
    set $Q_{p+1}\subseteq U(B_{t'})$ is symmetric.)

    We define the sets $P_{ij}$ for $i\in[p+1]$ and
    $j\in\{0,\ldots,m_i\}$ as above. Suppose that, starting in
    position $\Pi^+$, in the game on
    $A_{(t)},B_{(t')}$ Spoiler selects the sets
    $P_{(p+1)0},\ldots,P_{(p+1)m_{p+1}}$ in the next $m_{p+1}+1$
    moves. Let $Q_{(p+1)0},\ldots,Q_{(p+1)m_{p+1}}$ be Duplicator's
    answers according to some winning strategy. Let $(\Pi^+)'$ be the
    resulting position of the \MSO-game on
    $A_{(t)},B_{(t')}$; this is a
    $1+\sum_{i=p+2}^q(1+m_i)$-move winning position for Duplicator.
    
    As the sets $P_{(p+1)0},\ldots,P_{(p+1)m_{p+1}}$ form a partition
    of $N_+(t)$, the sets $Q_{(p+1)1},\ldots,Q_{(p+1)m_{p+1}}$ form a
    partition of $N_+(t')$, because otherwise Spoiler wins in the
    next round of the game (this explains the '$1+$' in the the number
    of moves of the game). Let $u'\in N_+(t')$ and $j=j(u')$ such that
    $u'\in Q_{(p+1)j}$. Then there is at least one $u\in
    P_{(p+1)j}$; otherwise Spoiler wins in the next round of the game.
    Let $j'\in[m_p]$ such that $u\in P_{pj'}$. Then
    \begin{align}
      \label{eq:20}
      \tp_{q-p}(A_{u},\le,P_1\cap
      U(A_u),\ldots,P_p\cap U(A_u))&=\theta_{pj'},\\
      \label{eq:21}
      \tp_{q-p-1}(A_{u},\le,P_1\cap U(A_u),\ldots,P_{p+1}\cap U(A_u))&=\theta_{(p+1)j}.
    \end{align}
    Hence the type $\theta_{pj'}$ is the unique ``restriction'' of
    $\theta_{(p+1)j}$, and for all $u''\in P_{(p+1)j}$ we have
    $u''\in P_{pj'}$. This implies that $u'\in Q_{pj'}$, because
    otherwise Spoiler wins in the next round of the game. It follows
    that
    \begin{equation}
      \label{eq:22}
      \tp_{q-p}(B_{u'},\le,Q_1\cap U(B_{u'}),\ldots,Q_p\cap U(B_{u'}))=\theta_{pj'}.
    \end{equation}
    This implies that there is a $Q_{(p+1)}^{u'}\subseteq
    U(B_{u'})$ with
    \begin{align*}
      \theta_{(p+1)j} & = \tp_{q-p-1}(B_{u'},\le,Q_1\cap
    U(B_{u'}),\ldots,Q_p\cap
    U(B_{u'}),Q_{(p+1)}^{u'})\, .
    \end{align*}
    % To understand the notation, recall that $j=j(u')$ is uniquely
    % determined by $u'$ because the sets
    % $Q_{(p+1)1},\ldots,Q_{(p+1)m_{p+1}}$ form a partition of
    % $N_+(t')$.
    We let
    $
    Q_{p+1}:=Q_{(p+1)0}\cup\bigcup_{u'\in N_+(t')}Q_{(p+1)}^{u'}.
    $
    The new position is $\Pi':=(P_i,Q_i)_{i\in[p+1]}$. Then
    $ (\Pi')^+=(\Pi^+)'$, which is a $1+\sum_{i=p+2}^q(1+m_i)$-move
    winning position for Duplicator in the \MSO-game on
    $A_{(t)},B_{(t')}$.
    \uend
  \end{claim*}

  The claim implies that $\tp_q^\MSO(A,\le^A)$ only depends on the type 
  of $\tp_{q'}^\MSO(A_{(t)},P_1,\ldots,P_m)$. Let
  $\theta\in\Theta$. To define the formula $\btype_\theta$, let
  $\theta_1',\ldots,\theta_\ell'$ be the list of all types $\theta'\in\TP^\MSO(\tau,q',m)$  such that
  $\tp_{q'}^\MSO(A_{(t)},P_1,\ldots,P_m)=\theta'$ implies $\tp_q^\MSO(A,\le^A)=\theta$.
  Then $\tp_q^\MSO(A,\le^A)=\theta$ if, and only, if
  \begin{align*}
    A_{(t)}\models\bigvee_{i=1}^\ell\bigwedge_{\psi(X_1,\ldots,X_m)\in\theta'_i}\psi(P_1,\ldots,P_m)\,
    . 
  \end{align*}
  % As $A_t$ is definable within $(A,\{t\})$, this
  % yields the desired formula $\btype(X_1,\ldots,X_m,Y)$.
\end{proof}

Note that the vocabulary of the formula $\btype$ in the lemma is $\tau^\dbstar$
and not $\tau^\dbstar\cup\{\le\}$. It will be important throughout the proofs of
the lifting theorems to keep track of the vocabularies.  The next lemma is a
similar result for a-nodes, but there is one big difference: the formula
$\atype$ we obtain has vocabulary $\{\le\}$ and not $\tau^\dbstar$. This means
that, at least a priori, the formula is not order-invariant. For b-nodes, the
formula $\btype_{\theta}$ does not depend on the order, because for b-nodes $t$
every compatible linear order $\le$ coincides with $\preceq$ on
$U(A_{(t)})$. The proof of the lemma is a straightforward adaptation of the
proof of the previous lemma.

\begin{lemma}[Ordered type composition at a-nodes]
  \label{lem:atype}
  For every $\theta\in\Theta$ there is an $\MSO[\{\le\}]$-formula
  $\atype_\theta(X_1,\ldots,X_m)$ such that for every otxx 
  $A$, every a-node $t\in V(T^A)$, and every compatible linear order $\le$
  of $A$, if $(P_1,\ldots,P_m)$ is the type partition of $N_+(t)$,
  then 
  \begin{align*}
    (N_+(t),\le)\models\atype_{\theta}(P_1,\ldots,P_m)
    \text{ if, and only if, }
    \tp_q^{\MSO}(A_t,\le)=\theta.    
  \end{align*}
\end{lemma}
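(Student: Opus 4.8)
The plan is to rerun the proof of Lemma~\ref{lem:btype} almost verbatim, exploiting one feature of a-nodes that lets us drop the vocabulary $\tau^\dbstar$ entirely: if $t$ is an a-node of an otxx $A$ with children $N_+(t)=\{u_1,\dots,u_d\}$, then $\sigma(t)=\beta(t)=\sigma(u_i)$ for every $i$, and the induced substructures $A[\{t\}\cup\sigma(t)]$ and $A_{u_i}[\{u_i\}\cup\sigma(u_i)]$ are all isomorphic — the relations $R_\sigma,R_\beta,R_\gamma,S_1,\dots,S_k$ and the restriction of $\preceq^{A^\dbstar}$ all relate $t$ and each $u_i$ to the set $\beta(t)=\sigma(t)$ in exactly the same way, and the within-bag order on $\beta(t)$ is seen identically from $t$ and from each $u_i$. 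Since $q\ge\operatorname{ar}(\tau)+1$ and the root of a (sub-)otxx is definable by a formula of quantifier rank $2\le q$, the type $\tp_q^\MSO(A_{u_i},\le)$ of a single child determines this common isomorphism type; in particular it determines the $\tau^\dbstar$-structure on $\beta(t)$ together with how $t$ and all the children attach to it. Moreover $\beta(t)=\sigma(u_i)\subseteq U(A_{u_i})$ for every $i$, so any subset of $\beta(t)$ a player might choose lies, simultaneously, inside every child's universe. Consequently the whole $\tau^\dbstar$-structure of the frame $A_{(t)}=A[\beta(t)\cup N_+(t)]$ is a definable function of the type partition of $N_+(t)$, and the only data about $A_{(t)}$ that is not so recovered is the set $N_+(t)$ together with the linear order $\le$ restricted to it (which, unlike the b-node case, is not intrinsic, since $\preceq^{A^\dbstar}$ leaves the children of an a-node incomparable).

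Second, I would set up the reduction as in the proof of Lemma~\ref{lem:btype}. Put $\Theta_i:=\TP^\MSO(\tau^\dbstar\cup\{\le\},q-i,i)$, with $\Theta_0=\Theta$, $m_0=m$, $\theta_{0j}=\theta_j$, and fix $q'$ large enough (the value $q'=c+\sum_{i=1}^q(1+m_i)$ for a suitable constant $c$ paying also for locating the $\le k$ elements of $\beta(t)$ inside a child). I would then prove the analogue of the Claim from Lemma~\ref{lem:btype}: for otxxs $A,B$ with compatible orders $\le^A,\le^B$, a-nodes $t\in V(T^A)$, $t'\in V(T^B)$ with type partitions $(P_{01},\dots,P_{0m_0})$ of $N_+(t)$ and $(Q_{01},\dots,Q_{0m_0})$ of $N_+(t')$, if $\tp_{q'}^\MSO((N_+(t),\le^A),P_{01},\dots,P_{0m_0})=\tp_{q'}^\MSO((N_+(t'),\le^B),Q_{01},\dots,Q_{0m_0})$, then $(A_t,\le^A)\equiv_q^\MSO(B_{t'},\le^B)$. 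The Duplicator strategy is the one from Lemma~\ref{lem:btype}: a move $P_{p+1}\subseteq U(A_t)$ of Spoiler is split into its trace on the frame and, for each child $u\in N_+(t)$, the type $\tp_{q-p-1}^\MSO(A_u,\le,P_1\cap U(A_u),\dots,P_{p+1}\cap U(A_u))$; Duplicator simulates the corresponding move in the reduced game on $(N_+(t),\le^A),(N_+(t'),\le^B)$ — now internally equipped with the fixed boundary structure, which by the first step is determined by the type partition — and reads off an answer $Q_{p+1}$ exactly as in the b-node proof. The inductive invariant and the final partial-isomorphism check are verified the same way; the only new point is that a Spoiler move landing in $\beta(t)$ or on the tree node $t$ is matched "inside a child'', using $\beta(t)=\sigma(u)$ and the already pinned-down boundary isomorphism type.

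Third, I would conclude as in Lemma~\ref{lem:btype}: by the claim, $\tp_q^\MSO(A_t,\le^A)$ depends only on $\tp_{q'}^\MSO((N_+(t),\le^A),P_1,\dots,P_m)$, so for $\theta\in\Theta$ I let $\atype_\theta(X_1,\dots,X_m)$ be the disjunction, over all $\theta'\in\TP^\MSO(\{\le\},q',m)$ for which $\tp_{q'}^\MSO((N_+(t),\le^A),P_1,\dots,P_m)=\theta'$ forces $\tp_q^\MSO(A_t,\le^A)=\theta$, of $\bigwedge_{\psi(X_1,\dots,X_m)\in\theta'}\psi(X_1,\dots,X_m)$. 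This is an $\MSO[\{\le\}]$-formula of quantifier rank $q'$ with the stated property.

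The main obstacle is the first step: spelling out carefully that the entire $\tau^\dbstar$-structure of the frame — the ternary within-bag order $R_\preceq$, the relations $S_1,\dots,S_k$, and the $E_T/R_\beta/R_\sigma/R_\gamma$-relations among $t,u_1,\dots,u_d$ and the vertices of $\beta(t)$ — is reconstructible from the type partition alone (via one arbitrarily chosen child, since all children and $t$ share the boundary $\beta(t)=\sigma(u_i)$), so that discarding the vocabulary $\tau^\dbstar$ really loses nothing and the reduced game may legitimately be played on $(N_+(t),\le)$. Once that is established, the rest is a routine rerun of the proof of Lemma~\ref{lem:btype}.
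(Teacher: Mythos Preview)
Your approach is exactly the ``straightforward adaptation'' of the proof of Lemma~\ref{lem:btype} that the paper invokes without further detail: rerun the composition argument with the reduced game now played on $(N_+(t),\le)$, using that for an a-node the bag $\beta(t)=\sigma(t)$ has size at most $k$ and its structure (together with how each child attaches to it) is determined by the child's type via the $S_j$-labeling, so that the only genuinely free datum left on the frame is the $\le$-order on $N_+(t)$. One small inaccuracy worth fixing: the substructures $A[\{t\}\cup\sigma(t)]$ and $A_{u_i}[\{u_i\}\cup\sigma(u_i)]$ are not literally isomorphic (for instance $t\in V_a$ while $u_i\in V_b$, and $R_\preceq$ at $t$ records $\preceq_t$ rather than any $\preceq_{u_i}$), but what your argument actually uses --- that the frame on $\{t\}\cup\beta(t)$ has size at most $k+1$ and can therefore be absorbed into a finite case distinction --- is correct and suffices.
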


\subsection{Order-Invariant Type Compositions}
\label{sec:inv-composition}

Recall from Section~\ref{sec:games-types} the definition of order-invariant
types and the characterization of order-invariant equivalence that we gave in
Lemma~\ref{lem:oitypes}. We continue to adhere to the assumptions made in the
previous subsections (otxx have segmented tree decompositions of adhesion at
most $k$, $q$ is sufficiently large, and 
$\TP^\MSO(\tau^\dbstar \cup \{\leq\},q) = \Theta=\{\theta_1,\ldots,\theta_m\}$) and use the
same notation.

Recall that, since $q$ is sufficiently large and the class of otxxs is
\MSO-definable, if $A$ is an otxx and $A'\equiv_q^\MSO A$ then $A'$ is an
otxx. This implies that if $A\equiv_q^\OIMSO A'$, then all structures appearing
in a sequence witnessing this equivalence
(cf.~Lemma~\ref{lem:oitypes}\eqref{item:witness}) are otxxs. The same is true
for sub-otxxs. However, it is not clear that all linear orders appearing in such
a witnessing sequence are compatible. In other words, it is not clear that order
invariance on otxxs coincides with invariance with respect to all compatible
orders. For this reason, we need to introduce a finer equivalence relation
$\equiv_{co}$, \emph{compatible-order equivalence}. For two sub-otxx $A,A'$, we
let $A\equiv_{co} A'$ if there is a sequence $A_0,\ldots,A_\ell$ of sub-otxxs
and compatible linear orders $\le_i,\le_i'$ of $A_i$ such that $A=A_0$ and
$A'=A_\ell$ and $(A_{i-1},\le_{i-1})\equiv_q^{\MSO}(A_i,\le_i')$ for all
$i\in[\ell]$. Then clearly $A\equiv_{co} A'$ implies $A\equiv_q^\OIMSO A'$. The
converse holds as well, because from an arbitrary linear order we can define a
compatible linear order, but this is not important for us.

Let us call a type $\theta\in\Theta$ \emph{realizable} if there is a sub-otxx
$A$ and a compatible linear order $\le$ of $A$ with
$\tp_q^\MSO(A,\le)=\theta$. We call $(A,\le)$ a \emph{realization} of $\theta$.
Two types $\theta,\theta'\in\Theta$ are \emph{compatible-order equivalent} (we
write $\theta\equiv_{co}\theta'$) if there are realizations $(A,\le)$ of
$\theta$ and $(A',\le')$ of $\theta'$ such that $A\equiv_{co}A'$. Then
$\equiv_{co}$ is an equivalence relation on the set of realizable types. We
denote the equivalence class of a type $\theta\in\Theta$ by
$\angles{\theta}_{co}$. Clearly, we have
$\angles{\theta}_{co}\subseteq\angles{\theta}$.

Now let $A$ be an otxx and $t\in V(T^A)$. We call a set $\Theta'\subseteq\Theta$
\emph{compatible at $t$} if there is a compatible linear order $\le$ of $U(A_t)$
such that $\theta:=\tp_q^\MSO(A_t,\le)\in\Theta'$ and
$\Theta'\subseteq\angles{\theta}_{co}$. Note that this implies that all
$\theta'\in\Theta'$ are realizable.

A \emph{cover} of a set $N$ is a sequence $(P_1,\ldots,P_m)$ of subsets of $N$
such that $\bigcup_{i=1}^m P_i=N$. For an otxx~$A$ and node $t\in V(T^A)$, we
call a cover $(P_1,\ldots,P_m)$ of $N_+(t)$ \emph{compatible} if for all $u\in
N_+(t)$ the set $\{\theta_i\mid i\in[m]\text{ such that }u\in P_i\}$ is
compatible at $u$. Observe that if $(P_1,\ldots,P_m)$ is the type partition of
$N_+(t)$ with respect to some compatible linear order, then $(P_1,\ldots,P_m)$
is a compatible cover.

\begin{lemma}[Order-invariant type composition at b-nodes]
  \label{lem:oibtype}
  For every $\theta\in\Theta$ there is an
  $\MSO[\tau^\dbstar]$-formula $\oibtype_\theta(X_1,\ldots,X_m)$ such that for
  every otxx $A$, every b-node $t\in V(T^A)$, and every compatible cover
  $(P_1,\ldots,P_m)$ of $N_+(t)$, the set of all $\theta\in\Theta$ with
  $A_{(t)}\models\oibtype_{\theta}(P_1,\ldots,P_m)$ is
  compatible at $t$.
\end{lemma}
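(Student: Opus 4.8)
The goal is to build, for each $\theta\in\Theta$, an $\MSO[\tau^\dbstar]$-formula $\oibtype_\theta$ that, when evaluated on $A_{(t)}$ with a compatible cover $(P_1,\ldots,P_m)$ of $N_+(t)$, singles out a compatible-at-$t$ set of types. The natural idea is to piggyback on $\btype_\theta$ from Lemma~\ref{lem:btype}, but one has to be careful: $\btype_\theta$ is only guaranteed to behave well when the input is a genuine \emph{type partition} arising from an actual compatible order, whereas we are only handed a compatible cover, which may assign several indices to the same child. The strategy is therefore: (1) observe that from a compatible cover one can, at the level of individual children, ``select'' a type for each child so that the selected types still come from compatible orders, and the overall selected tuple is a type partition; (2) run $\btype$ on that refined partition; (3) take the union over all legal selections.

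First I would make precise what happens at a single child. Fix a compatible cover $(P_1,\ldots,P_m)$ of $N_+(t)$. By definition of ``compatible cover,'' for each $u\in N_+(t)$ the set $\Theta_u:=\{\theta_i\mid u\in P_i\}$ is compatible at $u$, meaning there is a compatible linear order $\le_u$ of $A_u$ with $\tp_q^\MSO(A_u,\le_u)\in\Theta_u$ and $\Theta_u\subseteq\angles{\tp_q^\MSO(A_u,\le_u)}_{co}$. Picking such a $\le_u$ for every child and assembling the $\le_u$ together with the ambient partial order $\preceq$ gives a compatible linear order $\le$ of all of $A_t$ (here I use that $t$ is a b-node, so $\preceq$ is already linear on $U(A_{(t)})$, and that cones of distinct children are disjoint apart from shared separators, which are linearly ordered by $\preceq$; the orders on the children extend this consistently). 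Let $(P_1',\ldots,P_m')$ be the type partition of $N_+(t)$ induced by this $\le$. Then for every child $u$ the unique index $i$ with $u\in P_i'$ satisfies $\theta_i=\tp_q^\MSO(A_u,\le_u)$, hence $\theta_i\in\Theta_u$, i.e.\ $u\in P_i$; so $(P_1',\ldots,P_m')$ is a \emph{refinement} of the cover in the sense that $P_i'\subseteq P_i$ for all $i$ (after an appropriate choice of representative index — one may need to argue that each $P_i'$ can be matched into $P_i$, which follows because the selected type of $u$ lies in $\Theta_u$). By Lemma~\ref{lem:btype}, $A_{(t)}\models\btype_\theta(P_1',\ldots,P_m')$ iff $\tp_q^\MSO(A_t,\le)=\theta$.

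Next I would define $\oibtype_\theta(X_1,\ldots,X_m)$ to say: ``there exist sets $Y_1,\ldots,Y_m$ forming a partition of $\bigcup_i X_i$ with $Y_i\subseteq X_i$ for all $i$, such that for every $u$ in this set the singleton situation is \emph{locally compatible} — that is, the index $i$ with $u\in Y_i$ and the set $\{j\mid u\in X_j\}$ witness $\theta_i\in$ a compatible-at-$u$ subset — and $\btype_\theta(Y_1,\ldots,Y_m)$ holds.'' The local-compatibility condition is a property of the single sub-otxx $A_u$, which by Lemma~\ref{lem:def-otxx} and standard definability of types is $\MSO[\tau^\dbstar]$-expressible relative to $A_{(t)}$ (one quantifies a child $u$, considers $A_u$ as a definable substructure inside $A_t$, and checks membership in a finite list of ``realizable'' patterns precomputed from $\Theta$ and $\equiv_{co}$). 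Taking the disjunction-free existential form, $\oibtype_\theta$ is an $\MSO[\tau^\dbstar]$-formula, as required. For correctness: if $(P_1,\ldots,P_m)$ is a compatible cover and $A_{(t)}\models\oibtype_\theta(P_1,\ldots,P_m)$, the witnessing partition $(Y_i)$ together with the locally compatible orders on the children assembles (as above) into a compatible linear order $\le$ of $A_t$ with $\tp_q^\MSO(A_t,\le)=\theta$, and the set $\Theta_t$ of all $\theta$ for which the formula holds is exactly $\{\tp_q^\MSO(A_t,\le')\mid\le'\text{ compatible},\ \text{type partition of }N_+(t)\text{ refines the cover}\}$; since all such $\le'$ are related by replacements of children, which preserve $\equiv_{co}$, this whole set lies in a single $\angles{\cdot}_{co}$, i.e.\ it is compatible at $t$.

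The main obstacle, and where the real care is needed, is proving the two-way link between ``compatible cover'' and ``the finite family of type partitions obtained by refining it,'' specifically that every such refinement is realized by some compatible order \emph{and conversely} that nothing outside $\angles{\theta}_{co}$ sneaks in. The forward direction is the assembly-of-orders argument sketched above; the subtle converse uses that two different compatible-order refinements of the same cover differ only by replacing sub-otxxs $A_u$ by $\equiv_{co}$-equivalent ones, so the resulting $A_t$-types are $\equiv_{co}$-equivalent, hence in one class $\angles{\theta}_{co}$. Making the ``replacement preserves $\equiv_{co}$'' step precise — for which one invokes the replacement operation on ordered otxxs from Section~\ref{sec:segmented} together with the composition in Lemma~\ref{lem:btype} — is the technical heart; everything else is bookkeeping over the finite set $\Theta$.
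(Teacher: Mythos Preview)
Your formula adds a ``local compatibility'' clause that is both unnecessary and, more seriously, not expressible over $A_{(t)}$: the structure $A_{(t)}=A[\beta(t)\cup N_+(t)]$ contains of the child $u$ only the node itself and its separator $\sigma(u)\subseteq\beta(t)$, not the sub-otxx $A_u$, so you cannot ``consider $A_u$ as a definable substructure inside $A_t$'' from within $A_{(t)}$. The paper's formula is simply $\exists Y_1\cdots\exists Y_m\bigl(\phi(\bar X,\bar Y)\wedge\btype_\theta(\bar Y)\bigr)$, where $\phi$ says that the $Y_i$ partition $\bigcup_i X_i$ with $Y_i\subseteq X_i$. No local check is needed: for a compatible cover, any refining partition automatically assigns to each $u$ some $\theta_i\in\Theta^u$, and $\Theta^u$ is compatible at $u$ by hypothesis.

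The substantive gap is in your correctness argument. You assert that if $A_{(t)}\models\oibtype_\theta(P_1,\ldots,P_m)$ via a witnessing partition $(Y_i)$, then the selected type $\theta_u'$ of each child is realized by $(A_u,\le_u)$ for some compatible $\le_u$, and assembling these yields a compatible order $\le$ on $A_t$ with $\tp_q^\MSO(A_t,\le)=\theta$. This is false: ``compatible at $u$'' only guarantees $\theta_u'\in\angles{\theta_u}_{co}$ for \emph{some} $\theta_u$ realized by $A_u$; the particular $\theta_u'$ may be realized only by a \emph{different} sub-otxx $A_u'$. Hence your description of $\Theta^t$ as $\{\tp_q^\MSO(A_t,\le')\mid\le'\text{ compatible, type partition refines the cover}\}$ is wrong---$\Theta^t$ can be strictly larger. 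The paper repairs this by \emph{replacing} each $A_u$ with a realization $A_u'$ of $\theta_u'$, obtaining a sub-otxx $A'$ with $A'_{(t)}=A_{(t)}$; Lemma~\ref{lem:btype} applied to $A'$ then gives $\tp_q^\MSO(A',\le')=\theta$, so $\theta$ is realizable. The inclusion $\Theta^t\subseteq\angles{\theta_t}_{co}$ is obtained by assembling, for each child, a witnessing sequence for $\theta_u\equiv_{co}\theta_u'$ into a global witnessing sequence for $\theta_t\equiv_{co}\theta$, invoking Lemma~\ref{lem:btype} at every step. You name replacement in your final paragraph but never deploy it at the point where it is actually required.
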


The idea of the proof is that within the
structure $A_{(t)}$ we can quantify over the possible type partitions
of the children (they are just collections of sets) and then apply
Lemma~\ref{lem:btype} to each of them individually.

\begin{proof}[Proof of Lemma~\ref{lem:oibtype}]
  Let $\phi(X_1,\ldots,X_m,Y_1,\ldots,Y_m)$ be an \MSO-\-formula
  stating that $Y_i\subseteq X_i$ for all $i$, that the $Y_i$ are
  mutually disjoint, and that $\bigcup_i Y_i=\bigcup_i X_i$. 
  We let 
  \begin{align*}
    \oibtype_\theta(X_1,\ldots,X_m):= \exists Y_1\ldots\exists
    Y_m\big(\phi(X_1,\ldots,X_m,Y_1,\ldots,Y_m)\wedge\btype_\theta(Y_1,\ldots,Y_m)\big)\, . 
  \end{align*}
  Let $A$ be an otxx, $t\in V(T^A)$ a b-node, and 
  $(P_1,\ldots,P_m)$ a compatible cover of $N_+(t)$.
  Let $\Theta^t$ be the set of all $\theta$ such that
  $A_{(t)}\models\oibtype_{\theta}(P_1,\ldots,P_m)$. We need to prove
  that $\Theta^t$ is compatible at $t$.
  
  For every $u\in N_+(t)$, let $\Theta^u:=\{\theta_i\mid
  i\in[m]\text{ such that }u\in P_i\}$. As the cover
  $(P_1,\ldots,P_m)$ is compatible, for all $u$ the set $\Theta^u$ is
  compatible at $u$. Thus there is a
  $\theta_u\in\Theta^u$ and a compatible linear order $\le_u$ of
  $A_u$ such that
  $\theta_u=\tp_q^\MSO(A_u,\le_u)$ and $\Theta^u\subseteq\angles{\theta_u}_{co}$. Let $\le$ be the (unique)
  compatible linear order of $A_t$ such that for all $u\in
  N_+(t)$, the restriction of $\le$ to
  $U(A_u)$ is $\le_u$. For every $i\in[m]$, let $Q_i$ be the
  set of all $u\in N_+(t)$ such that $\theta_u=\theta_i$.  Then
  $(Q_1,\ldots,Q_m)$ is a partition of $N_+(t)$ that refines the
  cover $(P_1,\ldots,P_m)$.

  Let $\theta_t:=\tp_q^\MSO(A_t,\le)$. By Lemma~\ref{lem:btype}, we have
  $A_{(t)}\models\btype_{\theta^t}(Q_1,\ldots,Q_m)$ and, thus,
  $A_{(t)}\models\oibtype_{\theta^t}(Q_1,\ldots,Q_m)$. Hence
  $\theta_t\in\Theta^t$.

  We claim that $\Theta^t\subseteq\angles{\theta_t}_{co}$. Let
  $\theta\in\Theta^t$. We first prove that $\theta$ is realizable. Since we have 
  $A_{(t)}\models\oibtype_\theta(P_1,\ldots,P_m)$, there is a partition
  $(Q'_1,\ldots,Q'_m)$ of $N_+(t)$ that refines the cover $(P_1,\ldots,P_m)$ such
  that 
  \begin{equation}\label{eq:oib1}
    A_{(t)}\models\btype_{\theta}(Q'_1,\ldots,Q'_m).
  \end{equation}
  For each $u\in N_+(t)$,
  let $\theta_u':=\theta_i$ for the unique $i$ such that $u\in Q'_i$. Then
  $\theta_u'\in\Theta^u$, and thus $\theta_u'$ is realizable.  Let $(A_u',\le_u')$
  be a realization of $\theta_u'$.

  Let $A'$ be the sub-otxx obtained from $A_t$ by simultaneously replacing the
  sub-otxx $A_u$ by the sub-otxx $A_u'$ for all $u\in N_+(t)$ (see
  page~\pageref{page:replacing} for a description of the replacement
  operation). As $\theta_u'\in\Theta^u\subseteq\angles{\theta_u}_{co}
  \subseteq\angles{\theta_u}$, 
  we have $A_u\equiv_q^\MSO A_u'$ and thus the induced substructures
  $A[\{u\}\cup\sigma^A(u)]$ and $A_u'[\{u'\}\cup\sigma^{A_u'}(t')]$,
  where $u'$ is the root of $A_u'$, are isomorphic, and the
  replacement is possible. (We will use similar arguments about
  replacements below without mentioning them explicitly.)  Let $\le'$
  be the (unique) compatible linear order of $A'$ such that for all
  $u\in N_+(t)$, the restriction of $\le'$ to $U(A_u')$ is
  $\le_u'$. Note that $(A'_{(t)},\le')=(A_{(t)},\le)$, because the
  linear orders $\le$ and $\le'$ both coincide with $\preceq^A$ on
  $U(A_{(t)})$. Thus by \eqref{eq:oib1},
  $A'_{(t)}\models\btype_{\theta}(Q'_1,\ldots,Q'_m)$, and by
  Lemma~\ref{lem:btype}, $\tp_q^\MSO(A',\le')=\theta$. Thus $\theta$
  is realizable.

  It remains to prove that $\theta_t\equiv_{co}\theta$. For each $u\in N_+(t)$, we
  have
  $\tp_q^\MSO(A_u,\le_u)=\theta_u\equiv_{co}\theta_u'=\tp_q^\MSO(A'_u,\le'_u)$. Thus
  there is a sequence $A_{u0},\ldots,A_{u\ell}$ of sub-otxxs and for
  each $i$ two compatible
  linear orders $\le_{ui},\le_{ui}'$ of $A_{ui}$ such that
  $(A_{u0},\le_{u0})=(A_u,\le_u)$ and
  $(A_{u\ell},\le_{u\ell})=(A_u',\le_u')$ and 
  \begin{align*}
    \tp_q^\MSO(A_{u(i-1)},\le_{u(i-1)}')=\tp_q^\MSO(A_{ui},\le_{ui})    
  \end{align*}
  for all $i\in[\ell]$. As we do not require
  the $A_{ui}$ and the orders $\le_{ui},\le_{ui}'$ to be
  distinct, we may assume without loss of generality that the
  sequences have the same length $\ell$ for all $u$. Let $A_i$ be the structure obtained
  from $A_t$ by simultaneously replacing $A_u$ by $A_{ui}$
  for all $u\in N_+(t)$. Define linear orders $\le_i,\le_i'$ of
  $A_i$ from the orders
  $\le_{ui}',\le_{ui}$ and $\preceq^A$ in the usual way. The resulting sequence of
  structures and orders witnesses 
  $\theta_t=\tp_q^\MSO(A_t,\le)\equiv_{co}\tp_q^\MSO(A',\le')=\theta$. To
  prove this, we apply Lemma~\ref{lem:btype} at every step.
\end{proof}

% A \emph{word structure} is a structure $A$ of a vocabulary
% $\{\le,R_1,\ldots,R_n\}$, for a binary $\le$ and unary $R_i$,
% such that $\le(A)$ is a linear order on the universe of $A$ and the
% $R_i(A)$ form a partition of the universe.

% \begin{lemma}\label{lem:inv-count}
%   Let $\sigma=\{\le,R_1,\ldots,R_n\}$, for a binary $\le$ and unary $R_i$, and 
%   $\phi\in\MSO[\sigma]$. Then there is a
%   $\CMSO[\{R_1,\ldots,R_n\}]$-formula $\phi'$ such that for all sets
%   $U$ and all partitions $(P_1,\ldots,P_n)$ of $U$ we have
%   $(U,P_1,\ldots,P_n)\models\phi'$ if and only if there is a linear
%   order $\le$ on $U$ such that $(U,\le,P_1,\ldots,P_n)\models\phi$.
% \end{lemma}

% \begin{proof}
%   It follows from the Pumping Lemma of automata theory and the
%   connection between $\MSO$ on words and finite automata that there
%   are constants $k,\ell,m$ such that the following
%   holds.
%   \begin{eroman}
%   \item If there is a word structure satisfying $\phi$, then there
%     is one of size at most $k$.
%   \item Suppose there is a word structure $(U,\le,P_1,\ldots,P_n)$
%     satisfying $\phi$ with $p_i=|P_i|$. Let $p_i'\in\NN$ such that
%     $p_i'=p_i$ if $p_i\le\ell$ and $p_i'\equiv p_i\bmod{m}$ if
%     $p_i>\ell$. Then there is a $(U',\le',P_1',\ldots,P_n')$
%     satisfying $\phi$ with $p_i'=|P_i'|$.
%   \end{eroman}
%   The assertion of the lemma follows easily.
% \end{proof}

\begin{lemma}[Order-invariant type composition at a-nodes]
  \label{lem:oiatype}
  For every $\theta\in\Theta$ there is an $\CMSO[\emptyset]$-formula
  $\oiatype_\theta(X_1,\ldots,X_m)$ such that for every otxx $A$, every a-node
  $t\in V(T^A)$, and every compatible cover $(P_1,\ldots,P_m)$ of $N_+(t)$, the
  set of all $\theta\in\Theta$ with
  $(N_+(t))\models\oiatype_{\theta}(P_1,\ldots,P_m)$ is compatible at $t$.
\end{lemma}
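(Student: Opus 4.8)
The plan is to mirror the proof of Lemma~\ref{lem:oibtype}, replacing the b-node composition (Lemma~\ref{lem:btype}) by the a-node composition (Lemma~\ref{lem:atype}). The one essential difference is that $\atype_\theta$ has vocabulary $\{\le\}$: at an a-node it reads $\tp_q^\MSO(A_t,\le)$ off the type partition of $N_+(t)$ \emph{together with the restriction of the compatible order $\le$ to $N_+(t)$}. Since $\MSO$ (and $\CMSO$) cannot existentially quantify over a binary relation, we cannot define $\oiatype_\theta$ simply by ``there is a linear order of $N_+(t)$ satisfying $\atype_\theta$''. What rescues us is, first, that the order on the children is irrelevant up to $\equiv_{co}$, and second, that the notion ``compatible at $t$'' is coarse enough that it suffices for $\oiatype_\theta$ to describe a whole $\equiv_{co}$-class; that class turns out to be recoverable from $(N_+(t),X_1,\dots,X_m)$ by counting colour classes up to a threshold \emph{and modulo a fixed number} --- which is exactly what $\CMSO$ over the empty vocabulary can do and plain $\MSO$ cannot.

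The first ingredient is almost free: for a fixed sub-otxx $A$ rooted at an a-node, any two compatible orders $\le,\le'$ of $A$ satisfy $\tp_q^\MSO(A,\le)\equiv_{co}\tp_q^\MSO(A,\le')$, because $(A,\le)$ and $(A,\le')$ realize these two types while $A\equiv_{co}A$ by reflexivity. More generally, if two sub-otxxs rooted at a-nodes carry the same multiset of $q$-types on their children (under suitable compatible child orders), they are $\equiv_{co}$-equivalent: replace the children one at a time by the replacement operation of page~\pageref{page:replacing}, each replacement being a $\equiv_{co}$-step because a child and its replacement have equal (or, more generally, $\equiv_{co}$-equivalent) $q$-type, and Lemma~\ref{lem:atype} propagates equal child types to equal types of the whole sub-otxx, hence child-level $\equiv_{co}$-sequences to whole-sub-otxx $\equiv_{co}$-sequences. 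Thus the $\equiv_{co}$-class of $\tp_q^\MSO(A_t,\le)$ depends on $A_t$ only through, for each realizable type $\theta_i$, the number $n_i$ of children realizing $\theta_i$. The second ingredient is an \emph{a-node pumping lemma}: there are a threshold $N$ and a modulus $M$, depending only on $\tau,k,q$, such that this dependence factors through $\bigl(\min(n_i,N)\bigr)_i$ and $(n_i\bmod M)_i$. This refines Lemma~\ref{lem:atype} via the standard fact that the $\MSO_q$-theory of a linearly ordered coloured structure built by repeating a block is eventually periodic in the number of copies, uniformly; the modulus $M$ (rather than ``eventually constant'') is exactly what forces $\CMSO$ rather than $\MSO$ in the statement, since a linear order lets one count modulo.

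Given these, I would set
\[
\oiatype_\theta(X_1,\dots,X_m)\ :=\ \exists Y_1\cdots\exists Y_m\Bigl(\phi(X_1,\dots,X_m,Y_1,\dots,Y_m)\ \wedge\ \Psi_\theta(Y_1,\dots,Y_m)\Bigr),
\]
where $\phi$ is the formula from the proof of Lemma~\ref{lem:oibtype} expressing that $(Y_i)$ is a partition refining the cover $(X_i)$, and $\Psi_\theta$ is a finite Boolean combination of conditions ``$|Y_i|=j$'' for $j<N$ and ``$|Y_i|\equiv r\pmod M$'' (both $\CMSO[\emptyset]$-expressible) that, by the pumping lemma, holds exactly when $\theta$ lies in the $\equiv_{co}$-class produced by a sub-otxx rooted at an a-node whose children realize the types indicated by $(Y_i)$ with the indicated multiplicities (and is $\false$ if some nonempty $Y_i$ points to a non-realizable $\theta_i$). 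The verification is then a transcription of the proof of Lemma~\ref{lem:oibtype}: from a compatible cover $(P_1,\dots,P_m)$ of $N_+(t)$, for each child $u$ pick $\theta_u\in\Theta^u$ and a compatible order $\le_u$ of $A_u$ with $\tp_q^\MSO(A_u,\le_u)=\theta_u$ and $\Theta^u\subseteq\angles{\theta_u}_{co}$; assemble any compatible order $\le$ of $A_t$ extending the $\le_u$; put $\theta_t:=\tp_q^\MSO(A_t,\le)$; the partition $(Q_i)$ with $Q_i=\{u\mid\theta_u=\theta_i\}$ refines the cover and witnesses $\theta_t\in\Theta^t$, while $\theta_t$ is realized by the compatible order $\le$ of $A_t$; and for any $\theta\in\Theta^t$, using its witnessing partition, realizations of the types it assigns to the children, the replacement operation, the pumping lemma, and order-irrelevance up to $\equiv_{co}$, one exhibits a $\equiv_{co}$-witnessing sequence from $\theta_t$ to $\theta$, so $\Theta^t\subseteq\angles{\theta_t}_{co}$ and $\Theta^t$ is compatible at $t$.

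The main obstacle is the a-node pumping lemma, and within it the quantitative uniformity: that $N$ and $M$ can be chosen depending only on $\tau$, $k$ and $q$, together with the clean separation of ``the child order is irrelevant up to $\equiv_{co}$'' from ``multiplicities above $N$ matter only modulo $M$''. Once this is in place, the shape of $\oiatype_\theta$ and its verification follow the pattern of the b-node case.
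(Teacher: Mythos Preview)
Your proposal is correct and arrives at the same overall architecture as the paper: define $\oiatype_\theta$ as an existential quantification over partitions refining the cover, conjoined with a $\CMSO[\emptyset]$ condition on the refinement, and then verify compatibility of $\Theta^t$ at $t$ by the two-part argument (``$\theta_t\in\Theta^t$'' and ``$\Theta^t\subseteq\angles{\theta_t}_{co}$'') transcribed from Lemma~\ref{lem:oibtype}.

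The genuine difference is in how the inner $\CMSO$ formula is produced. The paper forms the order-invariant closure $\chi^2_\theta$ of $\atype_\theta$ (a disjunction over all $\MSO$-types in the order-invariant class of some type containing $\chi^1_\theta$), observes that $\chi^2_\theta$ is an order-invariant $\MSO$-sentence over the purely monadic vocabulary $\{X_1,\ldots,X_m\}$, and then invokes Courcelle's theorem that $\OIMSO=\CMSO$ on structures with only monadic relations \cite[Corollary~4.3]{Courcelle1996} as a black box to obtain $\chi^3_\theta$. Two claims then establish that $\chi^3_\theta$ does the job, using Lemma~\ref{lem:oitypes} to turn an $\OIMSO$-equivalence of colored-set structures into a $\equiv_{co}$-witnessing sequence on the assembled sub-otxxs. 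You instead re-derive in situ the needed special case of Courcelle's result, via eventual periodicity of $\tp_{q_1}^\MSO(\theta_i^{n})$ in $n$ (uniformly in $i$) from the finite-monoid picture of $\MSO$ on words; this directly yields your $\Psi_\theta$ as a Boolean combination of threshold and modular counts of the $|Y_i|$.

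What each approach buys: the paper's route is shorter and modular, reusing a known theorem and keeping the $\equiv_{co}$-reasoning localized in the two claims; yours is more self-contained and makes explicit why modular counting (hence $\CMSO$ rather than $\MSO$) is exactly what is forced. Two small points you pass over but which present no obstacle: first, at an a-node the partial order $\preceq$ imposes no constraint among the children, so you may indeed arrange children of equal type into contiguous blocks to apply single-letter pumping; second, the ``bottom structure'' $A[\{u\}\cup\sigma(u)]$ is determined by $\theta_i$ (by the choice of $q$), so the mere pattern of which $Y_i$ are nonempty already decides whether a realizing a-node sub-otxx exists, making the realizability clause in $\Psi_\theta$ a finite case distinction.
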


Here $(N_+(t))$ denotes the $\emptyset$-structure with universe $N_+(t)$.  Note
that, as opposed to the formula $\atype_\theta$ of Lemma~\ref{lem:atype}, the
formula $\oiatype_\theta$ has an empty vocabulary. Thus, the condition expressed
by this formula no longer depends on the arbitrarily chosen compatible linear
order.  The proof builds on the ideas developed in the previous proofs and, in
addition, crucially depends on the fact that \OIMSO\ coincides with \CMSO\ on
set structures, which only have monadic relations.

\begin{proof}[Proof of Lemma~\ref{lem:oiatype}]
  Let $\theta\in\Theta$. 
  % Without loss of generality we may assume that
  % $\theta$ is realizable.
  We may view the $\MSO$-formula $\atype_\theta(X_1,\ldots,X_m)$ as an
  $\MSO$-sentence of vocabulary 
  $
  \sigma:=\{\le,X_1,\ldots,X_m\},
  $
  where we interpret the $X_i$ as unary relation symbols. Let
  $\chi_\theta^1$ be the conjunction of this sentence with a sentence
  saying that $\le$ is a linear order and the $X_i$ partition the
  universe. Then all models of $\chi_\theta^1$ are proper word
  structures. Let $q_1$ be an upper bound for the quantifier rank of
  the formulas $\chi_{\theta'}^1$ for $\theta'\in\Theta$. Let
  $\Xi:=\TP^\MSO(\sigma,q_1)$, and for each $\xi\in\Xi$, let
  $\angles{\xi}$ be the order-invariant type that contains $\xi$. Now
  let $\xi_1,\ldots,\xi_\ell$ be all $\xi\in\Xi$ that contain
  $\chi_\theta^1$, and let
  \begin{align*}
    \chi_\theta^2:=\bigvee_{i=1}^\ell\bigvee_{\xi\in\angles{\xi_i}}\bigwedge_{\phi\in\xi}\phi\, .    
  \end{align*}
  Then $\chi_\theta^2$ is order-invariant; we may view it has the ``best
  order-invariant approximation'' of $\chi_\theta^1$. The sentence $\chi_\theta^2$
  is over the vocabulary of words, but is invariant with respect to the ordering
  underlying the word. In other words, it is an order-invariant formula of
  vocabulary $\{X_1,\ldots,X_m\}$ and, thus, equivalent to a $\CMSO$-sentence
  $\chi_\theta^3$ over the same vocabulary~\cite[Corollary
  4.3]{Courcelle1996}.

  We view $\chi_\theta^3=\chi_{\theta}^3(X_1,\ldots,X_m)$ as a $\CMSO$-formula
  of empty vocabulary with free variables $X_1,\ldots,X_m$.

  Let $\Theta_\theta$ be the set of all
  $\theta'\in\Theta$ such that the following holds: there is an otxx
  $A'$, an a-node $t'\in V(T^{A'})$, and a compatible linear order
  $\le'$ of $A'$ such that $(N_+(t'))\models
  \chi_{\theta}^3(P_1',\ldots,P_m')$ for the type partition
  $(P_1',\ldots,P_m')$ of $N_+(t')$
  and
  $\tp_q^\MSO(A'_{t'},\le')=\theta'$.
  Then trivially, all $\theta'\in\Theta_{\theta}$ are
  realizable. 

  \begin{claim}\label{cl:oia2}
    If\/ $\Theta_{\theta}\neq\emptyset$, then $\theta$ is realizable and
    $\theta\in\Theta_{\theta}$ and
    $\Theta_{\theta}\subseteq\angles{\theta}_{co}$.% \martin{Das ist
      % hier ein Stück komplizierter geworden, weil ich die nicht
      % wirklich gerechtfertigte Annahme ``$\theta$ realisable"
      % rausgenommen habe.}

    \proof
    Let $\theta'\in\Theta_{\theta}$. Let $A'$ be an otxx, $t'\in
    V(T^{A'})$ an a-node, 
    $\le'$ a compatible linear order of $A'$, and $(P_1',\ldots,P_m')$
    the type partition of $N_+(t')$ such that 
    $(N_+(t'))\models \chi_{\theta}^3(P_1',\ldots,P_m')$ and
    $\tp_q^\MSO(A'_{t'},\le')=\theta'$. Then
    $(N_+(t'),\le')\models \chi_{\theta}^2(P_1',\ldots,P_m')$. 
    Hence there is a $(N,\le)$ and a partition $P_1,\ldots,P_m$ of $N$
    such that 
    \begin{align*}
      (N,\le,P_1,\ldots,P_m)\equiv_{q_1}^\OIMSO(N_+(t'),\le',P_1',\ldots,P_m')      
    \end{align*}
    and $(N,\le,P_1,\ldots,P_m)\models\chi_\theta^1$. Equivalently, we have 
    $(N,\le)\models\atype_{\theta}(P_1,\ldots,P_m)$.

    By Lemma~\ref{lem:oitypes}, there is an $\ell\in\NN$ and for $0\le
    i\le\ell$ sets
    $N_i$, partitions $(P_{i1},\ldots,P_{im})$ of $N_i$, and linear
    orders $\le_i,\le_i'$ of $N_i$ such that
    $(N_0,\le_0,P_{01},\ldots,P_{0m})=(N,\le,P_1,\ldots,P_m)$ and
    $(N_\ell,\le_\ell',P_{\ell 1},\ldots,P_{\ell
      m})=(N_+(t'),\le',P_1',\ldots,P_m')$ and
    $(N_{i-1},\le_{i-1}',P_{(i-1)1},\ldots,P_{(i-1)m})\equiv_{q_1}^\MSO(N_i,\le_i,P_{i1},\ldots,P_{im})$. 

    We let $A^\ell:=A'_{t'}$ and $t_\ell:=t'$, and for $0\le i<\ell$
    we build a sub-otxx $A^i$ as follows: we take a fresh node $t_i$,
    which will be the root of the tree $T^{A^i}$. We make
    $N_+(t_i):=N_i$ the set of children of $t_i$. The node $t_i$ will
    be an a-node in $A^i$. We let
    $\beta^{A^i}(t_i):=\beta^{A'}(t')$. For each $u\in N_i$, say, with
    $u\in P_{ij}$, we take some $u'\in P_j'$. Note that $P_j'$ is
    nonempty, because $P_{ij}$ is nonempty and 
    $(N_i,P_{i1},\ldots,P_{im})\equiv_{q_1}^\MSO(N_+(t'),P_1',\ldots,P_m')$.
    Then we take a copy
    $A^i_{u}$ of $A'_{u'}$ and identify the
    copy of $u'$ with $u$ and the copy of $\sigma^{A'}(u')$ with the
    corresponding elements in $\beta^{A^i}(t_i)=\beta^{A'}(t')$.
    We define two compatible orders $\le_i,\le_i'$ on $A^i$ that
    extend the corresponding orders on $N_i$ and coincide with the
    linear order induced by $\le'$ on the copies of the sub-otxxs
    $A_{u'}'$ that we used to build $A^i$.

    Then for $0\le i<\ell$, all $j\in[m]$, and all $u\in N_i$, if
    $u\in P_{ij}$ then $(A^i_u,\le_i)$ and $(A^i_u,\le_i')$ are copies
    of $(A'_{u'},\le')$ for some $u'\in P_j'$, and hence 
    $\tp_q^\MSO(A^i_u,\le_i)=\tp_q^\MSO(A^i_u,\le_i')=\tp_q^\MSO(A'_{u'},\le')=\theta_j$.      
    Since $(N_{i-1},\le_{i-1}',P_{(i-1)1},\ldots,P_{(i-1)m})\equiv_{q_1}^\MSO (N_i,\le_i,P_{i1},\ldots,P_{im})$,        
    it follows from Lemma~\ref{lem:atype} that we have
    $\tp_q^\MSO(A^{i-1},\le_{i-1}')=\tp_q^\MSO(A^i,\le_i)$ for all
    $i$. Moreover, as we have 
    $(N_0,\le_0)\models\atype_{\theta}(P_{01},\ldots,P_{0m})$, again
    by Lemma~\ref{lem:atype} we have
    $\tp_q^\MSO(A^0,\le_{0})=\theta$. 

    This implies that $\theta$ is realizable and that
    $\theta\equiv_{co}\theta'$, or equivalently,
    $\theta'\in\angles{\theta}_{co}$. As this holds for all
    $\theta'\in\Theta_{\theta}$, we have
    $\Theta_{\theta}\subseteq\angles{\theta}_{co}$.  We have
    $\theta\in\Theta_{\theta}$ because
    $(N_0,\le_0)\models\atype_{\theta}(P_{01},\ldots,P_{0m})$ implies
    $(N_0,\le_0,P_{01},\ldots,P_{0m})\models\chi^2_\theta$, and this
    implies $(N_0)\models \chi^3_\theta(P_{01},\ldots,P_{0m})$.  \uend
  \end{claim}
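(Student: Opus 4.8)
The plan is to fix an arbitrary $\theta'\in\Theta_\theta$, together with an otxx $A'$, an a-node $t'\in V(T^{A'})$, a compatible linear order $\le'$ of $A'$, and the type partition $(P_1',\ldots,P_m')$ of $N_+(t')$ realizing $\theta'\in\Theta_\theta$, and then to manufacture from this data both a realization of $\theta$ and a $\equiv_{co}$-chain connecting it to $(A'_{t'},\le')$. The realization we build will, at the end, also witness $\theta\in\Theta_\theta$, and ranging over $\theta'$ gives $\Theta_\theta\subseteq\angles{\theta}_{co}$.

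First I would unwind the three auxiliary sentences. Since $(N_+(t'))\models\chi_\theta^3(P_1',\ldots,P_m')$ and $\chi_\theta^3$ is, by construction, equivalent to the order-invariant sentence $\chi_\theta^2$ of vocabulary $\{X_1,\ldots,X_m\}$, we get $(N_+(t'),\le',P_1',\ldots,P_m')\models\chi_\theta^2$. As $\chi_\theta^2$ is the disjunction of the sentences $\bigwedge_{\phi\in\xi}\phi$ over all $\MSO$-types $\xi$ of rank $q_1$ lying in the order-invariant closure of some type containing $\chi_\theta^1$, the type of $(N_+(t'),\le',P_1',\ldots,P_m')$ lies in such a closure; hence there is an ordered set structure $(N,\le)$ with a partition $(P_1,\ldots,P_m)$ of $N$ such that $(N,\le,P_1,\ldots,P_m)\equiv_{q_1}^{\OIMSO}(N_+(t'),\le',P_1',\ldots,P_m')$ and $(N,\le,P_1,\ldots,P_m)\models\chi_\theta^1$; equivalently $(N,\le)\models\atype_\theta(P_1,\ldots,P_m)$ and the $P_i$ partition $N$.

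Next I would transfer this $\OIMSO$-equivalence down to sub-otxxs. By Lemma~\ref{lem:oitypes} there are sets $N_0,\ldots,N_\ell$, partitions $(P_{i1},\ldots,P_{im})$ of $N_i$ and linear orders $\le_i,\le_i'$ of $N_i$ with $(N_0,\le_0,P_{01},\ldots,P_{0m})=(N,\le,P_1,\ldots,P_m)$, $(N_\ell,\le_\ell',P_{\ell1},\ldots,P_{\ell m})=(N_+(t'),\le',P_1',\ldots,P_m')$, and $(N_{i-1},\le_{i-1}',P_{(i-1)1},\ldots,P_{(i-1)m})\equiv_{q_1}^{\MSO}(N_i,\le_i,P_{i1},\ldots,P_{im})$ for all $i\in[\ell]$. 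From each $N_i$ I build a sub-otxx $A^i$: take a fresh a-node $t_i$ with bag $\beta(t_i):=\beta^{A'}(t')$ and child set $N_+(t_i):=N_i$, and for every $u\in N_i$ with $u\in P_{ij}$ attach a copy of the original sub-otxx $A'_{u'}$ for some chosen $u'\in P_j'$ (which is nonempty since $P_{ij}$ is nonempty and $(N_i,P_{i1},\ldots,P_{im})\equiv_{q_1}^{\MSO}(N_+(t'),P_1',\ldots,P_m')$); and set $A^\ell:=A'_{t'}$. On each $A^i$ I choose compatible orders $\le_i,\le_i'$ extending the given orders on $N_i$ and agreeing with the order induced by $\le'$ on each attached copy, so that $(A^i_u,\le_i)$ and $(A^i_u,\le_i')$ are isomorphic copies of $(A'_{u'},\le')$ and hence $(P_{i1},\ldots,P_{im})$ is the type partition of $N_+(t_i)$ with respect to both $\le_i$ and $\le_i'$. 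Lemma~\ref{lem:atype} now converts each $\equiv_{q_1}^{\MSO}$-equivalence of the child-index structures into $\tp_q^{\MSO}(A^{i-1},\le_{i-1}')=\tp_q^{\MSO}(A^i,\le_i)$, and the same lemma gives $\tp_q^{\MSO}(A^0,\le_0)=\theta$ from $(N_0,\le_0)\models\atype_\theta(P_{01},\ldots,P_{0m})$ as well as $\tp_q^{\MSO}(A^\ell,\le_\ell')=\tp_q^{\MSO}(A'_{t'},\le')=\theta'$. Thus $(A^0,\le_0)$ is a realization of $\theta$, so $\theta$ is realizable (using $\Theta_\theta\neq\emptyset$), and the sequence $A^0,\ldots,A^\ell$ with its compatible orders witnesses $\theta\equiv_{co}\theta'$, that is, $\theta'\in\angles{\theta}_{co}$; since $\theta'$ was an arbitrary element of $\Theta_\theta$, we conclude $\Theta_\theta\subseteq\angles{\theta}_{co}$. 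Finally, $\theta\in\Theta_\theta$ follows from $A^0$ itself (an otxx with a-node root $t_0$, as built above): the type partition of $N_+(t_0)$ with respect to $\le_0$ is $(P_{01},\ldots,P_{0m})$, we have $\tp_q^{\MSO}(A^0_{t_0},\le_0)=\theta$, and $(N_+(t_0))\models\chi_\theta^3(P_{01},\ldots,P_{0m})$, because $(N_0,\le_0)\models\atype_\theta(P_{01},\ldots,P_{0m})$ makes $(N_0,\le_0,P_{01},\ldots,P_{0m})$ a model of the disjunct $\chi_\theta^1$ of $\chi_\theta^2$, hence of $\chi_\theta^3$.

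I expect the main obstacle to be the order bookkeeping in the lifting step: one must choose the compatible orders $\le_i,\le_i'$ on $A^i$ so that they simultaneously (i) are compatible, (ii) restrict to the prescribed orders on the child index set $N_i$, and (iii) restrict to copies of $\le'$ on every attached sub-otxx, the point of (iii) being that only then are the children's $q$-types exactly the $\theta_j$ demanded by the partition, so that $(P_{i1},\ldots,P_{im})$ is the genuine \emph{type} partition, not merely a cover, of $N_+(t_i)$ for both orders, which is precisely what Lemma~\ref{lem:atype} requires. A related routine point that still needs checking is that each $(P_{i1},\ldots,P_{im})$ really is a partition: mutual disjointness and covering of $N_i$ propagate along the $\equiv_{q_1}^{\MSO}$-equivalences from $(P_1',\ldots,P_m')$ being a type partition, which is exactly why $q_1$ was chosen large enough to evaluate the ``partition'' clause built into $\chi_\theta^1$.
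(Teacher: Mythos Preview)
Your proposal is correct and follows essentially the same approach as the paper's own proof: unwind $\chi_\theta^3\to\chi_\theta^2\to\chi_\theta^1$, invoke Lemma~\ref{lem:oitypes} to get a witnessing sequence of colored linear orders, lift each to a sub-otxx by attaching copies of the $A'_{u'}$, and apply Lemma~\ref{lem:atype} at each step. Your closing remarks on the order bookkeeping and on why each $(P_{i1},\ldots,P_{im})$ is a genuine partition are exactly the points the paper leaves implicit.
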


  \begin{claim}\label{cl:oia3}
    Let $A$ be an otxx, $t\in V(T^A)$ an a-node, $\le$ a compatible linear order
    of $A$, and $(P_1,\ldots,P_m)$ the type partition of $N_+(t)$. Then the set
    $\Theta_t$ of all $\theta\in\Theta$ with
    $(N_+(t))\models\chi_\theta^3(P_1,\ldots,P_m)$ is compatible at $t$.

    \proof
    Let $\theta_t:=\tp_q^\MSO(A_t,\le)$. Then for all 
    $\theta\in\Theta_t$ we have $\theta_t\in\Theta_{\theta}$ and thus,
    by Claim~\ref{cl:oia2}, 
    $\theta_t\in\angles{\theta}_{co}$. As
    $\equiv_{co}$ is an equivalence relation, it follows that
    $\angles{\theta_t}_{co}=\angles{\theta}_{co}$. Thus
    $\Theta_t\subseteq \angles{\theta_t}_{co}$, and this shows
    that $\Theta_t$ is compatible at $t$.
    \uend
  \end{claim}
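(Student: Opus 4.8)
The plan is to verify directly the two conditions in the definition of ``$\Theta_t$ is compatible at $t$'', using the linear order $\le$ from the hypothesis — restricted to the sub-otxx $A_t$, where it is still compatible — as the witness order. Set $\theta_t:=\tp_q^{\MSO}(A_t,\le)$. It suffices to show (i) $\theta_t\in\Theta_t$, and (ii) $\Theta_t\subseteq\angles{\theta_t}_{co}$; together these say precisely that $\Theta_t$ is compatible at $t$.

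For (i) I would chase the three formulas $\chi_{\theta_t}^1$, $\chi_{\theta_t}^2$, $\chi_{\theta_t}^3$ introduced in the proof of Lemma~\ref{lem:oiatype}. Since $\le$ restricted to $A_t$ is a compatible linear order of $A_t$ and $(P_1,\ldots,P_m)$ is the type partition of $N_+(t)$, Lemma~\ref{lem:atype} together with the definition of $\theta_t$ gives $(N_+(t),\le)\models\atype_{\theta_t}(P_1,\ldots,P_m)$; adding that $\le$ is a linear order and that the $P_i$ partition $N_+(t)$, the word structure $(N_+(t),\le,P_1,\ldots,P_m)$ is a model of $\chi_{\theta_t}^1$. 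Its rank-$q_1$ type therefore belongs to the list of types containing $\chi_{\theta_t}^1$, and since each type lies in its own order-invariant closure, the structure satisfies $\chi_{\theta_t}^2$. By the equivalence of $\chi_{\theta_t}^2$ and $\chi_{\theta_t}^3$ we get $(N_+(t))\models\chi_{\theta_t}^3(P_1,\ldots,P_m)$, i.e.\ $\theta_t\in\Theta_t$.

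For (ii), fix $\theta\in\Theta_t$, so $(N_+(t))\models\chi_\theta^3(P_1,\ldots,P_m)$. Taking the otxx $A$, the a-node $t$, and the order $\le$ themselves as the witness in the definition of $\Theta_\theta$ (given just before Claim~\ref{cl:oia2}), we see that $\theta_t\in\Theta_\theta$, so in particular $\Theta_\theta\neq\emptyset$. Claim~\ref{cl:oia2} then yields $\theta_t\in\Theta_\theta\subseteq\angles{\theta}_{co}$, i.e.\ $\theta_t\equiv_{co}\theta$. Because $\equiv_{co}$ is an equivalence relation on the realizable types, this gives $\angles{\theta_t}_{co}=\angles{\theta}_{co}$, and hence $\theta\in\angles{\theta}_{co}=\angles{\theta_t}_{co}$; as $\theta\in\Theta_t$ was arbitrary, $\Theta_t\subseteq\angles{\theta_t}_{co}$. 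Combining (i) and (ii): $\le$ restricted to $A_t$ is a compatible linear order with $\tp_q^{\MSO}(A_t,\le)=\theta_t\in\Theta_t$ and $\Theta_t\subseteq\angles{\theta_t}_{co}$, which is exactly the statement that $\Theta_t$ is compatible at $t$. The argument is essentially bookkeeping — Claim~\ref{cl:oia2} carries all the real weight — and the only step requiring a little care is (i), namely checking that the ``best order-invariant approximation'' $\chi^3_{\theta_t}$ is genuinely satisfied by the concrete type partition induced by $\le$.
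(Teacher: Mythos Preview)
Your proof is correct and follows essentially the same route as the paper's: define $\theta_t:=\tp_q^{\MSO}(A_t,\le)$, observe that $\theta_t\in\Theta_\theta$ for every $\theta\in\Theta_t$, invoke Claim~\ref{cl:oia2} to get $\theta_t\in\angles{\theta}_{co}$, and use that $\equiv_{co}$ is an equivalence relation to conclude $\Theta_t\subseteq\angles{\theta_t}_{co}$. The one difference is that you explicitly verify condition~(i), namely $\theta_t\in\Theta_t$, by chasing $\chi_{\theta_t}^1\to\chi_{\theta_t}^2\to\chi_{\theta_t}^3$; the paper leaves this implicit, but your argument for it is correct and fills a small gap in the paper's presentation.
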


  The rest of the proof is very similar to the proof of
  Lemma~\ref{lem:oibtype}. Again, we let $\phi(X_1,\ldots,X_m,Y_1,\ldots,Y_m)$
  be an \MSO-formula
  stating that $Y_i\subseteq X_i$ for all $i$, that the $Y_i$ are
  mutually disjoint, and that $\bigcup_i Y_i=\bigcup_i X_i$. We let 
  $\oiatype_\theta(X_1,\ldots,X_m):=\exists Y_1\ldots\exists
  Y_m\big(\phi(X_1,\ldots,X_m,Y_1,\ldots,Y_m)\wedge\chi^3_\theta(Y_1,\ldots,Y_m)\big)$.

  Let $A$ be an otxx, $t\in V(T^A)$ an a-node, and $(P_1,\ldots,P_m)$
  a compatible cover of $N_+(t)$. Let $\Theta^t$ be the set of all
  $\theta\in\Theta$ such that $(N_+(t))\models
  \oiatype_\theta(P_1,\ldots,P_m)$. We need to prove that $\Theta^t$
  is compatible at $t$.

  For every $u\in N_+(t)$, let $\Theta^u:=\{\theta_i\mid
  i\in[m]\text{ such that }u\in P_i\}$. As the cover
  $(P_1,\ldots,P_m)$ is compatible, for all $u$ the set $\Theta^u$ is
  compatible at $u$. In particular, there is a
  $\theta_u\in\Theta^u$ and a compatible linear order $\le_u$ of
  $A_u$ such that
  $\theta_u=\tp_q^\MSO(A_u,\le_u)$ and $\Theta^u\subseteq\angles{\theta_u}_{co}$. Let $\le^1$ be a
  compatible linear order of $A_t$ such that for all $u\in
  N_+(t)$, the restriction of $\le^1$ to
  $U(A_u)$ is $\le_u$. For every $i\in[m]$, let $Q_i$ be the
  set of all $u\in N_+(t)$ such that $\theta_u=\theta_i$.  Then
  $(Q_1,\ldots,Q_m)$ is the type partition of $N_+(t)$ in $(A_t,\le^1)$,
  and it refines the
  cover $(P_1,\ldots,P_m)$.

  By Claim~\ref{cl:oia3}, the set $\Theta_t(Q_1,\ldots,Q_m)$ of all
  $\theta\in\Theta$ such that
  $(N_+(t))\models\chi_\theta^3(Q_1,\ldots,Q_m)$ is compatible at
  $t$. Thus there is a type $\theta_t\in \Theta_t(Q_1,\ldots,Q_m)$ and
  a linear order $\le^2$ of $A$ such that
  $\tp_q^\MSO(A_t,\le^2)=\theta_t$ and
  $\Theta_t(Q_1,\ldots,Q_m)\subseteq\angles{\theta_t}_{co}$. As
  $\theta_t\in \Theta_t(Q_1,\ldots,Q_m)$ we have
  $(N_+(t))\models\chi_{\theta_t}^3(Q_1,\ldots,Q_m)$ and thus
  $(N_+(t))\models\oiatype_{\theta_t}(P_1,\ldots,P_m)$. Thus
  $\theta_t\in\Theta^t$. 

  We need to prove that $\Theta^t\subseteq\angles{\theta_t}_{co}$.
  Let
  $\theta\in\Theta^t$. Then
  $A_{(t)}\models\oiatype_\theta(P_1,\ldots,P_m)$, and thus there is a
  partition $(Q'_1,\ldots,Q'_m)$ of $N_+(t)$ that refines the cover
  $(P_1,\ldots,P_m)$ such that
  $(N_+(t))\models\chi^3_{\theta}(Q'_1,\ldots,Q'_m)$. Let
  $\Theta_t(Q_1',\ldots,Q_m')$ be the set of all 
  $\theta'\in\Theta$ such that
  $(N_+(t))\models\chi_{\theta'}^3(Q_1',\ldots,Q_m')$. Then we have
  $\theta\in\Theta_t(Q_1',\ldots,Q_m')$. By Claim~\ref{cl:oia3}, the set
  $\Theta_t(Q_1',\ldots,Q_m')$ is compatible at $t$. Thus there is a
  $\theta_t'\in \Theta_t(Q_1',\ldots,Q_m')$ and a compatible linear order $\le^3$
  of $A$ such that $\tp_q^\MSO(A_t,\le^3)=\theta_t'$ and
  $\Theta_t(Q_1',\ldots,Q_m')\subseteq\angles{\theta_t'}_{co}$. 

  It
  remains to prove that 
  $
  \theta_t\equiv_{co}\theta_t',
  $
  because then
  $\theta\in\Theta_t(Q_1',\ldots,Q_m')\subseteq\angles{\theta_t'}_{co}=\angles{\theta_t}_{co}$. 
  For each $u\in
  N_+(t)$, let $\theta_u:=\tp_q^\MSO(A_u,\le^2)$ and
  $\theta_u':=\tp_q^\MSO(A_u,\le^3)$. Then $\theta_u=\theta_i$ for the
  unique $i$ such that $u\in Q_i$ and $\theta_u'=\theta_{i'}$ for the
  unique $i'$ such that $u\in Q_{i'}$. 
  As both $(Q_1,\ldots,Q_m)$ and $(Q_1',\ldots,Q_m')$ refine the cover
  $(P_1,\ldots,P_m)$ and the set $\Theta^u$
  is compatible at $u$, we have $\theta_u\equiv_{co}\theta_u'$. Now we
  can form a sequence witnessing $\theta_t\equiv_{co}\theta_t'$ from
  sequences witnessing $\theta_u\equiv_{co}\theta_u'$ for the $u\in
  N_+(t)$  as in the proof of Lemma~\ref{lem:oibtype}
  (when we showed $\theta_t\equiv_{co}\theta$).
\end{proof}

\subsection{Proofs of the lifting theorems}
\label{sec:proofs-lifting}

\begin{proof}[Proof of Theorem~\ref{th:lift-oimso}]
  Let $\CC$ be a class of structures over some vocabulary $\tau$ that admit
  \CMSO-definable ordered tree decompositions and let $\phi$ be an
  \OIMSO-formula over $\tau$. We show that there exists a \CMSO-formula $\psi$,
  such that for every structure $A$ from $\CC$ we have $A \models \phi$ if, and
  only if, $A \models \psi$.

  First of all, we turn $A$ into a structure $A^\dbstar$ that is isomorphic to
  an otx of $A$. Using the theorem's precondition, this is possible by a
  \CMSO-transduction that produces otxs with bounded adhesion. Using the
  transformations discussed in Section~\ref{sec:segmented}, we continue to turn
  $A^\star$ into an otx whose tree decomposition is segmented and, then, expand it
  into an otxx $A^\dbstar$. Both transductions preserve the bounded adhesion
  property. Since $A$'s relations are still present in $A^\dbstar$ and we can
  distinguish the elements in $A^\dbstar$ that are also in the original structure
  $A$ from the elements that are added to $A^\dbstar$ by the transductions, we can
  rewrite~$\phi$ to a formula $\phi^\dbstar$, such that for each $A \in \CC$ we
  have $A \models \phi$ if, and only if, $A^\dbstar \models \phi^\dbstar$. In
  particular, $\phi^\dbstar$ is still an order-invariant \MSO-formula.

  In order to test whether $A^\dbstar \models \phi^\dbstar$ holds, we view
  $\phi^\dbstar$ as an $\MSO[\tau^\dbstar \cup \{\leq\}]$-formula and test whether
  $(A^\dbstar,\leq) \models \phi^\dbstar$ holds for some total order $\leq$ over
  $U(A^\dbstar)$ compatible with $A^\dbstar$. Using the terminology developed in
  Section~\ref{sec:inv-composition}, we ask whether $\phi^\dbstar$ is equivalent
  to a formula from a realizable type $\theta$ of $A^\dbstar$. Due to the
  order-invariance of $\phi^\dbstar$, this is equivalent to asking whether each
  realizable type $\theta$ contains a formula equivalent to $\phi^\dbstar$.  In
  order to have access to a realizable type of $A^\dbstar$, we define a compatible
  set of types $\Theta_r'$ for the root $r$ by using a \CMSO-formula that implements
  the following three parts: (1) It existentially guesses a cover
  $(P_1,\dots,P_m)$ of all nodes of the tree decomposition that \emph{induces} the
  set of types $\Theta'_t := \{\theta_i \mid i \in [m] \text{ with } t \in P_i\}$
  at each node $t$ of the tree decomposition. (2) It tests whether the induced
  set of types for each leaf is compatible. This is possible since leaves are
  always b-nodes and the substructures induced by their bags contain total
  orderings. (3) It compares the induced set of types of each inner node $t$ with
  the set of types that we get by applying Lemmas~\ref{lem:oibtype} (in the case
  of a b-node) or~\ref{lem:oiatype} (in the case of an a-nodes) to the cover $(P_1
  \cap N_+(t),\dots,P_m \cap N_+(t))$ of its children $N_+(t)$.

  Finally, we test whether $\phi^\dbstar$ is equivalent to a formula from a type
  $\theta \in \Theta_r'$. Overall, this results in a \CMSO-formula $\psi^\dbstar$ that is
  equivalent to $\phi^\dbstar$ on $A^\dbstar$. Since $\phi^\dbstar$ on $A^\dbstar$
  is constructed to be equivalent to $\phi$ on $A$ and \CMSO-transductions
  preserve \CMSO-definability, we know that there exists a \CMSO-formula $\psi$ on
  $\tau$ that is equivalent to $\phi$ on all structures from $\CC$.
\end{proof}

\begin{proof}[Proof of Theorem~\ref{th:lift-oifo}] 
  The arguments are the same as in the proof of Theorem~\ref{th:lift-oimso},
  except that we need to avoid the use of \CMSO-for\-mulas. First of all, this is
  possible for the initial transduction that produces the otx $A^\star$ from $A$
  since the theorem only talks about \MSO-definable ordered tree decompositions,
  not \CMSO-definable ones. Second, we need to avoid the use of \CMSO-formulas in
  the order-invariant compositions for a-nodes. During the proof of
  Lemma~\ref{lem:oiatype}, we translate an \OIMSO-formula on colored sets into an
  equivalent \CMSO-formula. If we start with an \OIFO-formula instead, then we are
  able to translate it into an equivalent \MSO-formula at this point in the
  proof. This follows from the fact that \FO\ has the same expressive power as
  \OIFO\ on this class of structures~\cite{BenediktS2009}. The resulting proof
  of Theorem~\ref{th:lift-oifo} produces an \MSO-formula instead of a
  \CMSO-formula. 
\end{proof}

\section{Defining Decompositions}
\label{sec:decompositions}

During the course of the present section, we use \MSO-transductions to extend
graphs with tree decompositions for them. The first transduction (developed in
Section~\ref{sec:transduction-into-atoms}) is used to prove
Theorems~\ref{th:tw-oimso} and \ref{th:tw-oifo}, which apply to graphs of
bounded tree width. The second transduction (reviewed in
Section~\ref{sec:transduction-into-3cc}) is used to prove
Theorems~\ref{th:minor-oimso} and~\ref{th:minor-oifo}, which apply to graphs
that exclude $K_{3,\ell}$ for some $\ell \in \NN$ as a minor. The present
section's results work with graphs instead of general structures. Thus, we set
$\tau = \{E\}$ throughout the section where $E$ is the (binary) edge relation
symbol.

The structures defined by the transductions are over the vocabulary $\tau^+ :=
\tau \cup \{V_S,V_T,E_T,R_\beta\}$ where $V_S$ and $V_T$ are unary, and $E_T$
and $R_\beta$ are binary. A \emph{tree extension} (\emph{tx} for short) of a
graph $G = (V,E)$ is a $\tau^+$-structure $G^+$ that extends $G$ by a tree
decomposition $(T,\beta)$ of $G$. Tree decompositions are encoded as part of txs
just like they are encoded as part of otxs in Section~\ref{sec:lift}, but
without including a partial order. The below transductions turn graphs of a
certain kind into tree extensions of a certain kind. In order to state the
results concisely, we use the following terminology: whenever we talk about the
bags and separators of a tree extension $G^+$, we refer to the bags and
separators, respectively, of the tree decomposition $(T,\beta)$ encoded by
$G^+$. For a class $\mathcal{C}$ of graphs and a class $\mathcal{D}$ of tree
extensions, we say that an $\MSO[\tau,\tau^+]$-transduction $\Lambda$
\emph{defines tree extensions from $\mathcal{D}$ for graphs from $\mathcal{C}$}
if the following holds for every $G \in \mathcal{C}$: we have $\emptyset
\subsetneq \Lambda[G] \subseteq \mathcal{D}$ and every $G^+ \in \Lambda[G]$ is
isomorphic to a tree extension of~$G$.

\subsection{Defining Tree Decompositions into Graphs without Clique Separators}
\label{sec:transduction-into-atoms}

A \emph{clique separator} in a graph $G$ is a set $S \subseteq V(G)$, such that
$G[S]$ is a clique (that means, there is an edge in $G$ between every pair of
vertices from $S$) and there are two vertices $v,w \in V(G) \setminus S$ that
are disconnected in $G \setminus S$. In this case, $S$ \emph{separates} $v$ and
$w$. % A clique~separator $S$ is minimal if there are two vertices $v$ and $w$ % 
%that are separated by $S$, but not separated by a set $S' \subsetneq S$.  
An \emph{atom} is a graph without clique separators; in particular, atoms are
connected graphs. We prove the following lemma. 

\begin{lemma}
  \label{le:main-decomposition}
  Let $k \in \NN$. There is an $\MSO[\tau,\tau^+]$-transduction
  $\Lambda_{\tw\leq k}$ that defines tree extensions 
  for graphs of tree width at most $k$ where (1) the bags induce subgraphs that
  are atoms, and (2) the separators of the tree decompositions are cliques.
\end{lemma}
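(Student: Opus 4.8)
The plan is to build the desired tree decomposition by recursively splitting the graph along clique separators, and to show that this recursion can be implemented by an $\MSO$-transduction because the graph has bounded tree width. Concretely, a theorem of Tarjan (the clique separator decomposition) says that every graph $G$ decomposes, along its clique separators, into a tree-structured collection of atoms, and this decomposition is itself a tree decomposition of $G$ whose bags are atoms and whose adhesion sets (separators) are cliques. So the combinatorial content of the lemma is classical; the work is to make it $\MSO$-definable, and this is where the tree-width bound $k$ is used.

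First I would set up the transduction to guess, via monadic parameters, a rooted binary tree of ``split vertices'': at each internal node we guess a clique $S$ of size at most $k+1$ (a clique in a graph of tree width $k$ has at most $k+1$ vertices, so $S$ can be described by at most $k+1$ element parameters, or by a bounded number of colour classes) together with a partition of the vertices of the current piece into the two sides of the split plus $S$ itself; the leaves should correspond to atoms. I would then write an $\MSO$-formula $\lambda_{\textsc{valid}}$ asserting: (i) each guessed $S$ is indeed a clique; (ii) at each internal node, removing $S$ from the current induced subgraph genuinely disconnects the two guessed sides (this is an $\MSO$-expressible connectivity statement, since reachability in a subgraph is $\MSO$-definable); (iii) each leaf piece, as an induced subgraph, contains no clique separator — again $\MSO$-expressible, because a clique separator would have size $\le k+1$ and so could be quantified over by a bounded tuple of element variables, after which ``it is a clique and it disconnects the piece'' is $\MSO$; and (iv) the guessed tree has the right branching and the split vertices behave consistently with descendant order (so that the whole structure really is a tree decomposition). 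Because all the ``witnesses'' we must quantify over — cliques, separators — have size bounded by $k+1$, each of these conditions stays inside $\MSO$; this is the single place the hypothesis $\tw(G)\le k$ is essential, and it is what makes ``bag is an atom'' and ``separator is a clique'' checkable.

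From the guessed split tree I would then read off the tree extension $G^+$: the underlying tree $T$ is (a subdivision/relabelling of) the guessed binary tree, and the bag $\beta(t)$ of a leaf is the corresponding atom piece together with the clique separators on the path issues — more cleanly, following Tarjan's decomposition, the bags are exactly the atoms produced by exhaustively splitting, and the adhesion between adjacent bags is the clique along which the split was made. I would define $V_S, V_T, E_T, R_\beta$ from the parameters in the obvious way and verify in $\MSO$ that the connectedness and cover conditions of a tree decomposition hold: the cover condition is immediate since every edge lies inside some atom (an edge is never cut by removing a clique separator not containing both its endpoints, and one checks both endpoints always end up together in some leaf piece), and the connectedness condition follows because each vertex, once it leaves the ``active'' side of a split, only reappears in bags along a connected portion of $T$ corresponding to the clique separators that retain it.

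The main obstacle I expect is (a) organizing the recursion so that it has bounded depth or at least so that the whole split tree is simultaneously guessable by a single tuple of monadic parameters — naively the clique-separator decomposition can have depth linear in $|V(G)|$, so one cannot ``iterate'' a transduction a bounded number of times; instead the entire decomposition tree must be encoded in one shot by the parameters and then verified by an $\MSO$ formula, which is fine but requires care in stating the consistency conditions (iv) above so that the parameters are forced to describe a genuine, maximal clique-separator decomposition rather than a partial or spurious one. A secondary subtlety is ensuring the bags one extracts are the atoms and not something larger: one has to include in each leaf bag exactly the clique separators that border it, and argue this yields induced subgraphs that are still atoms (clique separators bordering an atom, when added back, do not create a new clique separator — this is part of the correctness of Tarjan's decomposition and I would cite it rather than reprove it). Everything else — that connectivity and ``is a clique'' and ``has a separator of size $\le k+1$'' are $\MSO$ — is routine given the tree-width bound.
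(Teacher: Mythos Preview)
Your proposal identifies the right combinatorial object (Tarjan's clique-separator decomposition) and correctly observes that, because $\tw(G)\le k$, every clique in $G$ has at most $k+1$ vertices, so that ``is a clique'' and ``is an atom'' become $\MSO$-expressible. However, the proposal has a genuine gap precisely at the point you yourself flag as the ``main obstacle'': you need to encode an entire binary split tree of unbounded depth, together with a different clique and a different bipartition at each of unboundedly many internal nodes, using a single fixed tuple of monadic parameters. Saying this ``requires care'' is not a solution. An $\MSO$-transduction has a fixed number of set parameters and a fixed width $w$, so the output universe sits inside $V(G)\times\{1,\ldots,w\}$; you must explain concretely which vertices represent which tree nodes, and how finitely many colour classes simultaneously encode all the splits. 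Your text gives no such encoding, and the naive ideas (one colour per atom, or element parameters for each clique) fail because the number of atoms and separators is unbounded.

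The paper sidesteps this difficulty by a different and rather elegant device: instead of guessing the full atom decomposition at once, it stratifies by separator size. For each $c\ge 1$ there is a transduction $\Lambda_c$ that refines a decomposition into $(c-1)$-atoms into one into $c$-atoms, using the structural fact (from \cite{ElberfeldS2016}) that inside a $(c-1)$-atom the $c$-atoms and $c$-clique separators already form a tree. This tree can be defined in a single $\MSO$-transduction because every $c$-atom has a canonical representative vertex relative to a guessed root (via a ``closest $c$-clique separator'' relation), so finitely many parameters suffice. Since cliques have size at most $k+1$, composing $\Lambda_{\text{comp}},\Lambda_1,\ldots,\Lambda_{k+1}$ (a bounded number of transductions) yields the desired decomposition; closure of $\MSO$-transductions under composition finishes the proof. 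The point is that the recursion is on \emph{clique size}, which is bounded by $k+1$, not on the depth of the split tree, which is not.
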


Our proof uses the graph-theoretic ideas behind a logspace
algorithm~\cite{ElberfeldS2016} for constructing tree decomposition of the kind
described by Lemma~\ref{le:main-decomposition} and shows how to define the
construction using an \MSO-transduction. The mentioned algorithm first
constructs decompositions along small clique separators of the graph and, then, refines the
decompositions by also taking larger clique separators into account. Since
graphs of tree width at most~$k$ only contain cliques of size at most $k+1$,
applying $k+1$ refinement steps turns a given graph of tree width at most $k$
into a tree decomposition that proves the lemma.

Formally, constructing tree decompositions via clique separators of a growing
size involves working with a refined notion of atoms. For $c \in \NN$, a
\emph{$c$-clique separator} is a clique separator of size at most $c$ and a
\emph{$c$-atom} is a graph that does not contain clique separators of size at
most~$c$. Like atoms, $c$-atoms are connected by definition.

For a graph $G$ and a constant $c \in \NN$, we build a graph $T_c$ where $V(T_c)$
consists of all maximal subgraphs of $G$ that are $c$-atoms, which are called
\emph{atom nodes}, and all $c$-clique separators, which are called
\emph{separator nodes}. In addition, to each $t \in V(T_c)$ we assign a bag
$\beta_c(t) \subseteq V(G)$ as follows: if $t$ is an atom node, then $\beta_c(t)$
is the vertex set of the corresponding atom, and if $t$ is a separator node,
then $\beta_c(t)$ is the corresponding separator. An edge is inserted between
every atom node $t$ and separator node $u$ with $\beta_c(u) \subseteq
\beta_c(t)$. While $T_c$ is not a tree in general, \cite{ElberfeldS2016} proved
that, if $G$ is a $(c-1)$-atom for $c \geq 1$, then $(T_c,\beta_c)$ is a tree
decomposition for $G$. 

\begin{fact}
	\label{fa:decomposition-step}
  Let $c \geq 1$ and $G$ be a $(c-1)$-atom. Then $(T_c,\beta_c)$ is a tree
  decomposition for $G$. Moreover,
  \begin{enumerate}
  \item atom nodes are only connected to separator nodes and vice versa, and
  \item all leaves are atom nodes.
  \end{enumerate}
\end{fact}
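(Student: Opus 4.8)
The first assertion of the Fact --- that $(T_c,\beta_c)$ is a tree decomposition of $G$ --- is precisely the result of \cite{ElberfeldS2016} recalled just before the statement, so the plan is only to establish the two additional items. Item~(1) is immediate from the construction: an edge of $T_c$ is inserted only between an atom node and a separator node, so $T_c$ is bipartite with the atom nodes and the separator nodes as its two sides; in particular no edge of $T_c$ joins two atom nodes or two separator nodes.

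For item~(2) the plan is to show that every separator node $u$ has at least two neighbours in $T_c$, hence is not a leaf. Set $S:=\beta_c(u)$. Since $G$ is a $(c-1)$-atom it has no clique separator of size at most $c-1$, so $|S|=c$ and no proper subset of $S$ separates $G$; thus $S$ is an inclusion-minimal separator, and every vertex of $S$ has a neighbour in every component of $G\setminus S$. Let $C_1,\dots,C_r$ be these components; $r\ge 2$ because $S$ separates $G$. The heart of the matter is the following claim: for each $i\in[r]$ there is a maximal $c$-atom $A_i$ of $G$ with $S\subsetneq A_i$ and $A_i\setminus S\subseteq C_i$. Given the claim, the $A_i$ are pairwise distinct (their non-empty parts outside $S$ lie in distinct components of $G\setminus S$, and an atom node is determined by its vertex set), each $A_i$ is an atom node, and $\beta_c(u)=S\subseteq A_i=\beta_c(A_i)$ makes $u$ adjacent to $A_i$; so $u$ has at least $r\ge 2$ neighbours.

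To prove the claim, fix $i$ and put $H_i:=G[S\cup C_i]$; in $H_i$ the clique $S$ is not a separator, since $H_i\setminus S=G[C_i]$ is connected. As $G[S]=K_c$ is a $c$-atom, it lies in some maximal $c$-atom $A_i$ of $H_i$, and then $A_i\setminus S\subseteq C_i$ holds by construction. If $A_i\supsetneq S$, then $A_i$ is even a maximal $c$-atom of $G$: any $c$-atom $B$ of $G$ with $B\supsetneq A_i$ has $B\setminus S$ connected (the $c$-clique $S$ cannot separate the $c$-atom $G[B]$) and meeting $C_i$, so $B\subseteq S\cup C_i=V(H_i)$, contradicting maximality of $A_i$ in $H_i$. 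It remains to exclude $A_i=S$. If $A_i=S$, then $H_i$ is not a $c$-atom (otherwise $H_i$ itself would be a $c$-atom properly containing $S$), so in the clique-separator decomposition of $H_i$ the atom $S$ is adjacent to a separator node whose bag $S'$ satisfies $S'\subsetneq S$, and hence $|S'|\le c-1$; and since $S$ is an inclusion-minimal separator of $G$, all of $C_2,\dots,C_r$ are attached to $S\setminus S'$, so a component of $H_i\setminus S'$ avoiding $S\setminus S'$ is cut off from $C_2\cup\dots\cup C_r$ in $G$, making $S'$ a clique separator of $G$ of size at most $c-1$ --- contradicting that $G$ is a $(c-1)$-atom. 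The step I expect to be the main obstacle is exactly this transfer of the small separator $S'$ from the piece $H_i$ back to $G$, where the minimality of $S$ is essential; everything else reduces to the quoted result of \cite{ElberfeldS2016} and the standard structure theory of clique-separator decompositions.
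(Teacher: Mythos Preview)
The paper does not prove this statement at all: it is stated as a \emph{Fact} and attributed in its entirety to \cite{ElberfeldS2016}. Your approach --- citing \cite{ElberfeldS2016} for the tree-decomposition assertion and then deriving items~(1) and~(2) directly from the construction --- therefore goes beyond what the paper does. Item~(1) is indeed immediate from the definition of the edge set of $T_c$, and your overall strategy for item~(2) (show that every $c$-clique separator $S$ lies in at least two distinct maximal $c$-atoms, one in each component of $G\setminus S$) is the natural one and essentially correct.

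There is, however, a small gap in the treatment of the case $A_i=S$. You invoke ``the clique-separator decomposition of $H_i$'' and use that the atom node $S$ has a neighbouring separator node; for the cited result to apply to $H_i$ (and, in particular, for $T_c^{H_i}$ to be a tree in which the atom node $S$ necessarily has a neighbour), one needs $H_i=G[S\cup C_i]$ to be a $(c-1)$-atom. You never verify this hypothesis. The fix is already implicit in your own argument: the final paragraph, where you show that a clique separator $S'\subsetneq S$ of $H_i$ with $|S'|\le c-1$ would separate a component $D\subseteq C_i$ from the remaining $C_j$'s in $G$ and hence contradict $(c-1)$-atomicity of $G$, is exactly the proof that $H_i$ has no clique separator of size at most $c-1$. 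If you move that observation \emph{before} appealing to the decomposition of $H_i$ (i.e., first establish that $H_i$ is a $(c-1)$-atom, and only then apply the cited result to $H_i$), the circularity disappears and the argument is complete. Once $H_i$ is known to be a $(c-1)$-atom, any $c$-clique separator $S'$ of $H_i$ has $|S'|=c$, so the neighbour of the atom node $S$ would have bag of size $c$ contained in $S$, forcing $S'=S$; but $S$ does not separate $H_i$, and this contradiction rules out $A_i=S$ directly.
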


The previous fact provides us with a single step in the decomposition refinement
procedure outlined above. We apply it in order to move from tree decompositions
whose bags induce $(c-1)$-atoms to tree decompositions whose bags induce
$c$-atoms. This is similar to the approach of \cite{ElberfeldS2016}, which is
based on the following construction.

Let $G$ be a graph and $D_{(c-1)} = (T_{(c-1)},\beta_{(c-1)})$ a tree
decomposition of $G$, such that for each node $t \in V(T_{(c-1)})$ the bag
$\beta(t)$ induces a maximal subgraphs of $G$ that is a $(c-1)$-atom or it
induces a $(c-1)$-separator. Moreover, the tree decomposition satisfies the two
properties stated in Fact~\ref{fa:decomposition-step}: neighbors of atom nodes
are only separator nodes and vice versa, and all leaves are atom nodes. We
modify the tree decomposition into a decomposition $D_c$, such that it still
satisfies the same properties, except that the constant $c-1$ is replaced by
$c$. For each atom node $V(t)$, we consider the tree decomposition $D_c^t =
(T^t_c,\beta^t_c)$ of the $(c-1)$-atom $G[\beta(t)])$ that we get from applying
Fact~\ref{fa:decomposition-step}. We replace $t$ with $D_c^t$ inside $D_{(c-1)}$
as follows: if $t$ is the root of $T^t_c$, we just replace it with $D_c^t$.
If $t$ is not the root, it has a unique parent separator node $u$ and, in turn,
$u$ has a unique parent atom node $v$. We replace $t$ with $D_c^t$ and connect $u$
to the root of $D_c^t$, which is constructed as an atom node whose bag contains all
of $\beta(u)$. Similarly, $v$ is replaced with $D_c^v$ and the edge between $v$
and $u$ is redirected such that there is an edge to $u$ from the highest atom node
in $D_c^v$ (with respect to the root of $D_c^v$) that contains all of $\beta(u)$,
which is unique.
The following fact follows from \cite{ElberfeldS2016}. The arguments
about the shape of the decomposition directly follow from the construction.

\begin{fact}
  \label{fa:refinement-step}
  Let $G$, $D_{(c-1)}$, and the constructed $D_c$ be defined as in the previous
  paragraph. Then $D_c$ is a tree decomposition for $G$. Moreover, in $D_c$,
  \begin{enumerate}
  \item atom nodes are only connected to separator nodes and vice versa, and
  \item all leaves are atom nodes.
  \end{enumerate}    
\end{fact}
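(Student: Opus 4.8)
The plan is to isolate a single \emph{refinement step} for tree decompositions and then observe that $D_c$ arises from $D_{(c-1)}$ by performing this step once at every atom node. Concretely, I would first record the following lemma: if $(T,\beta)$ is a tree decomposition of a graph $H$, $t\in V(T)$, and $(T',\beta')$ is a tree decomposition of $H[\beta(t)]$ such that for every neighbour $u$ of $t$ in $T$ some bag of $T'$ contains $\beta(t)\cap\beta(u)$, then the structure obtained by deleting $t$, splicing in $(T',\beta')$, and rerouting every former edge $tu$ to an edge between $u$ and a node $x(u)\in V(T')$ with $\beta(t)\cap\beta(u)\subseteq\beta'(x(u))$ is again a tree decomposition of $H$. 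The structural claims (1) and (2) of the fact I would dispatch first, as purely combinatorial consequences of the construction: each local decomposition $D_c^t$ has the atom/separator alternation and the atom-leaf property by Fact~\ref{fa:decomposition-step}, the separator nodes of $D_{(c-1)}$ are retained as separator nodes, and the construction explicitly reattaches each such separator node $u$ to an atom node of a neighbouring local decomposition --- using that the root of each $D_c^t$ is an atom node and that in $D_c^t$ a separator node is adjacent only to atom nodes whose bags contain it, so the highest bag containing a given clique is an atom node.

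To prove the refinement lemma I would check the three defining properties in turn. That the result is a tree follows from a vertex/edge count: splicing in $(T',\beta')$ at $t$ removes one node and adds $|V(T')|$ nodes, removes $\deg(t)$ edges and adds $|E(T')|=|V(T')|-1$ together with the $\deg(t)$ rerouted edges, so the vertex count minus the edge count is unchanged; by the evident connectivity the result is again a tree. The cover condition is immediate: every edge of $H$ had both endpoints in a bag $\beta(s)$ of $(T,\beta)$; if $s\neq t$ that bag persists unchanged, and if $s=t$ the edge lies in $H[\beta(t)]$ and hence in a bag of $T'$. The connectedness condition is the real point: for a vertex $v\notin\beta(t)$ nothing changes, since $v$'s subtree in $T$ avoided $t$ and $v$ lies in no bag of $T'$; for $v\in\beta(t)$, I would glue the (connected, nonempty) subtree of $v$ in $T'$ to the subtrees of $v$ hanging off the former neighbours $u$ of $t$, observing that each such $u$ satisfies $v\in\beta(t)\cap\beta(u)\subseteq\beta'(x(u))$, so the attachment node $x(u)$ lies inside $v$'s $T'$-subtree and the union is connected.

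Finally I would verify the hypotheses at every atom node $t$ of $D_{(c-1)}$ with $(T',\beta')=D_c^t$, the decomposition of the $(c-1)$-atom $G[\beta(t)]$ supplied by Fact~\ref{fa:decomposition-step}: every neighbour $u$ of $t$ is a separator node with $\beta(u)\subseteq\beta(t)$, so $\beta(t)\cap\beta(u)=\beta(u)$, and $\beta(u)$ --- a $(c-1)$-clique separator of $G$ --- induces a clique in $G[\beta(t)]$, hence by the Helly property for subtrees of a tree lies in a common bag of $D_c^t$; the highest such bag is the attachment point $x(u)$ used in the construction. Since distinct atom nodes of $D_{(c-1)}$ are pairwise non-adjacent, replacing one of them does not touch the local neighbourhood of another, so the refinement steps may be carried out simultaneously (or in any order), and iterating the lemma over all atom nodes yields that $D_c$ is a tree decomposition of $G$. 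The step I expect to be the main obstacle is the connectedness bookkeeping inside the refinement lemma --- verifying that each rerouting node $x(u)$ really sits in the relevant vertex subtrees --- together with the well-definedness of ``the highest bag of $D_c^t$ containing $\beta(u)$'' and the verification that it is an atom node, which is exactly what makes properties (1) and (2) survive the surgery.
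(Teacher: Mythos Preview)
Your proposal is correct and considerably more detailed than what the paper itself offers: the paper does not prove Fact~\ref{fa:refinement-step} but simply cites \cite{ElberfeldS2016} for the tree-decomposition claim and remarks that properties (1) and (2) ``directly follow from the construction.'' Your single-node \emph{refinement lemma} together with the observation that the atom nodes of $D_{(c-1)}$ are pairwise non-adjacent (so the local replacements do not interfere) is the standard way to unpack such a citation, and your check of the connectedness condition --- the only non-trivial point --- is sound. One small remark: you invoke that a separator node $u$ adjacent to an atom node $t$ satisfies $\beta(u)\subseteq\beta(t)$. This is not listed among the explicit hypotheses on $D_{(c-1)}$ in the paragraph preceding the Fact, but it is an invariant maintained by the inductive construction (starting from Lemma~\ref{le:0-atoms} and preserved by each application of Fact~\ref{fa:decomposition-step}), and the paper's own description of the splicing silently relies on it as well; it would be cleanest to add it to the induction hypothesis explicitly.
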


The final proof of Lemma~\ref{le:main-decomposition} shows how the construction
of Fact~\ref{fa:refinement-step} can be done by an \MSO-trans\-duc\-tion. It
also needs to turn a given graph, which can possibly be disconnected, into a
tree decomposition whose bags induce the connected components of the
graph. Since this is a special case that is not covered by the above
constructions, we first prove it separately. In the context of \MSO-definable
tree decompositions, we use the concept of tree extensions. In order to do that,
we use the following convention: when we say that the bags of a tree
decomposition (or tree extension) are $c$-atoms, we mean that the subgraphs
induced by the bags are $c$-atoms. We frequently use the fact that there is an
\MSO-formula for each of the following properties of vertex subsets $V'
\subseteq V$ of a given graph $G$: $V'$ is a clique separator, $V'$ is a
$c$-clique separator for some fixed, but arbitrary, $c \in \NN$, $G[V']$ is an
atom, $G[V']$ is a $c$-atom for some fixed, but arbitrary, $c \in \NN$.

\begin{lemma} 
  \label{le:0-atoms}
  There is an $\MSO[\tau,\tau^+]$-transduction $\Lambda_{\textnormal{comp}}$
  that defines for every graph $G$ a tree extension whose tree decomposition 
  \begin{enumerate}
  \item has a single node with an empty bag (representing the empty separator), and
  \item for each component of $G$ exactly one node whose bag equals
    the vertex set of it.
  \end{enumerate}
\end{lemma}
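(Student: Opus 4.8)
The plan is to build the transduction $\Lambda_{\textnormal{comp}}$ with a single monadic parameter $P$ whose intended meaning is: $P$ picks out exactly one vertex (a ``representative'') from each connected component of $G$. First I would write an $\MSO[\tau]$-formula $\lambda_{\textsc{valid}}$ asserting that $P$ is a \emph{transversal of the components}, i.e.\ (i) no two distinct vertices of $P$ lie in the same component, and (ii) every vertex is in the same component as some vertex of $P$. Connectivity of two vertices is $\MSO$-expressible (``every vertex set that is closed under the edge relation and contains $u$ also contains $v$''), so $\lambda_{\textsc{valid}}$ is a genuine $\MSO[\tau]$-formula. For every graph $G$ there is at least one such $P$ (choose any system of representatives), so $\Lambda_{\textnormal{comp}}[G]\neq\emptyset$, and for every valid $P$ the produced structure will be the \emph{same} tree extension up to isomorphism, since the tree nodes are canonically indexed by the components plus one extra node.

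Next I would set the width to $w=2$, so each original vertex $a$ has two copies $(a,1)$ and $(a,2)$. The plan is: level-$1$ elements are the vertices of $G$ (so $V_S$ is exactly the level-$1$ part), and level-$2$ elements serve as the tree nodes. Concretely, $\lambda_U^1(x):=\true$ (keep every vertex) and $\lambda_U^2(x):=\bigl(x\in P \;\vee\; \lnot\exists y\,E(x,y)\text{-isolated?}\bigr)$ — wait, more cleanly: let $\lambda_U^2(x)$ hold iff $x\in P$ or $x$ is a fixed distinguished vertex chosen to represent the ``empty-separator'' node. To avoid picking an arbitrary vertex, a cleaner route is to let the empty-bag node be the level-$2$ copy of the $\preceq$-minimal element of $P$ under an $\MSO$-definable choice, or — simplest — to allow $w=2$ and declare the empty-bag node to be $(a_0,2)$ where $a_0$ is the unique element of $P$ lying in a component selected by a further guess; but since we only need existence of \emph{some} output isomorphic to the intended tree extension, I would just add the empty-bag node as $(a,2)$ for the specific representative $a\in P$ singled out by an extra monadic parameter $P_0\subseteq P$ forced by $\lambda_{\textsc{valid}}$ to be a singleton. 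Then $V_T$ consists of all level-$2$ elements: one ``component node'' $(a,2)$ for each $a\in P\setminus P_0$, plus the ``root/empty-separator node'' $(a_0,2)$ where $P_0=\{a_0\}$.

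The remaining relations are routine to define. For $E_T$ I put an edge from the empty-separator node $(a_0,2)$ to every component node $(a,2)$ with $a\in P$, which makes $T$ a star — a tree — as required; formally $\lambda_{E_T}^{(2,2)}(x,y):= \bigl(x\in P_0 \wedge y\in P\bigr)$. For $R_\beta$ (``node $t$ contains vertex $v$'') I put $(a_0,2)$'s bag empty and, for a component node $(a,2)$, its bag equal to the whole component of $a$: $\lambda_{R_\beta}^{(2,1)}(x,y):= x\in P \wedge \textsf{connected}(x,y)$, and $\lambda_{R_\beta}^{(2,1)}$ is $\false$ when $x\in P_0$ after removing $a_0$ from $P$; handling the overlap $P_0\subseteq P$ just needs a disjunction on whether $x\in P_0$. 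The vocabulary-$\tau$ part ($E$) is kept unchanged on level-$1$ elements via $\lambda_E^{(1,1)}(x,y):=E(x,y)$ and is empty otherwise. One then checks the connectedness and cover conditions: each vertex lies in exactly one bag (its component node), so its subtree of nodes is a single node, hence connected; and every edge $(v,w)$ has both endpoints in the same component, hence in one bag. This gives conditions (1) and (2) of the statement.

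The step I expect to be the main (minor) obstacle is the bookkeeping around the empty-separator node: it must be a genuine tree node with an empty bag, yet our only ``spare'' universe elements are second copies of vertices, so it has to be carved out of the $P$-representatives without accidentally deleting a component node. The clean fix is the extra forced-singleton parameter $P_0\subseteq P$ described above (so the empty-bag node is $(a_0,2)$ and the component of $a_0$ is instead represented by \emph{re-using} $a_0$ only for the empty node while its component still gets a node — which forces $w=3$ if one wants $a_0$'s component to also have its own node, or $w=2$ together with the observation that $G$ may be empty and then there are no component nodes at all). I would therefore take $w=3$ to be safe: level $1 =$ vertices, level $2 =$ component nodes (copies $(a,2)$, $a\in P$), level $3 =$ the single empty-separator node $(a_0,3)$ with $P_0=\{a_0\}$; edges of $T$ go from $(a_0,3)$ to each $(a,2)$. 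Everything else goes through unchanged, and all defining formulas are plainly $\MSO[\tau]$ over $\tau\cup\{P,P_0\}$, so $\Lambda_{\textnormal{comp}}$ is an $\MSO[\tau,\tau^+]$-transduction as required.
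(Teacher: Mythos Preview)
Your final construction is correct and is essentially the paper's own proof: the paper likewise uses a width-$3$ transduction whose parameters pick one representative vertex per connected component together with a distinguished vertex carrying the empty-separator node, with level~1 holding $V(G)$, level~2 holding the component (atom) nodes, and level~3 holding the single separator node. The only cosmetic difference is that the paper roots the resulting star at one of the atom nodes rather than at the separator node.
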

\begin{proof}
  % Idea and structure of the transduction
  The main idea is to guess, via parameters, a set of vertices of the graph
  whose copies in the tree extension represent the atoms and separators in the
  decomposition; the term \emph{represent} hints to the fact that we are able to define
  the vertex set of the corresponding atom or separator in an \MSO-definable way
  from the atom node or separator node, respectively. The transduction
  $\Lambda_{\textnormal{comp}}$ has three parameters $\rootparam_0$,
  $\atomparam_0$, and $\cliqparam_0$ and three levels: Level 1 contains copies of
  the vertices of the original graph $G$, level 2 contains the atom nodes of the
  decomposition, and level 3 contains the separator nodes of the
  decomposition. 

  % Parameters
  First of all, the formula $\lambda_\textsc{valid}$ tests
  whether the parameters are chosen in a way that allows the other formulas to
  define the tree extension from them. It ensures the following properties:
  $\rootparam_0$ contains exactly one vertex that we call $v_r$ in the
  following, $\rootparam_0 \cup \atomparam_0$ contain exactly one vertex
  from each connected component of $G$, and $\cliqparam_0$ contains exactly
  one vertex that we call $v_c$ with $v_c \in \atomparam_0$. Thus, $v_c$ is used
  to both represent an atom and to represent the unique separator in the
  construction. 

  % Transduction
  For each $v \in \rootparam_0 \cup \atomparam_0$, the transduction defines
  $\beta(v,2)$ to be the vertex set of the connected component in which $v$
  lies. Moreover, we set $\beta(v_c,3) := \emptyset$. We create an edge between
  $(v_c,3)$ and each $(v,2)$ for $v \in \atomparam_0$. Moreover, edges are
  oriented away from the root $v_r$.
\end{proof}

We are now ready to prove Lemma~\ref{le:main-decomposition}. The remaining
difficulty for the proof lies in defining the construction of
Fact~\ref{fa:refinement-step} in an \MSO-definable way, which involves defining
the construction of Fact~\ref{fa:decomposition-step} simultaneously for all atom
nodes. 

\begin{proof}[Proof of Lemma~\ref{le:main-decomposition}]  
  We first turn a given graph into a tree decomposition whose bags are the
  graph's connected components using $\Lambda_{\textnormal{comp}}$ from
  Lemma~\ref{le:0-atoms}. Next, we refine this decomposition $k+1$ times using
  \MSO-transductions that implement the construction from
  Fact~\ref{fa:refinement-step}. Finally, the lemma follows since
  \MSO-transductions are closed under composition.

  Let $c \geq 1$ and $G^+$ be a tree extension with a tree decomposition
  $(T,\beta) = (T_{(c-1)},\beta_{(c-1)})$ as described above. In order to \MSO-define the
  construction of Fact~\ref{fa:refinement-step}, we use an
  $\MSO[\tau^+,\tau^+]$-transduction $\Lambda_c$, which transforms tree extensions
  into tree extensions. Similar to the transduction of the proof of
  Lemma~\ref{le:0-atoms}, it has three parameters, but this time they are called
  $\rootparam_c$, $\atomparam_c$, and $\cliqparam_c$. Moreover, it has three
  levels: to level 1 we copy the vertex set of the underlying graph and
  decomposition nodes whose bags are not refined, level 2 contains newly
  constructed atom nodes, and level 3 contains newly constructed separator nodes.
  The parameters have to satisfy certain properties similar to the ones in the
  proof of Lemma~\ref{le:0-atoms}, but they are more involved due to the
  following reasons. First, we need to
  make sure that all atoms can be refined simultaneously. Second, we need to
  make sure that each new atom node represents a unique atom. In the proof of
  Lemma~\ref{le:0-atoms} the connected components, which are 0-atoms, are
  disjoint and, thus, it was possible to choose a vertex from each component. In
  the case of $c$-atoms for $c \geq 1$, a vertex can be part of multiple
  atoms. In order to work around this problem, we utilize the tree-like partial
  order that is given by the decomposition with respect to the chosen root.

  We start with the existing tree extension $G^+$ and consider where it needs to
  be modified. Since $\Lambda_c[G^+]$ will be a refinement of $G^+$ where new
  separator nodes are added, but existing separator nodes do not change, all of
  the separator nodes present in $G^+$ can be copied to level 1 directly without
  modification. On the other hand, the atom nodes in $G^+$ are refined if they
  contain a $c$-clique separator, so altogether the formula
  $\lambda^1_{V_T}(t)$ is satisfied only for some $t \in V(T)$: either if $t$
  is a separator node, that means, where $\beta(t)$ induces a clique of size
  up to $c$; or otherwise if the size of $\beta(t)$ is larger than $c$ and there
  is no $c$-clique separator. This effectively removes exactly those atom nodes
  which have a
  $c$-clique separator and which we thus need to decompose further.  We
  define $\lambda^1_{R_\beta}$, such that $R_\beta(t) = \beta(t)$ because we do
  not want the bags of these copied nodes to change, and similarly, the edges
  between any pair of copied nodes $s,t$ remain the same, so we define
  $\lambda^{1,1}_{E_T}(s,t)$ to be satisfied precisely if $(s,t) \in E(T)$.
  
  As a reminder, the indices of the formulas in a transduction specify a level
  for each of its free variables -- so as an example for a binary relation like
  $E_T$, the formula $\lambda^{2,3}_{E_T}(v,w)$ being satisfied for two concrete
  vertices $a = v$ and $b = w$ would mean that the vertex $(a,2)$ (the copy of $a$ on
  level 2) is connected to the vertex $(b,3)$ (the copy of $b$ on level 3) in the
  tree $T$ defined by the transduction.
  The transduction then constructs the relation $E_T$ by taking the union over all
  satisfying assignments of $\lambda^{i,j}_{E_T}(v,w)$ for all pairs of levels $i,j$.

  Next, we define the new atom and separator nodes, as well as their connectivity
  to the forest $D$ resulting from the described removal of atom nodes and their
  incident edges from $T$. Let $t \in V(T)$ be an atom node that is deleted and
  set $A_t \definedas G[\beta(t)]$, which contains at least one $c$-clique
  separator. We define a partial tree decomposition $D_t$ of $A_t$ into $c$-atoms,
  and then show how $D_t$ is reinserted into the forest $D$ in place of the
  deleted node $t$. We keep in mind that $A_t$ is a $(c-1)$-atom and, thus, free
  of any clique separators up to size $c-1$. Like in the construction of
  Fact~\ref{fa:refinement-step}, the root atom node of each decomposition $D_t$ is
  chosen so that it contains the $C = \beta(s)$ where $s$ is the parent
  separator node of $t$ in $T$. If $t$ is itself the root of $T$ and thus has no
  parent, then consider $C = \emptyset$ in the following.

  \textbf{Parameters of the transduction and their validity properties.}  We
  describe the properties of the parameters verified by
  $\lambda_\textsc{valid}$. They are used to single out a unique vertex of $G^+$
  for each $c$-atom and each $c$-clique separator, as well as a unique $c$-atom
  assigned to the root of each partial tree decomposition $D_t$ with the property
  describe above.

  % parameter ROOT
  The parameter $\rootparam_c$ contains exactly one vertex for each $c$-atom
  that includes a $c$-clique separator. We aim to find some $r \in A_t \cap
  \rootparam_c$ as the unique root vertex of a $c$-atom $A_t$. Since $r$ is
  supposed to represent the root node $(r,2)$ of $T_t$ that is later connected to
  the parent separator node $s$ of $t$ when reinserting the partial tree
  decomposition $D_t$ into the forest $D$, the root atom has to contain the
  clique $C$ in its own bag. There are
  potentially multiple atoms that satisfy this property.  If we consider the tree
  decomposition from Fact~\ref{fa:decomposition-step} on the subgraph $A_t$, then
  the set of $c$-atoms containing the clique $C$ form a subtree (due to the
  cover and connectedness condition of any tree decomposition).
  The leaves of this subtree are $c$-atoms which contain at
  least one vertex $r$ that is not present in any of the other $c$-atoms from
  $A_t$ that include all of $C$. Note that this either immediately implies that
  either $r \notin C$, or there is just a single candidate $c$-atom, in which case
  we may freely pick an $r \notin C$. This suffices as a unique identifier of the
  root $c$-atom of $A_t$, because then any other $c$-atom containing $r$ cannot
  contain all of $C$. An \MSO-formula can ensure that for each $(c-1)$-atom $A_t$ that
  is decomposed further, $\rootparam_c$ contains a single root vertex $r$ from the
  described candidates for this $A_t$. Since $r \notin C$, the respective $r$ can
  only appear in bags in the subtree of $T$ below $t$. So if $r$ appeared again in
  the root of a different $(c-1)$-atom that gets decomposed further, it would
  necessarily be in the bag of the separator node just above that $(c-1)$-atom, a
  contradiction. So there is a one-to-one correspondence between $(c-1)$-atoms $A_t$
  that get decomposed further and the vertices $r \in \rootparam_c$. This shows
  $\rootparam_c$ has the desired properties for all $A_t$ simultaneously on all of
  $G$.

  % parameter ATOM
  For the other $c$-atom representatives we utilize the fact that the
  $c$-cliques between any two $c$-atoms in $A_t$ can be linearly ordered. To see
  this, remember the construction of the tree decomposition from
  Fact~\ref{fa:decomposition-step}. In particular, for any vertex $v \in A_t$
  outside of the root $c$-atom, we can define the $c$-clique separator $S$
  \emph{closest to $v$ compared to the root atom} in the sense that $S$ separates
  a vertex of the root atom of $A_t$ from $v$, but no other $c$-clique $S'$
  separates a vertex of the root atom from both $v$ and a vertex of $S$. We define
  an \MSO-formula $\formel{closest-clique-separator}^c(v,S)$, which is satisfied
  exactly for vertices $v$ and $c$-cliques $S$ that satisfy this property. Note
  that this formula works globally on all of $G^+$, because the root vertex $r$ of
  each $A_t$ an be retrieved from the parameter $\rootparam_c$. So for each
  $c$-atom $A$ within $A_t$, we define the (nonempty) set $Z_{A}$ of vertices of
  this atom which are not in its closest $c$-clique separator. For different
  $c$-atoms, these sets are distinct -- since an overlap would mean that this
  vertex would appear in the $c$-clique~separator between them, which is then a
  closer clique~separator for one of the $c$-atoms, a contradiction. Via the
  parameter $\atomparam_c$, we guess a single vertex of $Z_{A}$ for each
  $c$-atom. An \MSO-formula can verify that conversely, no two vertices of
  $\atomparam_c$ are in the same set $Z_{A}$. This establishes the one-to-one
  correspondence of every $a \in \atomparam_c$ to the sets $Z_{A}$ and thus, the
  non-root $c$-atoms $A$ in all of $G$. Remember that the root
  atoms are already covered above by $\rootparam_c$.

  % parameter CLIQUE
  To define representatives for the separator nodes of $T_t$, we make the following
  observation: in the tree decomposition, each separator node will have at least
  one atom node as its child. Consequently, we use the representative of a child
  atom nodes also as the representative of its closest $c$-clique separator
  towards the root vertex $r$.  This overlap explains why we use separate levels
  for atom and separator nodes.  We use the parameter $\cliqparam_c$ to guess these
  representatives, and have to only verify $\cliqparam_c \subseteq \atomparam_c$,
  that no two vertices in $\cliqparam_c$ have the same closest~$c$-clique
  separator, and that for each $c$-clique separator $S$ some vertex $v \in
  \cliqparam_c$ exists that has $S$ as its closest separator. This guarantees the
  one-to-one correspondence of $c$-clique separators and vertices in
  $\cliqparam_c$ not just for $A_t$, but for all of~$G$.

  \textbf{Defining the construction of Fact~\ref{fa:decomposition-step}.}  We
  follow the construction of the decompositions from
  Fact~\ref{fa:decomposition-step}. We define the formula $\lambda^2_{V_T}(v)$
  such that it is satisfied exactly for the vertices $v \in \rootparam_c \cup
  \atomparam_c$, and $\lambda^3_{V_T}(v)$ such that it is defined exactly for the
  vertices $v \in \cliqparam_c$. The properties of these parameters as discussed
  above can be defined in \MSO.

  We now know that for a separator node $(v,3)$ created in this way, the
  clique separator it represents is the closest $c$-clique separator $S$ towards
  the unique root $r \in \rootparam_c \cap A_t$, which we can extract using the
  formula \MSO- formula $\formel{closest-clique-separator}^c(v,S)$ and thus set
  $R_\beta(v,3) = S$ by defining the formula $\lambda^{3,1}(v,x)$ such that it
  is satisfied exactly for $x \in S$. Conversely, for an atom node $(v,2)$ created
  this way, we extract the $c$-atom $A$ it represents by finding the closest
  $c$-clique separator $S$ towards $r$. The atom $A$ is then the set of vertices
  which either have $S$ as its closest $c$-clique separator, which means they are
  in the set $Z_{A}$ defined above, or which are itself in $S$. We can then
  define $R_\beta(v,3) = A$ similarly to above. This sets up the bags of the
  separator and atom nodes to be exactly the set of vertices of the clique separator
  (and respectively, atom) which they represent.

  Finally, we define the edges between nodes in $T_t$. Remember that the
  construction from Fact~\ref{fa:decomposition-step} connects an atom and a
  separator node if the bag of the separator node is completely contained in the bag
  of the atom node. We only have to define this as a directed tree decomposition
  rooted in the atom node $(r,2)$ of $A_t$ for the unique root
  representative $r \in A_t \cap \rootparam_c$.

  We use the formula $\lambda^{3,2}_{E_T}$ to express that there is an edge from
  a separator node $(u,3)$ to an atom node $(v,2)$ precisely if the vertices $u$ and
  $v$ have the same closest $c$-clique~separator $S$, which is unique.  Then by
  the above, the bag of $(u,3)$ is precisely this separator $S$ and this $S$ is
  completely contained in the bag of $(v,2)$, so the desired property is
  satisfied. Similarly, we use the formula $\lambda^{2,3}_{E_T}$ to express that
  there is an edge from an atom node $(v,2)$ to a separator node $(u,3)$ precisely if
  the bag of $(u,3)$ is completely contained in the bag of $(v,2)$, but they do
  not have the same closest $c$-clique~separator $S$. Thus, we constructed a
  tree decomposition $D_t$ as described in Fact~\ref{fa:decomposition-step}.

  \textbf{Defining the construction of Fact~\ref{fa:refinement-step}.}
  We now move from the view of the single $(c-1)$-atom $A_t$ and its tree
  decomposition $T_t$ to the global view on all of $G^+$. If we stopped defining
  the rest of the transduction here, the decomposition graph would now be the
  forest $D$ together with all partial tree decompositions $D_t$ for removed
  $(c-1)$-atom~nodes~$t$.  It remains to define how this forest is merged back
  together into a single tree decomposition.

  Let $t$ be a deleted $(c-1)$-atom node and $T_t$ the newly constructed tree of 
  the partial decomposition $D_t$ into $c$-atoms on the bag $A_t \definedas 
  G[\beta(t)]$. Further let $s$ be the parent of $t$ in $T$, which is a
  separator node. We use the formula $\lambda^{1,2}_{E_T}$ to define the edges from
  $s$ to the root of $T_t$ and thereby reattach the partial tree decompositions at
  the appropriate position. So $\lambda^{1,2}_{E_T}(s,u)$ is satisfied if $s$ is a
  separator node with a deleted child node, $u \in \rootparam_c$, and $\beta(s)$
  is the closest $c$-clique separator of $u$ since this means precisely that 
  the node $(u,2)$ has the root atom of $D_t$ as its bag.

  For the formulas $\lambda^{2,1}_{E_T}$, finding the correct point of
  attachment is a bit more involved. If $t$ had no child nodes,
  there is nothing to reattach. Otherwise we have to consider all former child
  nodes $s_1,\ldots,s_n$ of the deleted node $t$, each of which is a separator node
  according to Fact~\ref{fa:decomposition-step}. Each of their bags
  is a clique, and we would thus receive a valid tree decomposition if we
  connected each $s_j$ to an atom node $t_j$ of $T_t$ such that
  $\beta(s_j) \subseteq \beta(t_j)$ for all $j \in [n]$. Following the
  construction of Fact~\ref{fa:refinement-step}, to find a unique
  connection point, we take a closer look at the potential choices of the
  compatible atom nodes for a node $s_j$: due to the connectedness property, the
  set of nodes in $T_t$ which contain all of $\beta(s_j)$ is connected. This means
  that there is a unique atom node $t_j^*$ in this tree which lies closest to the
  root of $T_t$.  Moreover, because the set of nodes that include the clique
  $\beta(s_j)$ is connected in $T_t$, this node can be found in \MSO\ by asking
  for a node whose bag includes $\beta(s_j)$, but whose parent node in $T_t$ does
  not include $\beta(s_j)$.  We can thus define $\lambda^{2,1}_{E_T}(t,s)$ to be
  satisfied precisely if $s=s_j$ and $t=t_j^*$ hold, which is \MSO-definable.
  This concludes the reintegration of $T_t$ and finishes the description of the
  \MSO-transduction that implements the construction of
  Fact~\ref{fa:refinement-step}.
\end{proof}

\subsection{Defining Tree Decompositions into 3-Connected Components}
\label{sec:transduction-into-3cc}

A graph $G$ is \emph{$k$-connected} if $|G| > k$ and $G$ has no separator $S
\subseteq V(G)$ of size $|S| < k$. Courcelle \cite{Courcelle1999} showed that
one can use \MSO-transductions to define tree decompositions into 3-connected
components. We formulate this result with respect to the notion of tree
extensions as Fact~\ref{pro:3connected-mso-definable}.

\begin{fact}
\label{pro:3connected-mso-definable}
  There is an \MSO-transduction $\Lambda_{\textnormal{3-comp}}$ that defines 
  tree extensions whose torsos (1) are 3-connected, cycles, a single edge, or a
  single vertex, and (2) separators
  have size at most 2 for all graphs.
\end{fact}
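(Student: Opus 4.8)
The statement is essentially a reformulation, in the tree‑extension formalism, of Courcelle's theorem on hierarchical decompositions of $2$‑connected graphs~\cite{Courcelle1999}, so the plan is to derive it by composing three simple \MSO-transductions, peeling off one connectivity layer at a time. First I would handle disconnected graphs by a transduction analogous to $\Lambda_{\textnormal{comp}}$ of Lemma~\ref{le:0-atoms}, producing a tree extension whose bags are the connected components; here the separators are empty and the torsos are just the components. Next, inside each component, the block–cut tree is \MSO-definable: its nodes are the cut vertices together with the blocks (maximal $2$‑connected subgraphs and bridge edges), it is genuinely a tree, and all its separators have size exactly one. Representing each block by a guessed vertex representative (in the parameter style of the proofs of Lemmas~\ref{le:0-atoms} and~\ref{le:main-decomposition}) and defining tree edges by containment turns this into an $\MSO[\tau^+,\tau^+]$-transduction. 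After this step every bag either induces a single vertex, a single edge, or a $2$‑connected graph, and since all separators still have size at most one, the torso of a bag equals the induced subgraph.

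The heart of the argument is the decomposition of a $2$‑connected graph into its $3$‑connected components. Here I would invoke Courcelle's result~\cite{Courcelle1999} that the canonical (Tutte) decomposition of a $2$‑connected graph along its $2$‑separations — whose pieces are $3$‑connected graphs, cycles, and bonds — is \MSO-transductible: one can \MSO-define the set of split pairs that are used, the pieces, and the underlying (SPQR-)tree structure. The only care needed is the bookkeeping for the ``virtual edges'': a virtual edge of a piece is a pair $\{u,v\}$ forming a $2$‑separation, and two pieces are adjacent in the tree exactly when they share such a pair. In the tree‑extension formalism this pair becomes a separator of size $2$ between the corresponding bags, so that forming the torso of a bag reinserts precisely the virtual edges, recovering the $3$‑connected component, the cycle, or the single edge. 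A bond with several parallel edges between $u$ and $v$ collapses, in a simple graph, to the single edge $uv$; cycles remain cycles; $3$‑connected pieces remain $3$‑connected. Hence the torsos are of the four claimed types and every separator introduced in this step has size at most $2$.

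Finally, composing the three transductions by Fact~\ref{fa:trans-comp} yields a single $\MSO[\tau,\tau^+]$-transduction $\Lambda_{\textnormal{3-comp}}$ with the stated properties; nonemptiness of $\Lambda_{\textnormal{3-comp}}[G]$ for every $G$ follows because each stage produces a nonempty set of structures for the valid choices of its parameters. I expect the main obstacle to be the detailed verification that Courcelle's decomposition can be rendered in the parameter‑guessing style used throughout this section — in particular, choosing \emph{unique} vertex representatives for the pieces of the Tutte decomposition, which is delicate precisely because a single vertex can belong to many pieces (exactly the difficulty already met in the proof of Lemma~\ref{le:main-decomposition}), and confirming that the separators between adjacent bags have size exactly two rather than larger. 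Once the representatives are fixed, defining the bags, the separators, and the tree edges by \MSO-formulas that refer to the closest $2$‑separator towards a chosen root is routine, in complete analogy with the proof of Lemma~\ref{le:main-decomposition}.
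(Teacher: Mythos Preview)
Your proposal is correct, but note that the paper does not actually prove this statement: it is stated as a \emph{Fact} and attributed directly to Courcelle~\cite{Courcelle1999}, with no argument beyond the citation. What you have written is therefore not so much a comparison target as a sketch of how one would reprove Courcelle's result in the tree-extension language of this paper. Your three-layer plan (connected components $\to$ block--cut tree $\to$ Tutte/SPQR decomposition of each block) is exactly the standard route, and the observations you make about virtual edges becoming torso edges, bonds collapsing to single edges in simple graphs, and the resulting separators having size at most~$2$ are all accurate. The difficulties you anticipate---choosing unique vertex representatives for pieces that may overlap in many vertices, and defining the tree edges via ``closest $2$-separator towards a root''---are real and are handled in Courcelle's paper; your analogy with the machinery of Lemma~\ref{le:main-decomposition} is apt. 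In short: the paper simply cites the result, and your sketch is a faithful outline of the cited proof.
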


The torsos of the tree decomposition produced by
Fact~\ref{pro:3connected-mso-definable} always induce topological subgraphs; a
\emph{topological subgraph} $G'$ of a graph $G$ arises by taking a subgraph of
$G$ and replacing some paths with edges. Later we use this insight since
whenever a graph $G$ does not contain a certain graph $H$ as a minor, then this
also holds for each of its topological subgraphs. In our application $H$ equals
$K_{3,\ell}$ for some $\ell \in \NN$.

\section{Defining Orderings}  
\label{sec:orderings}

In the previous section, we have seen how to define tree decompositions along
clique separators and discussed how to define tree decompositions into
3-connected components. In the present section we further define total orders
for the bags of these decompositions whenever our graphs have bounded tree width
or exclude a $K_{3,\ell}$-minor for some $\ell$. The latter covers planar graphs
since they exclude the minor $K_{3,3}$.

\subsection{Orderings Definable in Monadic Second-Order Logic}
\label{sec:orderings-mso}

Our bag orderings are based on applying the following result of Blumensath and
Courcelle~\cite{BlumensathC2014}. In order to state it formally, we introduce
some terminology. Let $\tau$ be a vocabulary that does not contain the binary
relation symbol $\leq$. We say that an $\MSO[\tau,\tau \cup
\{\leq\}]$-transduction $\Lambda$ \emph{defines orderings} on a class $\CC$ of
$\tau$-structures if the following holds for every $A \in \CC$: $\Lambda(A) \neq
\emptyset$ and every $B \in \Lambda(A)$ is an expansion of $A$ with a binary
relation $\le^B$ that is a linear order of $U(B)$. A class $\CC$ of graphs has
the \emph{bounded separability} property if there is a function $s \colon \NN
\to \NN$, such that for all graphs $G \in \CC$ and vertex sets $S \subseteq
V(G)$, the number of components of $G\setminus S$ is bounded by $f(|S|)$. The
below fact refers to \GSO-logic on graphs; it is defined by taking \MSO-logic on
graphs and extend it with the ability to quantify over subsets of a graph's
edges~\cite{Graedeletal2002}.

\begin{fact}
  \label{fa:mso-orderings}
  Let $\CC$ be a class of graphs with bounded separability that excludes
  $K_{\ell,\ell}$ as a minor for some $\ell \in \NN$. There is a
  \GSO-transduction $\Lambda_{\textsc{order-sep}}$ that defines total orderings
  on $\CC$. 
\end{fact}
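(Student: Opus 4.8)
The plan is to obtain this as a reformulation of a theorem of Blumensath and Courcelle~\cite{BlumensathC2014} on monadic second-order definable graph orderings, rather than to build the ordering from scratch. Their framework is most conveniently applied through the incidence encoding, which is also where the edge quantification of \GSO\ comes in: \GSO\ on a graph $G$ is the same as \MSO\ on the incidence structure $\mathrm{Inc}(G)$, whose universe is $V(G)\cup E(G)$ and which carries the incidence relation, and a \GSO-transduction of $G$ producing expansions $(G,\le)$ corresponds to an \MSO-transduction of $\mathrm{Inc}(G)$ producing an expansion by a linear order, from which a linear order of $V(G)$ is recovered by restriction. So it suffices to \MSO-define orderings on the class $\{\mathrm{Inc}(G)\mid G\in\CC\}$, and this class is well suited to the Blumensath--Courcelle hypotheses: excluding a fixed complete bipartite graph as a minor forces a linear bound on the number of edges in every subgraph of a member of $\CC$ (a standard fact of minor theory), so $\CC$, and hence its incidence-graph class, is uniformly sparse; an incidence graph contains no $K_{2,2}$ as a subgraph, since two distinct edges of a simple graph cannot both be incident to the same two vertices, so it is in particular $C_4$-free; and bounded separability of $\CC$ is passed on to $\{\mathrm{Inc}(G)\}$ up to enlarging the bounding function, because deleting a set of elements of $\mathrm{Inc}(G)$ cannot shatter it worse than deleting the corresponding vertices and edges of $G$.

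The content then is the Blumensath--Courcelle ordering construction, which I would invoke as a black box and whose hypotheses the previous paragraph meets. Its idea is to build the order from a recursive separator decomposition: one guesses, using a fixed number of monadic (set) parameters, a rooted-tree hierarchy of small vertex separators whose nodes are ``regions'' of the graph, the children of a region being the connected components left after deleting its boundary. Bounded separability is exactly what keeps this hierarchy finitely branching and keeps every region anchored by a boundary of bounded size, so that the relative position of a region among its siblings can be encoded with boundedly many colours; uniform sparsity (the consequence of excluding $K_{\ell,\ell}$ as a minor) is what makes the neighbourhood of each separator finitely describable, so that the linear order \emph{within} one layer --- the part of the graph between consecutive separators --- is \MSO-definable from the parameters. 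Composing the layer-orders along the hierarchy depth-first yields one global linear order, and the validity formula $\lambda_{\textsc{valid}}$ of the transduction discards the parameter settings that do not encode a legitimate hierarchy.

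The difficulty I expect to be decisive --- and the reason the construction is driven by the monadic parameters rather than by a plain inductive definition --- is that the separator hierarchy has unbounded depth, so an \MSO\ formula cannot unfold it layer by layer: the entire hierarchy must be committed to at once inside the finitely many parameters, and a single \MSO\ formula must then read the correct global order off it. Arranging this to be simultaneously consistent and bounded in the number of parameters is the heart of \cite{BlumensathC2014}, and it is there that both hypotheses are genuinely needed: without bounded separability the statement already fails, as stars (or unboundedly many disjoint edges) are symmetric enough that no \GSO-transduction can order them, and without excluding $K_{\ell,\ell}$ as a minor it fails as well, since the graphs $K_{n,n}$ have bounded separability yet admit no \GSO-definable ordering. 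Granting the cited theorem, one applies it to $\{\mathrm{Inc}(G)\mid G\in\CC\}$, pulls the resulting \MSO-transduction back along $G\mapsto\mathrm{Inc}(G)$, and restricts each produced order to $V(G)$, which gives the desired \GSO-transduction $\Lambda_{\textsc{order-sep}}$.
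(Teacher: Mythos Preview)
Your proposal is correct and matches the paper's approach: the paper states this as a \emph{Fact} attributed to Blumensath and Courcelle~\cite{BlumensathC2014} and gives no proof at all, so your plan to invoke their theorem as a black box is exactly what is intended. Your additional discussion of the incidence encoding, the role of uniform sparsity, and the necessity of both hypotheses goes well beyond what the paper provides, but is consistent with how the cited result is obtained and used.
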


Since \GSO-logic collapses to \MSO-logic on every class of graphs that exclude a
fixed minor~\cite{Courcelle2003} (in fact, this applies to the more general
class of uniformly $k$-sparse graphs, but we do not need them for our proofs),
and neither bounded tree width graphs nor the $K_{3,\ell}$-minor-free graphs
contain all complete bipartite minors, the fact has the following corollary.

\begin{corollary}
  Let $\CC$ be a class of graphs with bounded separability that excludes
  $K_{\ell,\ell}$ as a minor for some $\ell \in \NN$. There is an
  \MSO-transduction $\Lambda_{\textsc{order-sep}}$ that defines total orderings
  on $\CC$.  
\end{corollary}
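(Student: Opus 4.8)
The plan is to derive the corollary directly from Fact~\ref{fa:mso-orderings} together with the \GSO-to-\MSO\ collapse on minor-closed graph classes. The hypotheses of the corollary are exactly those of Fact~\ref{fa:mso-orderings}, so I would start by invoking that fact to obtain a \GSO-transduction $\Lambda_{\textsc{order-sep}}$ that defines total orderings on $\CC$. The only remaining task is to show that this \GSO-transduction can be turned into an \MSO-transduction that behaves the same way on $\CC$.

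Next, I would unfold the definition of the transduction: it is given by a finite collection of \GSO-formulas over $\{E\}\cup\{P_1,\ldots,P_p\}$, where the $P_i$ are the monadic parameters, and these formulas are only ever evaluated on structures $(G,\bar P)$ with $G\in\CC$. Forgetting the parameters yields a graph from $\CC$, which excludes $K_{\ell,\ell}$ as a minor; since minors of a graph do not depend on any unary vertex predicates, the class of all these coloured structures still excludes $K_{\ell,\ell}$ as a minor. Invoking the collapse of \GSO\ to \MSO\ on classes of graphs that exclude a fixed minor~\cite{Courcelle2003}---in its version for graphs carrying finitely many unary colours---each of the finitely many \GSO-formulas of $\Lambda_{\textsc{order-sep}}$ is then equivalent, over this class, to an \MSO-formula of the same vocabulary.

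Finally, I would replace every \GSO-formula of $\Lambda_{\textsc{order-sep}}$ by such an \MSO-equivalent, obtaining an \MSO$[\{E\},\{E\}\cup\{\le\}]$-transduction $\Lambda'$. Since the replaced formulas agree with the originals on all inputs that can arise, $\Lambda'(A)=\Lambda_{\textsc{order-sep}}(A)$ for every $A\in\CC$; as $\Lambda_{\textsc{order-sep}}$ defines total orderings on $\CC$, so does $\Lambda'$, which proves the corollary. The one step needing genuine (if routine) justification is the transfer of the collapse theorem to the coloured setting---that adding unary vertex predicates destroys neither minor-freeness nor the \GSO--\MSO\ collapse; I expect this, modest as it is, to be the main obstacle. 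Note that bounded separability plays no role in this argument; it enters only through Fact~\ref{fa:mso-orderings}.
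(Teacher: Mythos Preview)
Your proposal is correct and follows essentially the same approach as the paper: invoke Fact~\ref{fa:mso-orderings} to get a \GSO-transduction, then apply the \GSO-to-\MSO\ collapse on classes excluding a fixed minor~\cite{Courcelle2003} to replace it by an \MSO-transduction. You are in fact more careful than the paper, which dispatches the corollary in a single sentence without mentioning the coloured setting; your observation that the collapse extends to graphs with finitely many unary predicates (because the underlying sparsity argument---encoding edge sets by boundedly many vertex sets via bounded-outdegree orientations---is insensitive to vertex colours) is the right justification for that step.
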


\subsection{Defining Orderings in the Bounded Tree Width Case}
\label{sec:ordering-tw}

In general, it is not possible to totally order atoms of bounded tree width in
\MSO\ or, even, \CMSO. An example being a graph made up by $n$ cycles of length
$n$ each connected to two universal vertices $u_1$ and $u_2$, but without an
edge between $u_1$ and $u_2$. Graphs of this kind have bounded tree width and
are atoms, but \CMSO\ is not able to define a total ordering on the graph's
vertices. In the following we show how to preprocess given graphs, such that the
resulting atoms cannot be of the above kind. In particular, the preprocessing
ensures that the two universal vertices in the above example have an edge
between them and, thus, the considered graph is no longer an atom.

Given a graph $G$, its \emph{improved version} $G'$ is the graph with vertex set
$V(G') := V(G)$ and $(v,w) \in E(G')$ holds for every two distinct vertices $v,w
\in V(G')$ if, and only if, $(v,w) \in E(G)$ or there are $\tw(G) + 1$
internally disjoint paths between $v$ and $w$ in $G$. Computing the improved
version of a graph is commonly part of algorithms that construct tree
decompositions~\cite{Lokshtanovetal2014}. Pairs of vertices with $\tw(G) + 1$
internally-disjoint paths between them always lie in a common bag in every tree
decomposition. Thus, connecting pairs with this property with an edge does not
change the tree decompositions of the graph and, moreover, it simplifies the
task of constructing tree decompositions by producing a graph that is closer to
embeddings into $k$-trees for $k = \tw(G)$ than the original graph. The
\MSO-transduction of the below proposition is based on defining a constant
number, $k+1$, of disjoint paths between pairs of vertices of the graph. This
can be done by using $k+1$ set variables where each set colors the vertices of a
single path that does not share vertices with other paths.

\begin{proposition}
  \label{pr:improve}
  Let $k \in \NN$. There is an \MSO-transduction $\Lambda_{\textsc{improve}}$
  that defines the improved version for every graph of tree width at most $k$.   
\end{proposition}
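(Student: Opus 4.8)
The plan is to realize $\Lambda_{\textsc{improve}}$ as a particularly simple $\MSO[\tau,\tau]$-transduction of width $1$ with no parameters: it leaves the vertex set of the input graph $G$ untouched (so $\lambda^1_U(x):=\true$) and only has to redefine the edge relation. Recall that in the improved version $G'$ two distinct vertices $v,w$ are adjacent exactly if $(v,w)\in E(G)$ or $G$ contains $\tw(G)+1$ internally disjoint $v$-$w$ paths. Since we only care about inputs with $\tw(G)\le k$, the number $\tw(G)+1$ of paths we must reason about is bounded by $k+1$. The two ingredients are therefore (a) an $\MSO$-sentence pinning down the exact value of $\tw(G)$ among $\{0,\ldots,k\}$, and (b) for each fixed $m\le k+1$ an $\MSO$-formula $\pi_m(x,y)$ expressing that $G$ has at least $m$ internally disjoint $x$-$y$ paths.

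For ingredient (a) I would use the standard fact that for each fixed $j\in\NN$ the class of graphs of tree width at most $j$ has a finite set $\mathcal F_j$ of forbidden minors, together with the (folklore) fact that ``$H$ is a minor of $G$'' is $\MSO$-expressible for each fixed graph $H$ — one guesses the branch sets as a coloring of $V(G)$ and checks that each class induces a connected subgraph and that the required adjacencies between classes are present. Hence ``$\tw(G)\le j$'' is $\MSO$-definable for each fixed $j$, and so is the sentence $\chi_{=j}$ defined as ``$\tw(G)\le j$'' $\wedge\,\neg$``$\tw(G)\le j-1$'', which asserts $\tw(G)=j$. For ingredient (b), let $\formel{path}(X,x,y)$ be the $\MSO$-formula saying that $x\ne y$, that $x,y\in X$, that $G[X]$ is connected, that $x$ and $y$ each have exactly one neighbour in $X$, and that every other vertex of $X$ has exactly two neighbours in $X$; this holds iff $G[X]$ is a path with endpoints $x$ and $y$. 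Then
\[
  \pi_m(x,y):=x\ne y\wedge\exists X_1\cdots\exists X_m\Bigl(\bigwedge_{i=1}^m\formel{path}(X_i,x,y)\wedge\bigwedge_{1\le i<i'\le m}\forall z\bigl(z\in X_i\wedge z\in X_{i'}\rightarrow z=x\vee z=y\bigr)\Bigr)
\]
says exactly that $G$ has $m$ pairwise internally disjoint $x$-$y$ paths (the $m$ set variables colour the $m$ path interiors, as suggested in the paragraph preceding the statement).

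Combining the pieces, I would set
\[
  \lambda^{1,1}_E(x,y):=x\ne y\wedge\Bigl(E(x,y)\ \vee\ \bigvee_{j=0}^{k}\bigl(\chi_{=j}\wedge\pi_{j+1}(x,y)\bigr)\Bigr),
\]
and take $\lambda_{\textsc{valid}}$ to be ``$\tw(G)\le k$'' (or simply $\true$; this choice only affects inputs outside the relevant class). For a graph $G$ with $\tw(G)=j_0\le k$, exactly the disjunct with $j=j_0$ is active, so the transduction outputs the graph on $V(G)$ in which $v,w$ are adjacent iff $(v,w)\in E(G)$ or $G$ has $j_0+1=\tw(G)+1$ internally disjoint $v$-$w$ paths — that is, precisely the improved version $G'$; and since there are no parameters, $\Lambda_{\textsc{improve}}[G]$ is the single structure $G'$. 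The defined relation is symmetric and irreflexive (because $E$ is, because $\pi_m$ is symmetric in its two free variables, and because of the conjunct $x\ne y$), so the output really is a graph. There is essentially no hard step here: the only non-elementary input is the $\MSO$-definability of bounded tree width, which is standard, and the single point that deserves a little care is writing $\formel{path}$ correctly and checking that $\pi_{j+1}$, evaluated with $j=\tw(G)$, captures exactly the condition in the definition of the improved version.
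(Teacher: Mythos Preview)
Your proposal is correct and follows the same idea the paper sketches in the paragraph preceding the proposition: encode each of the boundedly many internally disjoint paths by a set variable and check the disjointness and path conditions in \MSO. You go one step further than the paper's sketch by first pinning down the exact value $\tw(G)\in\{0,\ldots,k\}$ via the forbidden-minor characterisation, so that you really use $\tw(G)+1$ paths as the definition of the improved version demands; the paper's sketch literally says ``$k+1$'', which for $\tw(G)<k$ would yield a slightly sparser graph (still sufficient for Lemma~\ref{le:tw-separability}, but not the improved version as defined). One harmless imprecision in your write-up: when $(x,y)\in E(G)$ your $\pi_m$ is satisfied by taking all $X_i=\{x,y\}$, so it does not quite say ``$m$ \emph{distinct} internally disjoint paths'' in that case, but the disjunct $E(x,y)$ in $\lambda_E^{1,1}$ already makes this irrelevant for the output.
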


Since \MSO-transductions are closed under composition, we continue to work with
the improved version of the graph instead of the original input graph.

The main reason behind the non-definability of total orderings in the above
example lies in the fact that there is an unbounded number of subgraphs
connected to each other via a small separator. This is not possible if we look
at the bags of decomposed improved graphs. 

\begin{lemma}
  \label{le:tw-separability}
  Let $\CC$ be a class of graphs of bounded tree width that are
  improved and atoms. Then $\CC$ has the
  bounded separability property. 
\end{lemma}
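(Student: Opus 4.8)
The plan is to fix a bound $k$ on the tree width of the graphs in $\CC$ and to show that $f(s):=k\cdot 2^{s}+1$ witnesses bounded separability, i.e.\ that for every $G\in\CC$ and every $S\subseteq V(G)$ the graph $G\setminus S$ has at most $f(|S|)$ connected components. The small cases are immediate: since $G$ is an atom it is connected, so $G\setminus\emptyset$ is connected; and if $S=\{u\}$ and $G\setminus\{u\}$ were disconnected, then $\{u\}$ would be a clique separator of $G$, which an atom does not have. So from now on I assume $|S|\ge 2$.

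The main step is to classify the components of $G\setminus S$ by their \emph{attachment set} in $S$: for a component $C$ let $R_C\subseteq S$ be the set of those vertices of $S$ that have a neighbour in $C$. Because $G$ is connected and $S\neq\emptyset$, every $R_C$ is nonempty; and if $|R_C|=1$, say $R_C=\{u\}$, then $C$ is joined to the rest of $G$ only through $u$, so $\{u\}$ separates $C$ from the nonempty set $S\setminus\{u\}$, again contradicting that $G$ is an atom. Hence $|R_C|\ge 2$ for every component $C$. It therefore suffices to bound, for each fixed $R\subseteq S$ with $|R|\ge 2$, the number of components $C$ with $R_C=R$, and then sum over the at most $2^{|S|}$ choices of $R$.

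To bound this number I would argue that at most $k$ components $C$ can have $R_C=R$. Suppose not, and pick components $C_1,\dots,C_{k+1}$ all having attachment set $R$. For any two distinct $u,v\in R$ and each $i$, the component $C_i$ contains a path from a neighbour of $u$ to a neighbour of $v$; prefixing $u$ and suffixing $v$ yields a $u$--$v$ path whose internal vertices all lie in $C_i$. As the $C_i$ are pairwise disjoint, these $k+1$ paths are internally disjoint, and since $\tw(G)\le k$ we thus have at least $\tw(G)+1$ internally disjoint $u$--$v$ paths in $G$; because $G$ is improved, this forces $uv\in E(G)$. Ranging over all pairs in $R$ shows that $R$ induces a clique in $G$. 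Finally $R$ is a separator of $G$: any path between a vertex of $C_1$ and a vertex of $C_2$ must leave $C_1$ through a vertex of $R_{C_1}=R$, so these two vertices are disconnected in $G\setminus R$. Hence $R$ is a clique separator, contradicting that $G$ is an atom, and the bound follows; combining it with the previous paragraph gives $f(s)=k\cdot 2^{s}+1$.

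The argument is elementary and I do not expect a genuine obstacle; the point that needs care is the last paragraph, where one must check that the $u$--$v$ paths constructed through the various $C_i$ really are internally disjoint (which is why it matters that each $C_i$ is a full connected component of $G\setminus S$, not just some connected subgraph) and that $R$ is a clique separator in the exact sense of the definition --- a clique whose deletion disconnects two vertices outside it. It is also worth noting where the hypotheses enter: the ``improved'' assumption upgrades ``many internally disjoint paths'' into ``an edge'', and the ``atom'' assumption turns the resulting clique separator into a contradiction; dropping either one reinstates the counterexample of $n$ cycles of length $n$ attached to two non-adjacent universal vertices mentioned before the lemma.
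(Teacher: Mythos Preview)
Your proof is correct and uses the same two ingredients as the paper's proof: that in an atom no attachment set of a component can be a clique (otherwise it is a clique separator), and that in an improved graph of tree width at most $k$ no non-adjacent pair $u,v$ can lie in the attachment sets of more than $k$ components (otherwise the $k+1$ internally disjoint paths force the edge $uv$). The difference is purely in the bookkeeping: you group components by their \emph{entire} attachment set $R\subseteq S$ and show each group has size at most $k$, obtaining the exponential bound $f(s)=k\cdot 2^{s}+1$; the paper instead assigns to every component a single \emph{non-edge} $\{u,v\}$ inside its attachment set (which exists precisely because this set is not a clique) and observes that each non-edge is used by at most $k$ components, giving the polynomial bound $n\le\binom{|S|}{2}\cdot k+1$. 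Both suffice for bounded separability, but the paper's counting is tighter and slightly shorter, since it avoids the detour through proving all of $R$ is a clique.
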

\begin{proof}
  Let $G \in \CC$ and $k:=\tw(G)$. Let $S \subseteq V(G)$, and let
  $G_1,\dots,G_n$ be the components of $G\setminus S$.  We shall prove
  that $n\le  {|S| \choose 2} \cdot k + 1$ holds.

  Without loss of generality we assume that $n\ge 2$. For every
  $i\in[n]$, let $S_i$ be the set of neighbors of $G_i$ in $S$. As
  $G$ is an atom, $S_i$ is not a clique in $G$. Thus there are $u,v\in
  S_i$ such that $\{u,v\} \notin E(G)$. Since $G$ is
  improved, we have $u,v\in S_i$ for at most $k$ indices
  $i\in[n]$. As there are $\binom{|S|}{2}$ pairs $\{u,v\}\subseteq S$,
  this implies $n\le\binom{|S|}{2}k$ and, thus, the above inequality holds.  
\end{proof}

We get the following from combining Lemma~\ref{le:tw-separability}
with Fact~\ref{fa:mso-orderings}. 

\begin{corollary}
  \label{cor:ord-tw}
  Let $\CC$ be a class of graphs of bounded tree width that are
  improved and atoms. There is an \MSO-transduction
  $\Lambda_{\textsc{order-tw}}$ that defines a total ordering for
  every $G \in \CC$.
\end{corollary}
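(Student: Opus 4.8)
The plan is to derive the corollary directly by combining Lemma~\ref{le:tw-separability} with the \MSO-version of Fact~\ref{fa:mso-orderings}, that is, with the corollary stated immediately after that fact. Fix $k\in\NN$ such that $\tw(G)\le k$ for every $G\in\CC$; such a $k$ exists precisely because $\CC$ is a class of graphs of \emph{bounded} tree width.

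First I would verify that $\CC$ excludes a fixed complete bipartite graph as a minor. Since $\tw(K_{\ell,\ell})=\ell$ and tree width is minor-monotone, no graph of tree width at most $k$ contains $K_{k+1,k+1}$ as a minor; hence every $G\in\CC$ excludes $K_{k+1,k+1}$ as a minor, and the single parameter $\ell:=k+1$ works uniformly over all of $\CC$. Next, Lemma~\ref{le:tw-separability} applies verbatim: the members of $\CC$ are of bounded tree width, improved, and atoms, so $\CC$ has the bounded separability property. Now both hypotheses of the \MSO-version of Fact~\ref{fa:mso-orderings} are satisfied for $\CC$ with $\ell=k+1$, so that result provides an \MSO-transduction $\Lambda_{\textsc{order-sep}}$ that defines total orderings on $\CC$. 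Setting $\Lambda_{\textsc{order-tw}}:=\Lambda_{\textsc{order-sep}}$ proves the corollary.

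There is essentially no real obstacle here; the statement is a direct instantiation of machinery already in place. The only point requiring (trivial) care is the uniformity of the excluded minor across the whole class, which is immediate from the fact that a single constant $k$ bounds the tree width of all graphs in $\CC$, and the bookkeeping observation that $\Lambda_{\textsc{order-sep}}$ of the earlier corollary is exactly the transduction we rename as $\Lambda_{\textsc{order-tw}}$.
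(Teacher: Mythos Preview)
Your proposal is correct and is exactly the route the paper takes: it simply states that the corollary follows from combining Lemma~\ref{le:tw-separability} with Fact~\ref{fa:mso-orderings} (via its \MSO-level corollary). You have merely spelled out the two hypotheses---bounded separability and the excluded $K_{k+1,k+1}$ minor---that the paper leaves implicit.
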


Using the definable decompositions from the previous section and the just
developed definable orderings, we can prove the results about bounded
tree width and \OIMSO\ as well as \OIFO.

\begin{theorem}
  \label{th:tw-oimso}
  Let $\CC$ be a class of graphs with bounded tree width. Then $\OIMSO = \CMSO$
  on $\CC$. 
\end{theorem}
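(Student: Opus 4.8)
The plan is to apply the lifting theorem for \OIMSO\ (Theorem~\ref{th:lift-oimso}): it suffices to exhibit, for every class $\CC$ of graphs of tree width at most $k$ (for some fixed $k\in\NN$, which exists since $\CC$ has bounded tree width), an \MSO-transduction — hence in particular a \CMSO-transduction — that defines ordered tree decompositions of bounded adhesion on $\CC$. I would build this transduction by composing three \MSO-transductions from earlier sections. First, using $\Lambda_{\textsc{improve}}$ of Proposition~\ref{pr:improve}, I pass to the improved version $G'$ of the input graph $G$, while retaining the original edge relation $E(G)$ alongside $E(G')$ (a transduction can carry both). Since inserting an edge between two vertices joined by $\tw(G)+1$ internally disjoint paths does not change the tree decompositions, every tree decomposition of $G'$ is also one of $G$, and $\tw(G')\le k$. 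Second, I apply $\Lambda_{\tw\le k}$ of Lemma~\ref{le:main-decomposition} to $G'$, obtaining a tree extension whose bags induce atoms of $G'$ and whose separators are cliques of $G'$; as a clique in a graph of tree width at most $k$ has at most $k+1$ vertices, this tree extension has adhesion at most $k+1$.

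It remains to equip this tree extension with a linear order on each bag. Each bag $G'[\beta(t)]$ is an atom of tree width at most $k$, and it inherits enough of the ``improved'' property of $G'$: two non-adjacent vertices of $\beta(t)$ joined by $k+1$ internally disjoint paths inside $G'[\beta(t)]$ would be joined by $k+1$ such paths in $G'$, hence adjacent in $G'$ and thus in $G'[\beta(t)]$. Therefore, by the argument of Lemma~\ref{le:tw-separability}, the class of these bags has the bounded separability property, so by Corollary~\ref{cor:ord-tw} (ultimately Fact~\ref{fa:mso-orderings}) there is an \MSO-transduction $\Lambda_{\textsc{order-tw}}$ that defines a total order on each such bag. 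I would then add the ternary relation $R_\preceq$ by applying a relativized copy of $\Lambda_{\textsc{order-tw}}$ inside every bag at once: for a node $t$, the restriction of $R_\preceq$ to $\beta(t)$ is the order that $\Lambda_{\textsc{order-tw}}$ computes in $G'[\beta(t)]$ from the guessed parameters restricted to $\beta(t)$. Composing the three transductions and projecting away the auxiliary relation $E(G')$ yields an \MSO-transduction that defines, for every $G\in\CC$, an ordered tree extension of $G$ of adhesion at most $k+1$; feeding this into Theorem~\ref{th:lift-oimso} gives $\OIMSO=\CMSO$ on $\CC$.

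The main obstacle is the last step: a single tuple of parameters must make $\Lambda_{\textsc{order-tw}}$ succeed on every bag simultaneously, even though a vertex may lie in many bags and no parameter-free ordering of a path-like bag exists. I would resolve this by exploiting that any two bags of the atom decomposition meet only in a clique of size at most $k+1$ (contained in the separator on the tree-path between them). Hence the vertices shared by several bags can be ordered by a fixed, bag-independent rule on bounded cliques, and the remaining parameters that $\Lambda_{\textsc{order-tw}}$ needs for a given bag can be chosen among that bag's private vertices, which are disjoint from all other bags; this produces a globally consistent choice of parameters, so the ordered tree extension is genuinely \MSO-definable and Theorem~\ref{th:lift-oimso} applies.
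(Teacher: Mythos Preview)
Your plan coincides with the paper's: improve the graph via Proposition~\ref{pr:improve}, apply Lemma~\ref{le:main-decomposition} to obtain a clique-separator decomposition of adhesion at most $k+1$, order each atom bag via Corollary~\ref{cor:ord-tw}, and invoke Theorem~\ref{th:lift-oimso}.

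On your ``main obstacle'' the paper's resolution is slightly different from your sketch, and the difference matters. You propose to put the ordering parameters on a bag's \emph{private} vertices (those ``disjoint from all other bags''), but a bag need not have any such vertices; and there is no parameter-free, bag-independent \MSO\ rule that linearly orders an arbitrary bounded clique. The paper instead first guesses a proper $(k+1)$-colouring of $G'$ (available since graphs of tree width $\le k$ have colouring number $\le k+1$); because every adhesion set is a clique, this single global colouring already linearly orders all adhesion sets simultaneously. The bag-ordering parameters are then guessed globally, but for each bag $\beta(t)$ one \emph{ignores} the guessed parameter colours on the upward separator $\sigma(t)$ and replaces them by the lexicographically smallest extension (with respect to the adhesion-colour order on $\sigma(t)$) that makes $\Lambda_{\textsc{order-tw}}$ succeed on $\beta(t)$. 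This works because the sets $\beta(t)\setminus\sigma(t)$ are pairwise disjoint as $t$ ranges over the tree, so a globally consistent choice of parameters exists even though individual vertices may belong to many bags.
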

\begin{proof} 
  We show that $\CC$ admits $\MSO$-definable (hence $\CMSO$-de\-fi\-na\-ble)
  ordered tree decompositions of bounded adhesion. This proves the theorem by
  applying Theorem~\ref{th:lift-oimso}, the lifting theorem for $\OIMSO$. Let $k$
  be a tree width bound for the graphs from $\CC$. Instead of directly working
  with the structure $A$, we work with its Gaifman graph $G' = G(A)$, which has
  the same tree decompositions and is \MSO-definable in $A$. We start to define
  the improved version $G'$ in $G$ using the \MSO-transduction
  $\Lambda_{\textsc{improve}}$ from Proposition~\ref{pr:improve}. Next, we apply
  the transduction $\Lambda$ of Lemma~\ref{le:main-decomposition} to $G'$, which
  defines a tree extension $G^+$. The bags of the tree decomposition underlying
  the tree extension induce subgraphs that are atoms, and all adhesion sets are
  cliques. Since $G$ and, hence, also $G'$ has tree width $k$ and graphs of tree
  width at most $k$ only contain cliques of size at most $k+1$, this implies a
  bounded adhesion (the adhesion is bounded by $k+1$). In order to obtain an otx,
  we need to add total orderings for each bag. The bags of the tree decomposition
  obtained so far induce atoms and, since $G'$ is an improved graph, these atoms
  are improved, too. That means, we can now use the transduction
  $\Lambda_{\textsc{order-tw}}$ from Corollary~\ref{cor:ord-tw} to obtain a total
  ordering for a given bag. In order to define orderings for all bags at the same
  time, we utilize the decomposition's bounded adhesion in the following
  way. Transduction $\Lambda_{\textsc{order-tw}}$ orders a single bag by using a
  collection of set parameters, which are vertex colorings from which we can
  define the ordering. If we now want to order different neighboring bags at the
  same time, these vertex colorings might interfere in a way that makes it
  impossible to reconstruct an ordering.
  
  We can do the following: as our (improved) graph has tree width at most $k$, it
  has coloring number at most $k+1$, and thus we can first guess a
  proper $(k+1)$-coloring where no two adjacent vertices have the same
  color. In particular, this implies that for
  each adhesion set $S$ that occurs, all elements of $S$ have different
  colors, because they are cliques. This gives us a way to
  simultaneously get a linear order of all adhesion sets by just
  fixing an order on the $(k+1)$ colors. Let us call the
  $(k+1)$-colors we used this way our \emph{adhesion colors}.
   
  Now we guess a collection of colors that we would like to
  use to order the bags at the atom nodes. (The bags at separator nodes
  are just adhesion sets and thus already ordered by the adhesion
  colors.) We globally guess a suitable collection of colors. Let us call them
  \emph{bag colors}. Within each bag $B$
  of the tree, we ignore the colors in the adhesion
  (upward) adhesion set $S$ and instead consider all extensions of the
  coloring of the remaining nodes that lead to a linear order of the
  bag. There is only a bounded number of such extensions, and as the
  adhesion set $S$ is linearly ordered, we can use the
  lexicographically smallest of these extensions to define the order.
\end{proof}

\begin{theorem}
  \label{th:tw-oifo}
  Let $\CC$ be a class of graphs with bounded tree width. Then $\OIFO \subseteq
  \MSO$ on $\CC$.   
\end{theorem}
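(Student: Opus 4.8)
The plan is to derive Theorem~\ref{th:tw-oifo} from exactly the same construction that underlies Theorem~\ref{th:tw-oimso}, only now feeding it into the lifting theorem for $\OIFO$ (Theorem~\ref{th:lift-oifo}) instead of the one for $\OIMSO$. The point is that the proof of Theorem~\ref{th:tw-oimso} actually establishes more than it needs there: it shows that every class $\CC$ of graphs of bounded tree width admits \emph{$\MSO$-definable} ordered tree decompositions of bounded adhesion, and modulo-counting enters only in the concluding appeal to Theorem~\ref{th:lift-oimso}. So for the present statement I would reuse that construction unchanged and then quote the weaker lifting theorem.

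In detail: given a structure $A\in\CC$, I would first replace it by its Gaifman graph $G(A)$, which is $\MSO$-definable in $A$ and has the same tree decompositions; apply $\Lambda_{\textsc{improve}}$ from Proposition~\ref{pr:improve} to pass to the improved graph; apply the transduction of Lemma~\ref{le:main-decomposition} to obtain a tree extension whose bags induce atoms and whose adhesion sets are cliques, hence of size at most $k+1$ where $k$ bounds the tree width; and finally add bag orderings using $\Lambda_{\textsc{order-tw}}$ from Corollary~\ref{cor:ord-tw} for the atom bags, together with the proper $(k+1)$-coloring trick from the proof of Theorem~\ref{th:tw-oimso} to order the adhesion sets and to make the orderings of all bags definable simultaneously. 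Each ingredient is an $\MSO$-transduction, and $\MSO$-transductions compose (Fact~\ref{fa:trans-comp}), so $\CC$ admits $\MSO$-definable ordered tree decompositions of bounded adhesion.

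Applying Theorem~\ref{th:lift-oifo} to this $\CC$ then yields $\OIFO\subseteq\MSO$ on $\CC$, which is the claim. There is essentially no new obstacle here; the only things worth double-checking are that Theorem~\ref{th:lift-oifo} genuinely needs only $\MSO$-definability of the ordered tree decompositions (which is precisely what the construction above provides, in contrast to Theorem~\ref{th:lift-oimso}, which would also tolerate $\CMSO$) and that, as noted in the proof of Theorem~\ref{th:lift-oifo}, the type-composition argument at a-nodes stays inside $\MSO$ because one may invoke $\OIFO = \FO$ on strings~\cite{BenediktS2009} in place of Courcelle's $\CMSO$-collapse on colored sets.
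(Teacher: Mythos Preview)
Your proposal is correct and mirrors the paper's own proof: reuse the construction from Theorem~\ref{th:tw-oimso}, which already yields $\MSO$-definable ordered tree decompositions of bounded adhesion, and then invoke Theorem~\ref{th:lift-oifo} in place of Theorem~\ref{th:lift-oimso}. The additional remarks you make (that modulo-counting only enters via the lifting theorem, and that the a-node composition stays in $\MSO$ via $\OIFO=\FO$ on strings) are exactly the observations the paper uses to justify this substitution.
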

\begin{proof}
  We use the proof of Theorem~\ref{th:tw-oimso}, but apply
  Theorem~\ref{th:lift-oifo}, the lifting theorem for $\OIFO$, instead of
  Theorem ~\ref{th:lift-oimso}, the lifting theorem for $\OIMSO$. 
\end{proof}

\subsection{Defining Orderings in the $K_{3,\ell}$-Minor-Free Case}
\label{sec:ordering-minor}

Like in the previous section, we want to apply Fact~\ref{fa:mso-orderings} to
define total orderings, but this time use it for graphs that are 3-connected and
do not contain $K_{3,\ell}$ as a minor for some $\ell \in \NN$.

\begin{lemma}
  \label{le:minor-separability}
  Let $\CC$ be a class of 3-connected graphs that exclude a
  $K_{3,\ell}$-minor for some $\ell \in \NN$. Then $\CC$ has the bounded
  separability property.  
\end{lemma}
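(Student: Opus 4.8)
The plan is to follow the pattern of the proof of Lemma~\ref{le:tw-separability}, with $3$-connectivity playing the role of the improved-graph condition and triples of vertices playing the role of non-edges. Fix $G\in\CC$ that excludes $K_{3,\ell}$ as a minor, fix $S\subseteq V(G)$, write $s:=|S|$, and let $G_1,\dots,G_n$ be the components of $G\setminus S$; the goal is to bound $n$ by a function of $s$ alone. If $n\le 1$ there is nothing to prove, so assume $n\ge 2$. For each $i\in[n]$ let $S_i\subseteq S$ be the set of vertices of $S$ having a neighbour in $G_i$. Since $n\ge 2$, in $G\setminus S_i$ the set $V(G_i)$ has no edge to the remaining vertices, which are nonempty because they include $V(G_j)$ for any $j\neq i$; hence $S_i$ is a separator of $G$, and $3$-connectivity forces $|S_i|\ge 3$. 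In particular each $S_i$ contains at least one $3$-element subset of $S$.

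The key step is the claim that for every $3$-element set $T=\{a,b,c\}\subseteq S$ there are at most $\ell-1$ indices $i$ with $T\subseteq S_i$. Suppose for contradiction that $T\subseteq S_{i_1},\dots,T\subseteq S_{i_\ell}$ for pairwise distinct $i_1,\dots,i_\ell$. The subgraphs $G_{i_1},\dots,G_{i_\ell}$ are connected and pairwise vertex-disjoint, being distinct components of $G\setminus S$, so we may contract each of them to a single vertex; deleting all vertices other than $a,b,c$ and these $\ell$ contracted vertices then leaves a graph containing $K_{3,\ell}$ as a subgraph, with colour classes $\{a,b,c\}$ and the set of contracted vertices (no edge inside $\{a,b,c\}$ is needed, since $K_{3,\ell}$ is bipartite). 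Thus $K_{3,\ell}$ is a minor of $G$, contradicting the choice of $\CC$.

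Finally I would combine the two observations by a counting argument: pick for each $i\in[n]$ some $3$-element set $T_i\subseteq S_i$, which is possible by the first paragraph. Grouping the indices according to the value of $T_i$ gives
\begin{align*}
  n &= \sum_{T\in\binom{S}{3}} \bigl|\{\, i\in[n] \mid T_i=T \,\}\bigr|\\
    &\le \sum_{T\in\binom{S}{3}} \bigl|\{\, i\in[n] \mid T\subseteq S_i \,\}\bigr|
     \;\le\; \binom{s}{3}(\ell-1),
\end{align*}
so $\CC$ has the bounded separability property, witnessed by $f(s):=\binom{s}{3}(\ell-1)+1$. The only step that requires a little care is the minor construction in the middle paragraph — one should check that the chosen components really are vertex-disjoint and connected, so that simultaneous contraction is legitimate — but this is immediate from their being distinct components of $G\setminus S$; the rest is routine.
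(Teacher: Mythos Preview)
Your proof is correct and follows essentially the same approach as the paper's: use $3$-connectivity to guarantee each component of $G\setminus S$ has at least three neighbours in $S$, then apply a pigeonhole/counting argument over $3$-element subsets of $S$ together with the observation that $\ell$ components sharing a common triple of neighbours yield a $K_{3,\ell}$ minor by contraction. Your write-up is in fact a bit more careful than the paper's (you make the separator argument for $|S_i|\ge 3$ explicit and obtain the slightly sharper bound $\binom{s}{3}(\ell-1)$ rather than $\ell\binom{s}{3}$), but the underlying idea is identical.
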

\begin{proof}
  Let $G$ be a 3-connected graph that does not contain $K_{3,\ell}$ for some
  $\ell \in \NN$ as a minor and $S \subseteq V(G)$ with $k = |S|$. Now let
  $G_1,\dots,G_n$ be the components of $G\setminus S$. If $k \leq 2$, then $n \leq 1$
  since $G$ is 3-connected. If $k \geq 3$, 3-connectedness implies that every
  component is connected to at least 3 vertices in $S$. For the sake of
  contradiction, assume $n \geq \ell {k \choose 3}$. Then there exists a subset $T$
  of $S$ with $T = 3$ that is connected to at least $\ell$
  components. By deleting everything except $T$ and these components as well as
  contracting the components we produce the minor $K_{3,\ell}$. Since this is not
  possible, we have $n < \ell {k \choose 3}$ and hence bounded separability.   
\end{proof}

\begin{corollary}
  \label{co:minor-ordering}
  Let $\CC$ be a class of 3-connected graphs that exclude a $K_{3,\ell}$-minor
  for some $\ell \in \NN$. There is an \MSO-transduction
  $\Lambda_{\textsc{order-minor}}$ that defines a total ordering for every $G \in
  \CC$.   
\end{corollary}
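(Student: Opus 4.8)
The plan is to obtain this corollary from Lemma~\ref{le:minor-separability} and the corollary to Fact~\ref{fa:mso-orderings}, in exactly the way that Corollary~\ref{cor:ord-tw} was obtained from Lemma~\ref{le:tw-separability}. Thus the proof will consist of two short observations feeding into an application of an already-established transduction, and there is no real content beyond matching the hypotheses of the cited results.

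First I would invoke Lemma~\ref{le:minor-separability}: any class $\CC$ of $3$-connected graphs that excludes $K_{3,\ell}$ as a minor for some $\ell \in \NN$ has the bounded separability property. Second, I would argue that such a $\CC$ also excludes a \emph{balanced} complete bipartite graph as a minor. Put $m := \max\{3,\ell\}$. Then $K_{3,\ell}$ is a subgraph of $K_{m,m}$, hence a minor of $K_{m,m}$; since the minor relation is transitive, any graph containing $K_{m,m}$ as a minor also contains $K_{3,\ell}$ as a minor. Contrapositively, every $G \in \CC$, being $K_{3,\ell}$-minor-free, is $K_{m,m}$-minor-free. Hence $\CC$ is a class with bounded separability that excludes $K_{m,m}$ as a minor, so the corollary following Fact~\ref{fa:mso-orderings} applies and yields an \MSO-transduction $\Lambda_{\textsc{order-sep}}$ that defines total orderings on $\CC$. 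Setting $\Lambda_{\textsc{order-minor}} := \Lambda_{\textsc{order-sep}}$ finishes the proof.

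I do not expect a genuine obstacle here: the argument is essentially a two-line reduction, and the only mild subtlety is keeping the direction of the minor-exclusion implication straight — excluding the larger graph $K_{m,m}$ is the \emph{weaker} hypothesis, and it follows from (rather than being equivalent to) excluding $K_{3,\ell}$. Note in passing that this corollary is exactly what is needed later to order the $3$-connected torsos arising from the decomposition of Fact~\ref{pro:3connected-mso-definable}, which, being topological subgraphs of the input graph, inherit $K_{3,\ell}$-minor-freeness; and it covers the planar case since planar graphs are $K_{3,3}$-minor-free.
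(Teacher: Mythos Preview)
Your proposal is correct and matches the paper's approach: the corollary is stated without proof in the paper, being the immediate combination of Lemma~\ref{le:minor-separability} with (the corollary to) Fact~\ref{fa:mso-orderings}, exactly parallel to how Corollary~\ref{cor:ord-tw} is derived. Your explicit verification that excluding $K_{3,\ell}$ implies excluding $K_{m,m}$ for $m=\max\{3,\ell\}$ is the only detail the paper leaves implicit, and you have handled it correctly.
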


Combining the decompositions from the previous section with the
ordering from Corollary~\ref{co:minor-ordering}, we can prove the following. 

\begin{theorem}
  \label{th:minor-oimso}
  Let $\CC$ be a class of graphs that exclude $K_{3,\ell}$ as a minor for some
  $\ell \in \NN$. Then $\OIMSO = \CMSO$ on $\CC$.
\end{theorem}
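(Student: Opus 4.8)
The strategy is to verify the hypothesis of Theorem~\ref{th:lift-oimso}, the lifting theorem for $\OIMSO$: I would show that every class $\CC$ of graphs that excludes $K_{3,\ell}$ as a minor admits $\MSO$-definable ordered tree decompositions of bounded adhesion, and since $\MSO$-definability implies $\CMSO$-definability, the conclusion $\OIMSO=\CMSO$ on $\CC$ is then immediate from Theorem~\ref{th:lift-oimso}.

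First I would apply Courcelle's transduction $\Lambda_{\textnormal{3-comp}}$ of Fact~\ref{pro:3connected-mso-definable} to obtain, for each $G\in\CC$, a tree extension of $G$ whose torsos are $3$-connected, cycles, single edges, or single vertices, and all of whose separators have size at most $2$; in particular the adhesion of the underlying tree decomposition is at most $2$, which is the required bound. Next I would invoke the observation following Fact~\ref{pro:3connected-mso-definable} that each torso is a topological subgraph of $G$, together with the fact that the class of graphs excluding $K_{3,\ell}$ as a minor is closed under taking topological subgraphs; hence every torso of the decomposition again excludes $K_{3,\ell}$ as a minor.

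It then remains to equip each bag with a linear order. For a $3$-connected torso this is precisely the situation of Corollary~\ref{co:minor-ordering}, which supplies an $\MSO$-transduction $\Lambda_{\textsc{order-minor}}$ defining, from suitable set parameters, a total order of that torso. For the degenerate torsos the orderings are elementary: a single vertex requires nothing, a single edge is ordered by guessing which endpoint is smaller, and a cycle is ordered by guessing a start vertex together with one of its two incident edges and then reading off the positions of the vertices along the resulting Hamiltonian path, all of which is $\MSO$-definable. Composing these with $\Lambda_{\textnormal{3-comp}}$ via Fact~\ref{fa:trans-comp} would give, for every bag in isolation, an $\MSO$-definable linear order.

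The step I expect to be the main obstacle is to order \emph{all} bags at once by a single transduction, since the set parameters that order different neighboring torsos may be mutually incompatible. I would handle this exactly as in the proof of Theorem~\ref{th:tw-oimso}, exploiting that the adhesion is bounded by $2$: guess a global coloring of the vertices so that every adhesion set receives pairwise distinct colors; fixing an order on the colors then orders all adhesion sets coherently, and inside each bag one takes the lexicographically smallest extension of this adhesion-set order to a full linear order of the bag that is consistent with the locally recoverable parameters ordering that torso. The outcome is an $\MSO[\{E\},\tau^\star]$-transduction that defines otxs of adhesion at most $2$ on $\CC$, so Theorem~\ref{th:lift-oimso} applies and yields $\OIMSO=\CMSO$ on $\CC$.
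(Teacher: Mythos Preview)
Your proposal is correct and follows essentially the same route as the paper's own proof: apply Courcelle's transduction into $3$-connected components (Fact~\ref{pro:3connected-mso-definable}) to get adhesion at most~$2$, order the torsos via Corollary~\ref{co:minor-ordering}, globalize the bag orders with the bounded-coloring trick from the proof of Theorem~\ref{th:tw-oimso}, and invoke the lifting theorem. You are in fact slightly more explicit than the paper in two places---spelling out why the torsos still exclude $K_{3,\ell}$ (the topological-subgraph remark after Fact~\ref{pro:3connected-mso-definable}) and handling the degenerate torsos (cycles, single edges, single vertices) separately---but these are elaborations of the same argument rather than a different approach.
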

\begin{proof}
  The proof is similar to the proof of Theorem~\ref{th:tw-oimso}, except that we
  need to use different transductions to define the tree decomposition and the
  ordering for the bags. Everything else remains the same since we still work
  with tree decompositions that have a bounded adhesion (in this case, the
  maximum adhesion is 2) and apply the lifting theorem for \OIMSO. For
  constructing a tree decomposition of bounded adhesion, we use
  Fact~\ref{pro:3connected-mso-definable}. For constructing the bag orderings, we
  follow the arguments from Theorem~\ref{th:tw-oimso}, but apply
  Corollary~\ref{co:minor-ordering} to the torsos of the decomposition combined
  with the observation that graphs that exclude a minor can be properly colored
  with a bounded number of colors. 
\end{proof}

\begin{theorem}
  \label{th:minor-oifo}
  Let $\CC$ be a class of graphs that exclude $K_{3,\ell}$ as a minor for some
  $\ell \in \NN$. Then $\OIFO \subseteq \MSO$ on $\CC$.
\end{theorem}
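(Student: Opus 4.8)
The plan is to mimic the proof of Theorem~\ref{th:minor-oimso} verbatim and simply invoke the lifting theorem for $\OIFO$, Theorem~\ref{th:lift-oifo}, in place of the lifting theorem for $\OIMSO$, Theorem~\ref{th:lift-oimso}, at the final step. The key observation legitimating this substitution is that the proof of Theorem~\ref{th:minor-oimso} never uses modulo-counting: the tree decomposition into $3$-connected components is obtained from the \MSO-transduction $\Lambda_{\textnormal{3-comp}}$ of Fact~\ref{pro:3connected-mso-definable}, and the bag orderings are obtained from the \MSO-transduction $\Lambda_{\textsc{order-minor}}$ of Corollary~\ref{co:minor-ordering} together with the bounded-colouring-number argument of Theorem~\ref{th:tw-oimso}. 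Thus that proof in fact shows the stronger statement that every class $\CC$ excluding some $K_{3,\ell}$ as a minor admits ordered tree decompositions of bounded adhesion (adhesion at most~$2$) that are definable in \MSO\ rather than merely in \CMSO, which is exactly the hypothesis of Theorem~\ref{th:lift-oifo}.

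Carrying this out, I would first note that the torsos of the decomposition defined by $\Lambda_{\textnormal{3-comp}}$ are $3$-connected graphs, cycles, single edges, or single vertices, and, being topological subgraphs of the input graph, still exclude $K_{3,\ell}$ as a minor; by Corollary~\ref{co:minor-ordering}, with the degenerate torsos handled directly, their vertex sets carry \MSO-definable total orderings. Then, exactly as in the proof of Theorem~\ref{th:tw-oimso}, I would fix a proper colouring with a bounded number of colours --- available because $K_{3,\ell}$-minor-free graphs have bounded colouring number --- to linearly order all adhesion sets simultaneously, and for each torso pick the lexicographically smallest extension of the guessed bag colours to a linear order of the bag that is compatible with the induced order on the adhesion set; all of this bookkeeping is \MSO-expressible. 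Composing these \MSO-transductions with $\Lambda_{\textnormal{3-comp}}$ (and with the \MSO-definition of the Gaifman graph) yields an \MSO-transduction defining otxs of adhesion at most~$2$ on $\CC$, and Theorem~\ref{th:lift-oifo} then gives $\OIFO \subseteq \MSO$ on $\CC$.

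I do not anticipate a real obstacle, since all of the structural work was already done for Theorem~\ref{th:minor-oimso}. The one point worth double-checking is that nothing in the decomposition or ordering construction secretly requires \CMSO; it does not. The place where the distinction between the $\OIFO$/$\MSO$ and the $\OIMSO$/$\CMSO$ conclusions actually matters lies inside the lifting theorem, in the order-invariant type composition at a-nodes, where an order-invariant formula on coloured sets is translated into \CMSO\ in the $\OIMSO$ case but into \MSO\ in the $\OIFO$ case using $\OIFO = \FO$ on trees~\cite{BenediktS2009} --- but this is already packaged inside Theorem~\ref{th:lift-oifo}, so nothing further is needed here.
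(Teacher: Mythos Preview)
Your proposal is correct and matches the paper's own proof, which simply says to take the proof of Theorem~\ref{th:minor-oimso} and replace the lifting theorem for \OIMSO\ by the lifting theorem for \OIFO. Your additional justification that the decomposition and ordering constructions are genuinely \MSO\ (not \CMSO) is exactly the point that legitimates this substitution, and your remark about where the \MSO/\CMSO\ distinction is actually used (inside the a-node composition, already encapsulated in Theorem~\ref{th:lift-oifo}) is spot on.
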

\begin{proof}
  Similar to the idea in the proof of Theorem~\ref{th:tw-oifo}. We take the
  proof of Theorem~\ref{th:minor-oimso}, but use the lifting theorem for \OIFO\
  instead of the lifting theorem for \OIMSO. 
\end{proof}

\section{Conclusions}
\label{sec:conclusion}

We proved two lifting definability theorems, which show that if a class $\CC$ of
structures admits \MSO-defin\-able ordered tree extensions, then $\OIMSO = \CMSO$
and $\OIFO \subseteq \MSO$ on $\CC$. Using the lifting theorems in conjunction
with definable tree decompositions and definable bag orderings, we were able to
show that $\OIMSO = \CMSO$ and $\OIFO \subseteq \MSO$ hold for every class of
graphs (and structures) of bounded tree width and every class of graphs (and
structures) that exclude $K_{3,\ell}$ for some $\ell \in \NN$ as a minor. The
latter covers planar graphs.

Seeing the wide range of applications of the lifting theorems, it seems
promising to apply or extend them in order to handle every graph class defined
by excluding minors in future works. Moreover, an interesting question is
whether the $\OIFO \subseteq \MSO$ in Theorem~\ref{th:lift-oifo} can be turned
into an equality; possibly by using a logic more restrictive than
\MSO.

\section*{Acknowledgements}

We thank Pascal Schweitzer for the idea of
Lemma~\ref{le:tw-separability}.    

\bibliographystyle{abbrvurl}
\bibliography{main}

\end{document}